\newtheorem{claim}{}[section]
\newtheorem{theorem}[claim]{Theorem}
\newtheorem{lemma}[claim]{Lemma}
\newtheorem{proposition}[claim]{Proposition}
\newtheorem{corollary}[claim]{Corollary}
\theoremstyle{remark}
\renewenvironment{proof}{\noindent{\it Proof. \hskip0pt}}
                      {$\square$\par\medskip}
\begin{document}
\baselineskip 6.0 truemm
\parindent 1.5 true pc

\newcommand\xx{{\text{\sf X}}}
\newcommand\lan{\langle}
\newcommand\ran{\rangle}
\newcommand\tr{{\text{\rm Tr}}\,}
\newcommand\ot{\otimes}
\newcommand\ol{\overline}
\newcommand\join{\vee}
\newcommand\meet{\wedge}
\renewcommand\ker{{\text{\rm Ker}}\,}
\newcommand\image{{\text{\rm Im}}\,}
\newcommand\id{{\text{\rm id}}}
\newcommand\tp{{\text{\rm tp}}}
\newcommand\pr{\prime}
\newcommand\e{\epsilon}
\newcommand\la{\lambda}
\newcommand\inte{{\text{\rm int}}\,}
\newcommand\ttt{{\text{\rm t}}}
\newcommand\spa{{\text{\rm span}}\,}
\newcommand\conv{{\text{\rm conv}}\,}
\newcommand\rank{\ {\text{\rm rank of}}\ }
\newcommand\re{{\text{\rm Re}}\,}
\newcommand\ppt{\mathbb T}
\newcommand\rk{{\text{\rm rank}}\,}
\newcommand\HA{{\mathcal H}_A}
\newcommand\HB{{\mathcal H}_B}
\newcommand\HC{{\mathcal H}_C}
\newcommand\CI{{\mathcal I}}
\newcommand{\bra}[1]{\langle{#1}|}
\newcommand{\ket}[1]{|{#1}\rangle}
\newcommand\cl{\mathcal}
\newcommand\idd{{\text{\rm id}}}
\newcommand\OMAX{{\text{\rm OMAX}}}
\newcommand\OMIN{{\text{\rm OMIN}}}
\newcommand\diag{{\text{\rm Diag}}\,}
\newcommand\calI{{\mathcal I}}
\newcommand\bfi{{\bf i}}
\newcommand\bfj{{\bf j}}
\newcommand\bfk{{\bf k}}
\newcommand\bfl{{\bf l}}
\newcommand\bfp{{\bf p}}
\newcommand\bfq{{\bf q}}
\newcommand\bfzero{{\bf 0}}
\newcommand\bfone{{\bf 1}}
\newcommand\sa{{\rm sa}}
\newcommand\ph{{\rm ph}}
\newcommand\phase{{\rm ph}}
\newcommand\res{{\text{\rm res}}}
\newcommand{\algname}[1]{{\sc #1}}
\newcommand{\Setminus}{\setminus\hskip-0.2truecm\setminus\,}
\newcommand\calv{{\mathcal V}}
\newcommand\calg{{\mathcal G}}
\newcommand\calt{{\mathcal T}}
\newcommand\calvnR{{\mathcal V}_n^{\mathbb R}}
\newcommand\D{{\mathcal D}}
\newcommand\C{{\mathcal C}}
\newcommand\aaa{{\mathcal A}}
\newcommand\bbb{{\mathcal B}}
\newcommand\ccc{{\mathcal C}}
\newcommand\xxxx{\par\bigskip {\color{red}========================================} \bigskip\par}
\newcommand\tefrac{\textstyle\frac}
\newcommand\xxx{\xx^\sigma}
\newcommand{\parallelsum}{\mathbin{\!/\mkern-5mu/\!}}
\newcommand\ext{{\rm Ext}\,}
\newcommand\E{{\mathcal E}}
\newcommand\W{{\mathcal W}}
\newcommand\X{{\mathcal X}}
\newcommand\variable{{\mathcal X}}

\title{On the convex cones arising from classifications of partial entanglement in the three qubit system}

\author{Kyung Hoon Han and Seung-Hyeok Kye}
\address{Kyung Hoon Han, Department of Data Science, The University of Suwon, Gyeonggi-do 445-743, Korea}
\email{kyunghoon.han at gmail.com}
\address{Seung-Hyeok Kye, Department of Mathematics and Institute of Mathematics, Seoul National University, Seoul 151-742, Korea}
\email{kye at snu.ac.kr}
\thanks{Both KHH and SHK were partially supported by NRF-2017R1A2B4006655, Korea}

\subjclass{81P40, 52A20, 15A69}

\keywords{}

\begin{abstract}
In order to classify partial entanglement of multi-partite states, it is
natural to consider the convex hulls, intersections and differences
of basic convex cones obtained from partially separable states with respect
to partitions of systems. In this paper, we consider convex cones
consisting of {\sf X}-shaped three qubit states arising in this way.
The class of {\sf X}-shaped states includes important classes like
Greenberger-Horne-Zeilinger diagonal states. We find all the extreme
rays of those convex cones to exhibit corresponding partially
separable states. We also give characterizations for those cones
which give rise to necessary criteria in terms of diagonal and
anti-diagonal entries for general three qubit states.
\end{abstract}

\maketitle

\section{Introduction}

The notion of entanglement is now considered as an indispensable
resource in the current quantum information theory. In the
multi-partite systems, there are various notions of separability
according to partitions of systems, which give rise to different
kinds of partial entanglement. In the tri-partite system, we
may consider three kinds of bi-partitions $A$-$BC$, $B$-$CA$ and
$C$-$AB$ of systems. In this way, a tri-partite state may be
considered as a bi-partite state with respect to one of the above
bi-partitions. It was shown in \cite{bdmsst} that a three qubit
state may be entangled even though it is separable as a bi-partite
state with respect to any bi-partitions. Therefore, it is natural to
classify partial separability in multi-partite systems, as they were
suggested in the liturature
\cite{{acin},{coffman},{dur-cirac},{dur-cirac-tarrach},{dur-vidal},{seevinck-uffink},{sz2011},{sz2015},{sz2012}}.
We recall that a multi-partite state is (fully) separable if it is a
convex sum of pure product states, and entangled if it is not
separable.

We will work in the real vector space of all three qubit self-adjoint matrices,
and consider the convex cones $\aaa$, $\bbb$ and $\ccc$ consisting of all unnormalized
separable states with respect to the bi-partitions $A$-$BC$, $B$-$CA$ and $C$-$AB$, respectively.
Recall that a subset $C$ of a real vector space is called a {\sl convex cone} when
$C+C\subset C$ and $aC\subset C$ for $a\ge 0$. 
We note that the sum $C_1+C_2$ of two convex cones $C_1$ and $C_2$ is again a convex cone
which coincides with the convex hull of $C_1$ and $C_2$, that is, the smallest convex set containing $C_1$ and $C_2$.

The above mentioned result \cite{bdmsst} tells us that the convex cone
$\aaa\cap\bbb\cap\ccc$ is strictly bigger than the convex cone of all
fully separable states as tri-partite states. The differences $\aaa\setminus (\bbb\cup\ccc)$
and $(\aaa\cap\bbb)\setminus \ccc$ have been considered in
\cite{{dur-cirac},{dur-cirac-tarrach}}, together with similar sets
obtained by permuting $\aaa$, $\bbb$ and $\ccc$. On the other
hand, the convex hull $\aaa+\bbb+\ccc$
and the difference $(\aaa+\bbb+\ccc)\setminus
(\aaa\cup\bbb\cup\ccc)$ have been also considered in \cite{acin} and
\cite{seevinck-uffink}, respectively.
More recently, all the possible classes
\begin{equation}\label{partition}
[C_1\cap\cdots \cap C_k]\setminus [C_{k+1}\cup\cdots \cup C_\ell]
\end{equation}
have been considered in \cite{sz2012}, where $C_i$ is one of the following convex cones
\begin{equation}\label{class}
\aaa,\quad \bbb,\quad \ccc,\quad \aaa+\bbb,\quad \bbb+\ccc,\quad \ccc+\aaa,\quad \aaa+\bbb+\ccc.
\end{equation}
Nontrivial classes of three qubit states obtained by (\ref{partition}) are known to be
nonempty only recently \cite{han_kye_bisep_exam}.

The main purposes of this note are twofold: Exhibiting three qubit
states in the above classes in (\ref{class}) and giving criteria for
states to be members of the cones. We will do these for so called
{\sf X}-{\sl shaped} three qubit states, whose entries are zero, by
definition, except for diagonal and anti-diagonal entries. Many
important states like GHZ diagonal states are of this form.
An {\sf X}-shaped state is also called as an \xx-{\sl state} for brevity.
In order to exhibit all the three qubit {\sf X}-states in a given cone, we
find all the extreme rays of the convex cone.
Recall that an element $x$ of a convex cone $C$ generates an extreme ray whenever
$x=x_1+x_2$ with $x_i\in C$ implies that $x_i$ is a nonnegative multiple of $x$ for $i=1,2$.
By the abuse of the terminology, we say that $x$ itself is an extreme ray in this case.
Then all the elements of convex cones in (\ref{class}) are the nonnegative sums of extreme rays.
Those extreme rays also play essential roles to find criteria for
the dual cone. Those criteria will be expressed
in terms of algebraic inequalities with the entries of {\sf X}-states, which
give rise to necessary criteria for general three qubit states to be a member of a given cone,
in terms of diagonal and anti-diagonal entries.
As for the corresponding results for full separability, we refer to the
papers \cite{{han_kye_GHZ},{han_kye_phase},{chen_han_kye},{ha-han-kye}}.

Our main tool is the duality between closed convex cones in real
vector spaces, and so, we will also consider the dual cones, whose
members play roles of witnesses, of the cones in (\ref{class}).
Since the intersection and the convex hull are dual operations, we
also naturally consider the intersections as well as convex hulls
through the discussion. Therefore, we will consider the convex cones
appearing in the following diagram
\begin{equation}\label{diagram}
\xymatrix{
& \aaa + \bbb + \ccc & \\
\aaa + \bbb \ar[ur] & \ccc + \aaa \ar[u] & \bbb + \ccc \ar[ul] \\
\aaa \ar[u] \ar[ur] & \bbb \ar[ul] \ar[ur] & \ccc \ar[ul] \ar[u] \\
\aaa \cap \bbb \ar[u] \ar[ur] & \ccc \cap \aaa \ar[ul] \ar[ur] & \bbb \cap \ccc \ar[ul] \ar[u] \\
& \aaa \cap \bbb \cap \ccc \ar[ul] \ar[u] \ar[ur] &
}
\end{equation}
which shows us partial order relations by inclusion among convex cones we are
considering. The dual cones will be also discussed.

After we explain briefly the duality in the next section, we will consider the convex cones $\aaa$,
$\bbb$ and $\ccc$ in Section \ref{sec-basic} together with their dual cones.
We will also consider the convex cones $\aaa+\bbb+\ccc$ and $\aaa\cap\bbb\cap\ccc$
in Section \ref{zec-bisep}. Conditions for those convex cones with {\sf X}-shaped matrices are already scattered in the literature
\cite{{guhne10},{gao},{Rafsanjani},{han_kye_tri},{han_kye_optimal}}.
Here, we give an alternative proof in the context of duality, together with exhibition of all
extreme rays in the convex cone of \xx-shaped matrices in each of them.
In Section \ref{sec-two}, we deal with convex hulls and intersections of two convex cones
like $\aaa+\bbb$, $\aaa\cap\bbb$ and their duals.
We will summarize our results in the final section.

The authors are grateful to the referee for careful reading and
valuable suggestions to improve presentation.

\section{duality}\label{sec-duaity}

Let $C$ be a subset of a finite dimensional real vector space $V$
with a non-degenerating bilinear pairing $\lan\ , \ \ran$, that is,
$\lan x,y\ran=0$ for every $y\in V$ implies $x=0$. We define the
{\sl dual cone} $C^\circ$ by
$$
C^\circ=\{ x\in V: \lan x,y\ran\ge 0\ {\text{\rm for every}}\ y\in C\}.
$$
Then $C^\circ$ is a closed convex cone of $V$ in general, and $C^{\circ\circ}$ is the smallest closed convex cone
containing $C$ by the Hahn-Banach type separation theorem.
If $C$ itself is a closed convex cone then we have $C=C^{\circ\circ}$, and so we see that the following are equivalent:
\begin{itemize}
\item
$x\notin C$;
\item
there exists $y\in C^\circ$ such that $\lan x,y\ran <0$.
\end{itemize}
For example, if ${\mathcal S}$ is the closed convex cone consisting of unnormalized fully separable states in the real vector space
$V$ of self-adjoint matrices in $M_2\otimes M_2 \otimes M_{2}$, then we see by this principle that
$\varrho$ is non-separable, that is, entangled if and only if there exists $W\in {\mathcal S}^\circ$ such that
$\lan W,\varrho\ran <0$. Such a $W$ must be non-positive, and called an entanglement witness \cite{terhal}.
Here, the bilinear pairing is given by $\lan a,b\ran=\tr(ba^\ttt)$ for matrices $a$ and $b$, as usual.
On the other hand, the closed convex cone ${\mathcal P}$ of all positive matrices is self-dual, that is, ${\mathcal P}^\circ={\mathcal P}$,
by the Hadamard theorem.

We note that the two operations, convex hull and intersection, are
dual to each other. In other words, the following identities
$$
(C_1+C_2)^\circ= C_1^\circ\cap C_2^\circ,\qquad
(C_1\cap C_2)^\circ = C_1^\circ + C_2^\circ
$$
hold for closed convex cones $C_1$ and $C_2$. The first identity follows from the definition. See \cite{eom-kye}.
The second one follows from the first one and the fact that the convex hull of two closed convex cones is closed.
This is an easy consequence of Carath\'eodory theorem which tells us that the convex hull of a compact set is compact.
We note that a convex cone $C$ spans the whole space $V$ if and only if $C+(-C)=V$. If we apply the above duality to the
four closed convex cones $C$, $-C$, $\{0\}$ and $V$, then we see that
the following two properties
\begin{enumerate}
\item[(${\rm C}_1$)]
$C$ spans the whole space;
\item[(${\rm C}_2$)]
$C\cap (-C)=\{0\}$
\end{enumerate}
are dual to each other. In other words, a closed convex cone $C$
satisfies (${\rm C}_1$) if and only if $C^\circ$ satisfies (${\rm C}_2$).
Recall that the real vector space $(M_{n_1}\ot M_{n_2})_{\rm sa}$
of all self-adjoint matrices in the tensor product $M_{n_1}\ot M_{n_2}$ coincides
with the tensor product $(M_{n_1})_{\rm sa}\ot (M_{n_2})_{\rm sa}$ of the self-adjoint parts.
See \cite[Section 7]{ha-han-kye}. This is also true for multi-tensor products by induction.
Therefore, the convex cone ${\mathcal S}=M_2^+\ot M_2^+\ot M_2^+$ spans the whole space
$V=(M_2\ot M_2\ot M_2)_{\rm sa}$. Since ${\mathcal P}=(M_2\ot M_2\ot M_2)^+$ satisfies
${\mathcal P}\cap (-{\mathcal P})=\{0\}$, we see that all the convex cones $\variable$
in the diagram (\ref{diagram}) also satisfy both conditions, by the relation
${\mathcal S}\subset\variable\subset{\mathcal P}$.
We list up the dual cones of the cones in (\ref{diagram}) as follows:

\begin{equation}\label{diagram1}
\xymatrix{
& \aaa^\circ \cap \bbb^\circ \cap \ccc^\circ \ar[dl] \ar[d] \ar[dr]& \\
\aaa^\circ \cap \bbb^\circ \ar[d] \ar[dr] & \ccc^\circ \cap \aaa^\circ \ar[dl] \ar[dr] & \bbb^\circ \cap \ccc^\circ \ar[dl] \ar[d] \\
\aaa^\circ \ar[d] \ar[dr] & \bbb^\circ \ar[dl] \ar[dr] & \ccc^\circ \ar[dl] \ar[d] \\
\aaa^\circ + \bbb^\circ \ar[dr] & \ccc^\circ + \aaa^\circ \ar[d] & \bbb^\circ + \ccc^\circ \ar[dl] \\
& \aaa^\circ + \bbb^\circ + \ccc^\circ&
}
\end{equation}

We note that all the convex cones in the diagram (\ref{diagram1}) also satisfy
both conditions (${\rm C}_1$) and (${\rm C}_2$), as dual cones of the convex cones satisfying the conditions.
An important consequence of (${\rm C}_2$) is that every
element of the convex cone is a nonnegative sum of extreme rays. See
\cite[Theorem 18.5]{rock}.

The duality is also very useful to find all the candidates for extreme rays.
We say that a subset $S$ of a closed convex cone $C$ is a {\sl generating set} for $C$ if every element of $C$ is the limit
of nonnegative sums of finitely many elements in $S$. This happens
if and only if $S^{\circ\circ}=C$ if and only if
$S^\circ=C^\circ$. In other words, we have to show that the following two statements
\begin{itemize}
\item
$y\in C^\circ$, that is, $\lan x,y\ran\ge 0$ for each $x\in C$;
\item
$\lan x,y\ran\ge 0$ for each $x\in S$
\end{itemize}
are equivalent to each other, in order to show that $S$ is a generating set for $C$.
This equivalence, in turn, gives rise to a criterion for the convex cone $C^\circ$
in terms of algebraic inequalities arising from members in the generating set $S$.
This principle will be the main tool of our discussion throughout this paper.

We note that generating sets of a convex cone are not determined uniquely.
For example, the convex cone $C$ itself is also a generating set for $C$.
Furthermore, a generating set need not contain all the extreme rays.
If a generating set $S$ for $C$ is closed, then its convex hull is also closed by
 Carath\'eodory theorem, and so every element of $C$ is the sum of finitely many elements in $S$. Therefore, we conclude that
 a {\sl closed} generating set for $C$ contains all the  extreme rays of $C$. In this way, we are looking for the set
 $\ext(C)$ of all extreme rays of the convex cone $C$.
We summarize as follows:

\begin{proposition}\label{tool}
For a subset $S$ of a closed convex cone $C$ in a finite dimensional
real vector space $V$, the following are equivalent:
\begin{enumerate}
\item[(i)]
$S$ is a generating set for $C$;
\item[(ii)]
if $y\in V$ and $\lan x,y\ran\ge 0$ for each $x\in S$, then $y\in C^\circ$.
\end{enumerate}
If $S$ is a closed generating set for $C$, then we have $\ext(C)\subset S$.
\end{proposition}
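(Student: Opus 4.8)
The plan is to reduce both parts to the polar calculus already assembled before the statement, the key observation being that condition (ii) is nothing other than the inclusion $S^\circ\subset C^\circ$. Indeed, the hypothesis of (ii), that $\lan x,y\ran\ge 0$ for every $x\in S$, says precisely that $y\in S^\circ$, so (ii) asserts exactly that $S^\circ\subset C^\circ$. Alongside this I would record the elementary facts that the polar reverses inclusions, so that $S\subset C$ automatically gives $C^\circ\subset S^\circ$, that $C^{\circ\circ}=C$ because $C$ is a closed convex cone, and the standard identity $S^{\circ\circ\circ}=S^\circ$ coming from the fact that $S^\circ$ is itself a closed convex cone.

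For the equivalence (i) $\Leftrightarrow$ (ii) I would invoke the characterization recalled just above the proposition, namely that $S$ is a generating set for $C$ precisely when $S^{\circ\circ}=C$, equivalently $S^\circ=C^\circ$. If one wishes to see this directly: $S$ being generating means $C\subset\overline{\mathrm{cone}(S)}=S^{\circ\circ}$, which combined with the automatic inclusion $S^{\circ\circ}\subset C$ gives $S^{\circ\circ}=C$, and applying the polar once more yields $S^\circ=C^\circ$. Granting this, (i) gives $S^\circ=C^\circ$ and in particular the inclusion required by (ii). Conversely, if (ii) holds then $S^\circ\subset C^\circ$, and combining with the automatic $C^\circ\subset S^\circ$ forces $S^\circ=C^\circ$, whence $S$ is generating. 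Thus each implication is a one-line consequence of the polar dictionary once (ii) is read as $S^\circ\subset C^\circ$.

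For the final assertion, suppose $S$ is closed and generating. Here I would use the upgrade explained in the text: since $S$ is closed, the convex cone it generates is closed by the Carath\'eodory theorem, so every element of $C$ is an honest finite nonnegative sum $x=\sum_{i=1}^k\lambda_i s_i$ with $\lambda_i>0$ and $s_i\in S$, rather than merely a limit of such sums. Now let $x$ generate an extreme ray of $C$ and write it in this form. Grouping the decomposition as $x=(\lambda_i s_i)+\sum_{j\ne i}\lambda_j s_j$, both summands lie in $C$, so the defining property of an extreme ray forces $\lambda_i s_i$ to be a nonnegative multiple of $x$ for each $i$. Since $x\ne 0$, at least one of these multiples is strictly positive, and for that index $s_i$ is a positive multiple of $x$; as $s_i\in S$, the extreme ray through $x$ meets $S$, which is the assertion $\ext(C)\subset S$.

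I expect the genuine obstacle to be the closedness step in the last paragraph, that is, passing from the statement that $x$ is a limit of finite nonnegative sums of elements of $S$ to the statement that $x$ is an actual such finite sum. This is where the hypothesis on $S$ must be used to guarantee that the cone generated by $S$ is itself closed, which is the content of the Carath\'eodory step recalled above, where the convex hull of a compact set is compact; it is this closedness, and not mere membership in $\overline{\mathrm{cone}(S)}$, that the extremality argument needs, so the compactness underlying the Carath\'eodory application is the point to handle with care. By contrast, the equivalence of (i) and (ii) is purely formal once (ii) is recognized as the single inclusion $S^\circ\subset C^\circ$.
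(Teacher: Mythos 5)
Your argument is correct and takes essentially the same route as the paper: the equivalence of (i) and (ii) is exactly the paper's preceding observation that $S$ is generating if and only if $S^{\circ\circ}=C$ if and only if $S^\circ=C^\circ$, with (ii) read as the inclusion $S^\circ\subset C^\circ$, and the final claim $\ext(C)\subset S$ is the paper's Carath\'eodory argument upgrading limits of conic combinations to honest finite sums, after which extremality forces some summand $\lambda_i s_i$ onto the ray of $x$. The closedness issue you flag at the end is handled with the same brevity in the paper itself (which likewise invokes Carath\'eodory, via compactness of the convex hull of a compact set), so your proof matches the paper's in both substance and level of rigor.
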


In this paper, we will concentrate on the three qubit system, and so we will work
in the real vector space $V$ of all $8\times 8$ self-adjoint matrices. The space $V$ has an important
subspace, denoted by $\xx$, consisting of all {\sf X}-shaped matrices whose entries are zero except for
diagonal and anti-diagonal entries.
In the three qubit case, an {\sf X}-shaped self-adjoint
matrix is of the form
$$
\xx(a,b,z)= \left(
\begin{matrix}
a_1 &&&&&&& z_1\\
& a_2 &&&&& z_2 & \\
&& a_3 &&& z_3 &&\\
&&& a_4&z_4 &&&\\
&&& \bar z_4& b_4&&&\\
&& \bar z_3 &&& b_3 &&\\
& \bar z_2 &&&&& b_2 &\\
\bar z_1 &&&&&&& b_1
\end{matrix}
\right),
$$
for $a,b\in\mathbb R^4$ and $z\in\mathbb C^4$, where $\mathbb C^2\ot\mathbb C^2\ot \mathbb C^2$ is identified with
the space $\mathbb C^8$ using the lexicographic order of indices.
Many important multi-qubit states arise in this form.
For example, GHZ diagonal states \cite{GHZ} are in this form, and an \xx-state $\xx(a,b,z)$
is a GHZ diagonal if and only if $a=b$ and $z\in\mathbb R^4$.

Note that $V$ and $\xx$ are of $64$ and $16$-dimensional spaces, respectively.
For a given matrix $\varrho\in V$, we denote by $\varrho_\xx$ the {\sf X}-part of $\varrho$.
The map $\varrho\mapsto \varrho_\xx$ from $V$ onto $\xx$ has the following important property.

\begin{proposition}\label{xxx-part}
For every convex cone $\variable$ in the diagram {\rm (\ref{diagram})}, we have the following:
\begin{enumerate}
\item[(i)]
if $\varrho\in\variable$, then $\varrho_{\sf X}\in\variable$;
\item[(ii)]
if $W\in\variable^\circ$, then $W_{\sf X}\in\variable^\circ$.
\end{enumerate}
\end{proposition}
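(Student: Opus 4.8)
The plan is to realize the $\xx$-part map $\varrho\mapsto\varrho_\xx$ as an averaging over a finite group of \emph{local} unitaries, so that both statements reduce to the local-unitary invariance of the cones together with a self-duality of that averaging map. Writing basis vectors as $\ket{ijk}$ with $i,j,k\in\{0,1\}$ in lexicographic order, the $\xx$-part keeps exactly the entries $\lan ijk|\varrho|i'j'k'\ran$ for which $(i',j',k')$ equals $(i,j,k)$ (diagonal) or its bit-flip $(\bar i,\bar j,\bar k)$ (anti-diagonal); equivalently, the three parities $i+i'$, $j+j'$, $k+k'$ agree mod $2$. With $\sigma_z=\diag(1,-1)$, I would set
$$
G=\{\, \id\ot\id\ot\id,\ \sigma_z\ot\sigma_z\ot\id,\ \sigma_z\ot\id\ot\sigma_z,\ \id\ot\sigma_z\ot\sigma_z\,\},
$$
the even-weight subgroup of $\{\sigma_z^a\ot\sigma_z^b\ot\sigma_z^c:a,b,c\in\{0,1\}\}$. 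Conjugation by $\sigma_z^a\ot\sigma_z^b\ot\sigma_z^c$ multiplies the $(ijk,i'j'k')$-entry by $(-1)^{a(i+i')+b(j+j')+c(k+k')}$, so a one-line character computation on $\mathbb F_2^3$ shows that
$$
\Phi(\varrho):=\tefrac14\sum_{U\in G}U\varrho U^*
$$
multiplies that entry by $\frac14\big(1+(-1)^{(i+i')+(j+j')}+(-1)^{(i+i')+(k+k')}+(-1)^{(j+j')+(k+k')}\big)$, which is $1$ precisely when the three parities coincide and $0$ otherwise. Hence $\Phi(\varrho)=\varrho_\xx$.

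The point to stress, and the one genuinely delicate step, is the choice of $G$: averaging over the \emph{full} group $\{\sigma_z^a\ot\sigma_z^b\ot\sigma_z^c\}$ would over-project and annihilate the anti-diagonal as well, leaving only the diagonal. Retaining only the even-weight elements is exactly right because the orthogonal complement of that subgroup in $\mathbb F_2^3$ is $\{000,111\}$, which is precisely the set of parity patterns $(i+i',\,j+j',\,k+k')$ carried by the diagonal and anti-diagonal entries. Once $\Phi$ is exhibited in this form, the rest is formal.

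For (i), note each $U\in G$ is a product unitary $U_A\ot U_B\ot U_C$, and a product unitary preserves separability with respect to every bipartition: if $\varrho=\sum_i\la_i\,\rho_i^A\ot\rho_i^{BC}$, then $U\varrho U^*=\sum_i\la_i\,(U_A\rho_i^AU_A^*)\ot\big((U_B\ot U_C)\rho_i^{BC}(U_B\ot U_C)^*\big)$ is again of this form, so $\aaa$ is invariant under each $U\in G$, and likewise $\bbb$ and $\ccc$. Since the operations $+$ and $\cap$ preserve this invariance, every cone $\variable$ in the diagram {\rm (\ref{diagram})} is invariant under each $U\in G$. Thus for $\varrho\in\variable$ each summand $U\varrho U^*$ lies in $\variable$, and $\varrho_\xx=\Phi(\varrho)$, being a nonnegative combination of them, lies in the convex cone $\variable$.

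For (ii), I would use that $\Phi$ is self-adjoint for the pairing $\lan a,b\ran=\tr(ba^\ttt)$, which is symmetric since $\tr(ab^\ttt)=\tr(ba^\ttt)$. Each $U\in G$ is a real symmetric involution, so $U^\ttt=U=U^*$ and $U^2=\id$; cyclicity of the trace then gives $\lan U\varrho U^*,W\ran=\lan\varrho,U^*WU\ran$ for all $\varrho,W$, and summing over $G$ yields $\lan\Phi(W),\varrho\ran=\lan W,\Phi(\varrho)\ran$. Consequently, for $W\in\variable^\circ$ and any $\varrho\in\variable$,
$$
\lan W_\xx,\varrho\ran=\lan\Phi(W),\varrho\ran=\lan W,\Phi(\varrho)\ran=\lan W,\varrho_\xx\ran\ge0,
$$
the final inequality holding because $\varrho_\xx\in\variable$ by (i) while $W\in\variable^\circ$. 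This gives $W_\xx\in\variable^\circ$, completing the argument.
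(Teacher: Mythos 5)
Your proof is correct, and at its core it is the same averaging as the paper's, in group-theoretic dress: applying your four unitaries $U\in G$ to a product vector $|x\ran\ot|y\ran\in\mathbb C^2\ot\mathbb C^4$ produces exactly the four sign-flipped product vectors that the paper averages (for instance $\sigma_z\ot\sigma_z\ot\id$ yields $(x_1,-x_2)^\ttt\ot(y_1,y_2,-y_3,-y_4)^\ttt$), so your twirl $\Phi$ coincides with the paper's average on pure product states and hence everywhere. The packaging differs in two genuinely useful ways. For (i), the paper reduces to vector states of product vectors, using that $\aaa$ is generated by such states, and checks the recovery of the {\sf X}-part there; you instead observe that each $U\in G$ is a local unitary, that $\aaa$, $\bbb$, $\ccc$ are invariant under local unitary conjugation, and that $+$ and $\cap$ preserve this invariance, which treats all cones of the diagram (\ref{diagram}) uniformly and needs no decomposition into pure states; your character computation over the even-weight subgroup of $\mathbb F_2^3$ (correctly chosen, as you note, so that its orthogonal complement $\{000,111\}$ matches the diagonal and anti-diagonal parity patterns) then certifies $\Phi(\varrho)=\varrho_\xx$ in place of the paper's entrywise inspection. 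For (ii), the paper simply invokes the identity $\lan W_\xx,\varrho\ran=\lan W,\varrho_\xx\ran$, which is immediate because the pairing $\lan a,b\ran=\tr(ba^\ttt)=\sum_{p,q}a_{pq}b_{pq}$ is entrywise and the {\sf X}-pattern is transpose-symmetric; you rederive this identity as self-adjointness of the twirl, which is slightly longer but makes the mechanism transparent. In short: same mechanism, but your invariance-plus-convexity route is marginally more robust, since it applies verbatim to any closed convex cone invariant under local unitaries, while the paper's version is more elementary and explicit.
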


\begin{proof}
It suffices to prove for the convex cone $\aaa$.
For the statement (i), it also suffices to show for a vector state $\varrho$
associated with a product vector $|x\ran\ot |y\ran\in \mathbb C^2\ot\mathbb C^4$,
where $|x\ran=(x_1,x_2)^\ttt$ and $|y\ran=(y_1,y_2,y_3,y_4)^\ttt$. We consider the following product vectors
$$
\begin{aligned}
(+x_1,+x_2)^\ttt &\ot (+y_1, +y_2, +y_3, +y_4)^\ttt,\\
(+x_1,+x_2)^\ttt &\ot (+y_1, -y_2, -y_3, +y_4)^\ttt,\\
(+x_1,-x_2)^\ttt &\ot (+y_1, -y_2, +y_3, -y_4)^\ttt,\\
(+x_1,-x_2)^\ttt &\ot (+y_1, +y_2, -y_3, -y_4)^\ttt.
\end{aligned}
$$
We take the average of four vector states associated with these four product vectors, to recover
the {\sf X}-part of $\varrho$. This proves (i). For the statement (ii) with $\variable=\aaa$, take $W\in \aaa^\circ$. For every
$\varrho\in\aaa$, we see that $\lan W_{\sf X},\varrho\ran =\lan W,\varrho_{\sf X}\ran$ is nonnegative,
because $W\in\aaa^\circ$ and $\varrho_{\sf X}\in\aaa$ by (i). Therefore, we have $W_{\sf X}\in\aaa^\circ$.
\end{proof}

Corresponding results for full separability are found in Section 3 of \cite{han_kye_GHZ}.
See also Proposition 4.1 of \cite{ha-han-kye} for multi-qubit cases.
If $\varrho$ is an {\sf X}-state, then $\lan W,\varrho\ran=\lan W_\xx,\varrho\ran$, and so we have the following:

\begin{corollary}\label{xxx-coro}
For a convex cone $\variable$ in the diagram {\rm (\ref{diagram})}, we have the following:
\begin{enumerate}
\item[(i)]
for a three qubit {\sf X}-state $\varrho$, we have $\varrho\in \variable$ if and only if $\lan W,\varrho\ran \ge 0$
for every {\sf X}-shaped $W\in \variable^\circ$;
\item[(ii)]
for a three qubit {\sf X}-shaped $W$, we have $W\in \variable^\circ$ if and only if $\lan W,\varrho\ran\ge 0$
for every {\sf X}-state $\varrho\in \variable$.
\end{enumerate}
\end{corollary}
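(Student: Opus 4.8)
The plan is to deduce both biconditionals from Proposition \ref{xxx-part}, the remark immediately preceding the corollary that pairing with an {\sf X}-shaped matrix sees only the {\sf X}-part of its partner, and the biduality $\variable = \variable^{\circ\circ}$. The latter is available because every cone $\variable$ in the diagram (\ref{diagram}) is a closed convex cone (the basic cones $\aaa,\bbb,\ccc$ are closed, their sums are closed by the Carath\'eodory argument recorded in Section \ref{sec-duaity}, and intersections of closed sets are closed). In each statement the forward implication is immediate from the definition of the dual cone: if $\varrho \in \variable$ then $\lan W,\varrho\ran \ge 0$ for \emph{every} $W \in \variable^\circ$, hence in particular for the {\sf X}-shaped ones, and symmetrically for (ii). So all the content is in the two reverse implications, whose point is that restricting the quantifier to {\sf X}-shaped witnesses, respectively {\sf X}-states, loses no information.

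For (i) I would start from the biduality characterization $\varrho \in \variable \iff \lan W,\varrho\ran \ge 0$ for every $W \in \variable^\circ$, and upgrade the hypothesis (which is stated only for {\sf X}-shaped $W \in \variable^\circ$) to all of $\variable^\circ$. Fix an arbitrary $W \in \variable^\circ$. Since $\varrho$ is an {\sf X}-state, the preceding remark gives $\lan W,\varrho\ran = \lan W_{\sf X},\varrho\ran$; and $W_{\sf X}$ is {\sf X}-shaped and still lies in $\variable^\circ$ by Proposition \ref{xxx-part}(ii). The hypothesis therefore applies to $W_{\sf X}$ and yields $\lan W_{\sf X},\varrho\ran \ge 0$, whence $\lan W,\varrho\ran \ge 0$. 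As $W$ was arbitrary, biduality forces $\varrho \in \variable^{\circ\circ} = \variable$.

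Part (ii) is the mirror image, now invoking that $W$ is {\sf X}-shaped, so that by the symmetric form of the same coordinate identity $\lan W,\sigma\ran = \lan W,\sigma_{\sf X}\ran$ for every $\sigma$, together with Proposition \ref{xxx-part}(i) which keeps $\sigma_{\sf X}$ inside $\variable$. Here I would verify membership in $\variable^\circ$ directly: for an arbitrary $\sigma \in \variable$, replace $\sigma$ by its {\sf X}-part $\sigma_{\sf X} \in \variable$, apply the hypothesis to the {\sf X}-state $\sigma_{\sf X}$, and conclude $\lan W,\sigma\ran = \lan W,\sigma_{\sf X}\ran \ge 0$. I do not expect a genuine obstacle: the substantive work has already been isolated in Proposition \ref{xxx-part}, and the only points that need care are tracking which partner is {\sf X}-shaped so the pairing identity is applied on the correct side, and confirming closedness of $\variable$ so that $\variable = \variable^{\circ\circ}$ may be used in (i).
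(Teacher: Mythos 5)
Your proposal is correct and follows essentially the same route as the paper, which derives the corollary directly from Proposition \ref{xxx-part} together with the identity $\lan W,\varrho\ran=\lan W_{\sf X},\varrho\ran$ for {\sf X}-shaped $\varrho$ (and its mirror image), plus biduality $\variable=\variable^{\circ\circ}$ for the closed cones in the diagram. The paper leaves these steps implicit, and you have merely spelled them out, including the correct bookkeeping of which partner in the pairing is {\sf X}-shaped.
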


\begin{corollary}\label{xxx24}
For convex cones $\variable_1$ and $\variable_2$ in diagrams {\rm (\ref{diagram})} or {\rm (\ref{diagram1})}, we have
the relation $(\variable_1+\variable_2)\cap\xx=(\variable_1\cap\xx)+(\variable_2\cap\xx)$.
\end{corollary}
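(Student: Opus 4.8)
The plan is to establish the two inclusions separately, the nontrivial content lying in the reverse direction, which I would reduce to the linearity of the $\xx$-part map together with Proposition~\ref{xxx-part}.

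The inclusion $(\variable_1\cap\xx)+(\variable_2\cap\xx)\subset(\variable_1+\variable_2)\cap\xx$ is immediate and requires no hypothesis on the cones beyond their being convex cones sitting in $V$: if $\varrho_i\in\variable_i\cap\xx$ for $i=1,2$, then $\varrho_1+\varrho_2$ lies in $\variable_1+\variable_2$ by the very definition of the sum of cones, and it lies in $\xx$ because $\xx$ is a linear subspace of $V$. Hence $\varrho_1+\varrho_2\in(\variable_1+\variable_2)\cap\xx$.

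For the reverse inclusion $(\variable_1+\variable_2)\cap\xx\subset(\variable_1\cap\xx)+(\variable_2\cap\xx)$, I would take an $\xx$-state $\varrho\in\variable_1+\variable_2$ and write $\varrho=\varrho_1+\varrho_2$ with $\varrho_i\in\variable_i$; such a genuine decomposition (rather than a limit of decompositions) is available because the cones in the diagrams are closed and their sums coincide with the algebraic sums, as recorded in Section~\ref{sec-duaity}. The key step is then to apply the $\xx$-part map $\varrho\mapsto\varrho_\xx$, which is a linear idempotent projection of $V$ onto the subspace $\xx$. Since $\varrho$ is already $\xx$-shaped we have $\varrho=\varrho_\xx$, and linearity of the projection gives $\varrho=\varrho_\xx=(\varrho_1)_\xx+(\varrho_2)_\xx$. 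By Proposition~\ref{xxx-part}(i) each $(\varrho_i)_\xx$ remains in $\variable_i$, while $(\varrho_i)_\xx\in\xx$ by construction, so $(\varrho_i)_\xx\in\variable_i\cap\xx$. This exhibits $\varrho$ as an element of $(\variable_1\cap\xx)+(\variable_2\cap\xx)$ and completes the inclusion.

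The one point that needs care, and which I regard as the main (if modest) obstacle, is that the statement asks for cones drawn from \emph{both} diagrams (\ref{diagram}) and (\ref{diagram1}). For the cones $\variable_i$ appearing in (\ref{diagram}) the stability property $\varrho_i\in\variable_i\Rightarrow(\varrho_i)_\xx\in\variable_i$ used above is exactly Proposition~\ref{xxx-part}(i). For the cones in (\ref{diagram1}), which are precisely the dual cones $\variable^\circ$ of the cones $\variable$ in (\ref{diagram}), the same stability under passing to $\xx$-parts is furnished instead by Proposition~\ref{xxx-part}(ii). Once this observation is in place, the projection argument applies verbatim in both settings, so no separate treatment of the dual diagram is required and the identity $(\variable_1+\variable_2)\cap\xx=(\variable_1\cap\xx)+(\variable_2\cap\xx)$ holds in all the asserted cases.
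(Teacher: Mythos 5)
Your proof is correct and is essentially the argument the paper intends: the corollary is stated without explicit proof precisely because it follows, as you show, from linearity of the projection $\varrho\mapsto\varrho_\xx$ together with Proposition~\ref{xxx-part} (part (i) for the cones in (\ref{diagram}), part (ii) for their duals in (\ref{diagram1})). Your extra remark about the decomposition being genuine is harmless but unnecessary, since $\variable_1+\variable_2$ is by definition the set of algebraic sums.
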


Once we characterize {\sf X}-shaped matrices in the convex cones in (\ref{diagram}) or (\ref{diagram1}),
these conditions will give rise to necessary conditions for general three qubit self-adjoint
matrices to belong to those convex cones, by Proposition \ref{xxx-part}.
On the other hand, Corollary \ref{xxx-coro} tells us that we may restrict ourselves on the
bi-linear pairing in the real vector space $\xx$ for this purpose.


\section{basic partial separability}\label{sec-basic}

In this section, we consider the three basic convex cones $\aaa$, $\bbb$, $\ccc$ and their dual cones $\aaa^\circ$,
$\bbb^\circ$, $\ccc^\circ$.
It was shown in \cite[Proposition 5.2]{han_kye_optimal} that an {\sf X}-shaped multi-qubit state
is separable with respect to a bi-partition of systems if and only if it is of positive partial transpose
with respect to the same bi-partition.  The PPT condition is easily checked for three qubit {\sf X}-shaped states
by the following inequalities
\begin{center}
\framebox{
\parbox[t][0.7cm]{9.50cm}{
\addvspace{0.1cm} \centering
$S_1[i,j]:\qquad \min\{\sqrt{a_ib_i},\sqrt{a_jb_j}\}\ge \max\{|z_i|,|z_j|\}$,
}}
\end{center}\medskip
for $i,j=1,2,3,4$. If $\varrho=\xx(a,b,(z_1,z_2,z_3,z_4))$, then
the partial transposes are given by
$$
\begin{aligned}
\varrho^{\Gamma_A}=\xx(a,b,(\bar z_4,\bar z_3,\bar z_2,\bar z_1)),\\
\varrho^{\Gamma_B}=\xx(a,b,(z_3,z_4,z_1,z_2)),\\
\varrho^{\Gamma_C}=\xx(a,b,(z_2,z_1,z_4,z_3)).
\end{aligned}
$$
Therefore, we have the following:

\begin{proposition}\label{aa-bbb-ccc}
{\rm \cite[Proposition 5.2]{han_kye_optimal}}
For a three qubit {\sf X}-state $\varrho=\xx(a,b,z)$, we have the following:
\begin{enumerate}
\item[(i)]
$\varrho\in\aaa$ if and only if both $S_1[1,4]$ and $S_1[2,3]$ hold;
\item[(ii)]
$\varrho\in\bbb$ if and only if both $S_1[1,3]$ and $S_1[2,4]$ hold;
\item[(iii)]
$\varrho\in\ccc$ if and only if both $S_1[1,2]$ and $S_1[3,4]$ hold.
\end{enumerate}
\end{proposition}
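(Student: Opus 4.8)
The plan is to reduce membership in each of $\aaa$, $\bbb$, $\ccc$ to a pair of scalar positivity conditions, using the cited equivalence between partial separability and the PPT property, and then to read those conditions off from the $2\times 2$ blocks of the relevant \xx-shaped matrices. By \cite[Proposition 5.2]{han_kye_optimal}, an \xx-state $\varrho=\xx(a,b,z)$ lies in $\aaa$ precisely when it is separable with respect to the bi-partition $A$-$BC$, which is equivalent to being of positive partial transpose with respect to $A$. Since every separable state is positive, this means that $\varrho\in\aaa$ holds if and only if \emph{both} $\varrho\ge 0$ and $\varrho^{\Gamma_A}\ge 0$, and similarly for $\bbb$ and $\ccc$ with $\Gamma_B$ and $\Gamma_C$.

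The key structural observation I would then use is that an \xx-shaped self-adjoint matrix decouples. After the permutation grouping the index pairs $\{1,8\},\{2,7\},\{3,6\},\{4,5\}$, the matrix $\xx(a,b,z)$ is permutation-similar to the direct sum of the four $2\times 2$ Hermitian blocks $\left(\begin{matrix} a_i & z_i\\ \bar z_i & b_i\end{matrix}\right)$, so that $\xx(a,b,z)\ge 0$ if and only if $a_i\ge 0$, $b_i\ge 0$ and $\sqrt{a_ib_i}\ge |z_i|$ for each $i=1,2,3,4$. Applying this to $\varrho\ge 0$ yields the four inequalities $\sqrt{a_ib_i}\ge |z_i|$, while applying it to $\varrho^{\Gamma_A}=\xx(a,b,(\bar z_4,\bar z_3,\bar z_2,\bar z_1))$ yields the four ``cross'' inequalities $\sqrt{a_1b_1}\ge |z_4|$, $\sqrt{a_2b_2}\ge |z_3|$, $\sqrt{a_3b_3}\ge |z_2|$ and $\sqrt{a_4b_4}\ge |z_1|$, using the tabulated form of the partial transpose.

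Finally I would overlay the two families. For the pair $\{1,4\}$ the conditions from $\varrho$ and from $\varrho^{\Gamma_A}$ combine into $\sqrt{a_1b_1}\ge\max\{|z_1|,|z_4|\}$ and $\sqrt{a_4b_4}\ge\max\{|z_1|,|z_4|\}$, i.e.\ exactly $S_1[1,4]$; the pair $\{2,3\}$ gives $S_1[2,3]$ in the same way, proving (i). Statements (ii) and (iii) follow identically: the $z$-index permutations recorded for $\varrho^{\Gamma_B}=\xx(a,b,(z_3,z_4,z_1,z_2))$ and $\varrho^{\Gamma_C}=\xx(a,b,(z_2,z_1,z_4,z_3))$ pair up the blocks $\{1,3\},\{2,4\}$ and $\{1,2\},\{3,4\}$, producing $S_1[1,3],S_1[2,4]$ and $S_1[1,2],S_1[3,4]$ respectively. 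The computation is routine once the block reduction is in place; the only point requiring care is recognizing that it is the interplay of \emph{both} positivity conditions $\varrho\ge 0$ and $\varrho^{\Gamma}\ge 0$—not the partial-transpose condition alone—that produces the symmetric $\min$-$\max$ form of $S_1[i,j]$, together with keeping the index permutations in the partial transposes straight.
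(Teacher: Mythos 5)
Your proof is correct and takes essentially the same route as the paper: it invokes the separability--PPT equivalence of \cite[Proposition 5.2]{han_kye_optimal}, uses the tabulated partial transposes $\varrho^{\Gamma_A}$, $\varrho^{\Gamma_B}$, $\varrho^{\Gamma_C}$, and reduces positivity of an \xx-shaped matrix to its $2\times 2$ blocks, overlaying $\varrho\ge 0$ with $\varrho^{\Gamma}\ge 0$ to obtain the inequalities $S_1[i,j]$ in their $\min$--$\max$ form. Your closing caution, that both positivity conditions (and not the partial transpose alone) are needed, is exactly the implicit content of the paper's one-line derivation, so nothing is missing.
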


We note that inequalities $S_1[i,j]$'s give us necessary criteria for general three qubit states to
belong to $\aaa$, $\bbb$ and $\ccc$, respectively, by Proposition \ref{xxx-part}.
Now, we proceed to provide generating sets for the convex cones $\aaa\cap\xx$, $\bbb\cap \xx$ and $\ccc\cap\xx$.
To be motivated, we decompose an {\sf X}-state $\varrho=\xx(a,b,z)$ in $\aaa$ by
$$
\begin{aligned}
\varrho &= \xx((a_1,0,0,a_4), (b_1,0,0,b_4),(z_1,0,0,z_4))\\
&\qquad\qquad\qquad\qquad + \xx((0,a_2,a_3,0), (0,b_2,b_3,0),(0,z_2,z_3,0)),
\end{aligned}
$$
then two summands satisfy both $S_1[1,4]$ and $S_1[2,3]$.
Therefore, we may assume that $a_k=b_k=z_k=0$ for $k=2,3$.
If $z_1=0$, then $\varrho$ is the average of two states
$$
\xx((a_1,0,0,a_4),(b_1,0,0,b_4),(z_4,0,0,z_4)), \quad 
\xx((a_1,0,0,a_4),(b_1,0,0,b_4),(-z_4,0,0,z_4))
$$
in $\aaa\cap \xx$. If $0<|z_1|<|z_4|$, then $\varrho$ is a convex combination of
$$
\xx((a_1,0,0,a_4),(b_1,0,0,b_4),(\textstyle{|z_4| \over |z_1|}z_1,0,0,z_4)), \quad 
\xx((a_1,0,0,a_4),(b_1,0,0,b_4),(-{|z_4| \over |z_1|}z_1,0,0,z_4))
$$
in $\aaa\cap \xx$.

By subtracting a suitable diagonal state, it is natural to consider the following conditions
\begin{center}
\framebox{
\parbox[t][0.7cm]{13.00cm}{
\addvspace{0.1cm} \centering
$S^e_1[i,j]:\qquad \sqrt{a_ib_i}=|z_i|=\sqrt{a_jb_j}=|z_j|=1$,\quad the others are zero,
}}
\end{center}\medskip
for each $i,j=1,2,3,4$ with $i\neq j$.
We define
$$
\begin{aligned}
\E_\aaa=\{\varrho=\xx(a,b,z): S^e_1[1,4] \ {\text{\rm or}}\ S^e_1[2,3]\ {\text{\rm holds}}\},\\
\E_\bbb=\{\varrho=\xx(a,b,z): S^e_1[1,3] \ {\text{\rm or}}\ S^e_1[2,4]\ {\text{\rm holds}}\},\\
\E_\ccc=\{\varrho=\xx(a,b,z): S^e_1[1,2] \ {\text{\rm or}}\ S^e_1[3,4]\ {\text{\rm holds}}\}.
\end{aligned}
$$
We also denote by $\Delta$ the set of all extreme diagonal states, that is,
$$
\Delta=\{\xx(E_i,0,0):i=1,2,3,4\}\cup\{\xx(0,E_i,0):i=1,2,3,4\},
$$
where $\{E_i:i=1,2,3,4\}$ denotes the canonical basis of $\mathbb R^4$.

We have $\E_\aaa\subset \aaa$ by Proposition \ref{aa-bbb-ccc}, and $\E_\aaa$ is parameterized by four real variables. The same comments
also hold for $\bbb$ and $\ccc$.
We also consider the following inequalities:
\begin{center}
\framebox{
\parbox[t][0.7cm]{8.00cm}{
\addvspace{0.1cm} \centering
$W_1[i,j]:\qquad \sqrt{s_it_i}+\sqrt{s_jt_j}\ge |u_i|+|u_j|$,
}}
\end{center}\medskip
for $i,j=1,2,3,4$ with $i\neq j$, in order to characterize the dual cones $\aaa^\circ\cap \xx$, $\bbb^\circ\cap \xx$ and $\ccc^\circ\cap\xx$.
We denote $\xx_i(s_i,t_i,u_i):=\xx(s_iE_i,t_iE_i,u_iE_i)$ for $i=1,2,3,4$.

\begin{lemma}\label{new-aaa}
For a given self-adjoint {\sf X}-shaped matrix  $W=\xx(s,t,u)$, the
following are equivalent:
\begin{enumerate}
\item[(i)]
$\lan W,\varrho \ran\ge 0$ for each $\varrho\in \E_\aaa\cup\Delta$;
\item[(ii)]
$s_i,t_i\ge 0$ for $i=1,2,3,4$, and the inequalities $W_1[1,4]$ and $W_1[2,3]$ hold;
\item[(iii)]
$\lan W,\varrho \ran\ge 0$ for each $\varrho\in \aaa$.
\end{enumerate}
\end{lemma}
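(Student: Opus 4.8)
The plan is to run the three conditions as a cycle, separating the elementary equivalence (i)$\Leftrightarrow$(ii) from the structural implication (i)$\Rightarrow$(iii), while (iii)$\Rightarrow$(i) is immediate from the inclusion $\E_\aaa\cup\Delta\subset\aaa$ (here $\E_\aaa\subset\aaa$ is Proposition \ref{aa-bbb-ccc} and $\Delta\subset\aaa$ is clear). The computational backbone is the explicit pairing: for $W=\xx(s,t,u)$ and $\varrho=\xx(a,b,z)$ one reads off from the anti-diagonal structure that
$$
\langle W,\varrho\rangle=\sum_{k=1}^4(a_ks_k+b_kt_k)+2\sum_{k=1}^4\re(z_ku_k).
$$
I would record this identity first, since every step below reads directly from it.

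For (i)$\Leftrightarrow$(ii), pairing $W$ against the diagonal states $\xx(E_i,0,0)$ and $\xx(0,E_i,0)$ in $\Delta$ returns exactly $s_i$ and $t_i$, so the $\Delta$-part of (i) is precisely the positivity $s_i,t_i\ge 0$. Pairing $W$ against a state in $\E_\aaa$ meeting $S^e_1[1,4]$ gives $a_1s_1+b_1t_1+a_4s_4+b_4t_4+2\re(z_1u_1)+2\re(z_4u_4)$ subject to $a_1b_1=a_4b_4=1$ and $|z_1|=|z_4|=1$. Minimizing over the free phases of $z_1,z_4$ replaces the last two terms by $-2(|u_1|+|u_4|)$, and minimizing over $a_i,b_i$ by the AM--GM inequality (legitimate since $s_i,t_i\ge 0$) replaces $a_is_i+b_it_i$ by $2\sqrt{s_it_i}$; the sharp bound being nonnegative is exactly $W_1[1,4]$, and $S^e_1[2,3]$ yields $W_1[2,3]$. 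The converse direction is the same AM--GM estimate read forwards. This part is routine once the pairing is in hand.

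The substantive step is (i)$\Rightarrow$(iii), for which the plan is to show that $\E_\aaa\cup\Delta$ is a \emph{generating set} for $\aaa\cap\xx$ and then invoke Proposition \ref{tool}. Given $\varrho=\xx(a,b,z)\in\aaa\cap\xx$, Proposition \ref{aa-bbb-ccc} supplies $S_1[1,4]$ and $S_1[2,3]$; splitting $\varrho$ into its $\{1,4\}$- and $\{2,3\}$-supported blocks reduces matters to a single block, say $\{1,4\}$ with $\min\{\sqrt{a_1b_1},\sqrt{a_4b_4}\}\ge\max\{|z_1|,|z_4|\}$. Following the decomposition displayed just before the lemma, I would first equalize the anti-diagonal magnitudes to a common value $r$ by writing the block as the indicated convex combination of two states whose off-diagonal entries are rescaled to $\max\{|z_1|,|z_4|\}$, and then subtract a diagonal state from the cone over $\Delta$ to land on a state of type $S^e_1[1,4]$ scaled by $r$. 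This exhibits every element of $\aaa\cap\xx$ as a finite nonnegative sum of members of $\E_\aaa\cup\Delta$. Proposition \ref{tool} then upgrades (i) to $W\in(\aaa\cap\xx)^\circ$, and since $W$ is $\xx$-shaped, Proposition \ref{xxx-part} (equivalently Corollary \ref{xxx-coro}(ii)) converts this to $\langle W,\varrho\rangle\ge 0$ for \emph{all} $\varrho\in\aaa$, which is (iii).

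The main obstacle is the generating-set verification, i.e. the bookkeeping of the preceding paragraph: treating the degenerate cases $z_1=0$ and $|z_1|=|z_4|$ separately from $0<|z_1|<|z_4|$, and checking that after equalizing magnitudes one can choose a diagonal correction with entries $p_i\le a_i$, $q_i\le b_i$ and $\sqrt{p_iq_i}=r$, so that the leftover diagonal is a genuine nonnegative combination of $\Delta$. This is exactly where the hypothesis $r\le\sqrt{a_ib_i}$ coming from $S_1[1,4]$ is consumed. Everything else—the pairing identity, the phase optimization, and the AM--GM bound—is elementary, and the passage from $\aaa\cap\xx$ back to $\aaa$ is handed to us by Propositions \ref{xxx-part} and \ref{tool}.
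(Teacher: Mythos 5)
Your proposal is correct, but it routes the substantive implication differently from the paper. The easy parts coincide: (iii) $\Longrightarrow$ (i) is the same inclusion argument, and your (i) $\Longleftrightarrow$ (ii) is the paper's computation in mild disguise — the paper plugs in the specific minimizers $a_i=\sqrt{t_i/s_i}$, $z_i=-e^{-{\rm i}\arg u_i}$ and handles degenerate diagonals by passing to $W+\varepsilon I$, whereas you phrase it as an infimum over the free parameters of $\E_\aaa$, which absorbs the degenerate case without the perturbation; this is cosmetic. The genuine divergence is at (i) $\Longrightarrow$ (iii). The paper never proves this arrow directly: it proves (ii) $\Longrightarrow$ (iii) analytically, by the chain of inequalities culminating in (\ref{ineq--xx}) (group the indices as $\{1,4\}$ and $\{2,3\}$, bound $\sqrt{a_ib_i}$ below by $\max\{|z_i|,|z_j|\}$ via $S_1$, then apply $W_1$), and only afterwards deduces from the lemma, via Proposition \ref{tool}, that $\E_\aaa\cup\Delta$ generates $\aaa\cap\xx$. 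You reverse the logical arrow: you establish the generating property first, by completing the convex decomposition that the paper displays only as motivation before the lemma (phase-averaging or rescaling to equalize $|z_1|=|z_4|=r$, then subtracting the diagonal with $p_i=r\sqrt{a_i/b_i}\le a_i$, $q_i=r\sqrt{b_i/a_i}\le b_i$, which is exactly where $S_1[1,4]$ is spent), and then pass from $W\in(\aaa\cap\xx)^\circ$ to $W\in\aaa^\circ$ by Corollary \ref{xxx-coro}(ii); this is legitimate and non-circular, since that corollary rests only on Proposition \ref{xxx-part}, which is proved before the lemma by the product-vector averaging trick. Each route buys something: yours is constructive, makes $\ext(\aaa\cap\xx)\subset\E_\aaa\cup\Delta$ (the containment behind Theorem \ref{basic_dual}) an immediate byproduct rather than a duality corollary, and never needs the $W_1$ inequalities for this step; the paper's inequality chain, on the other hand, is recycled verbatim in Lemmas \ref{lemma-ppt-dual}, \ref{lemma-ntersection-state} and \ref{lemma-ntersection-witness}, so its analytic formulation amortizes across the later cones in a way your decomposition would have to be redone case by case.
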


\begin{proof}
For the direction (i) $\Longrightarrow$ (ii), we obtain $s_i,t_i\ge 0$ from
$\lan W, \varrho \ran \ge 0$ for $ \varrho \in \Delta$.
Suppose that both $s_i$ and $t_i$ are nonzero for each $i=1,2,3,4$.
Then, we can consider the following states
$$
\varrho_{i, j}:=\xx_i\left(\sqrt{t_i \over s_i}, \sqrt{s_i \over t_i}, -e^{-{\rm i}\theta_i}\right)
+ \xx_j\left(\sqrt{t_j \over s_j}, \sqrt{s_j \over t_j}, -e^{-{\rm i}\theta_j}\right)
$$
for $(i,j)=(1,4),(2,3)$, with $\theta_k = \arg u_k$.
Since $\varrho_{i,j}\in{\mathcal E}_\aaa$, we have
$$
0 \le {1 \over 2} \lan W, \varrho_{i, j} \ran = \sqrt{s_i t_i} + \sqrt{s_j t_j} - |u_i| - |u_j|,
$$
by (i). When one of $s_i$ or $t_i$ is zero, we apply the result to $W+\varepsilon I$
to get the inequality $\sqrt{(s_i+\varepsilon)(t_i+\varepsilon)} + \sqrt{(s_j+\varepsilon)(t_j+\varepsilon)}
\ge |u_i| + |u_j|$ for each $\varepsilon>0$.

For the implication (ii) $\Longrightarrow$ (iii), it is enough to prove the following by Corollary \ref{xxx-coro} and Proposition \ref{aa-bbb-ccc}:
\begin{equation}\label{miuyccj}
S_1[1,4], S_1[2,3], W_1[1,4], W_1[2,3]\ \Longrightarrow
\lan \xx(s,t,u), \xx(a,b,z)\ran\ge 0.
\end{equation}
Indeed, we have
$$
\begin{aligned}
\sum_{i=1}^4\sqrt{s_it_i}\sqrt{a_ib_i}
&=(\sqrt{s_1t_1}\sqrt{a_1b_1}+\sqrt{s_4t_4}\sqrt{a_4b_4})+(\sqrt{s_2t_2}\sqrt{a_2b_2}+\sqrt{s_3t_3}\sqrt{a_3b_3})\\
&\ge (\sqrt{s_1t_1}+\sqrt{s_4t_4})\max\{|z_1|,|z_4|\}+(\sqrt{s_2t_2}+\sqrt{s_3t_3})\max\{|z_2|,|z_3|\}\\
&\ge (|u_1|+|u_4|)\max\{|z_1|,|z_4|\}+(|u_2|+|u_3|)\max\{|z_2|,|z_3|\}\\
&\ge \sum_{i=1}^4|u_i||z_i|,
\end{aligned}
$$
which implies
\begin{equation}\label{ineq--xx}
\begin{aligned}
\frac 12\lan \xx(s,t,u),\xx(a,b,z)\ran
&=\frac 12\sum_{i=1}^4(s_ia_i+t_ib_i+2\re(u_iz_i))\\
&\ge \sum_{i=1}^4(\sqrt{s_it_i}\sqrt{a_ib_i}-|u_i||z_i|)\ge 0.
\end{aligned}
\end{equation}
as it was required. The direction (iii) $\Longrightarrow$ (i) is clear
since ${\mathcal E}_\aaa\cup\Delta\subset\aaa$ by Proposition \ref{aa-bbb-ccc}.
\end{proof}

The equivalence between (ii) and (iii) of Lemma \ref{new-aaa} gives rise to a characterization
of the convex cone $\aaa^\circ\cap \xx$, whose members are the Choi matrix of $(1,2,2)$-positive bi-linear maps between
$2\times 2$ matrices in the sense of \cite{han_kye_tri}. Therefore, Lemma \ref{new-aaa} recovers Theorem 6.2 in \cite{han_kye_tri},
as follows:

\begin{proposition}\label{basic-dual-cri}
{\rm \cite[Theorem 6.2]{han_kye_tri}}
For a self-adjoint $W=\xx(s,t,u)$ with nonnegative diagonals, we have the following:
\begin{enumerate}
\item[(i)]
$W\in\aaa^\circ$ if and only if both $W_1[1,4]$ and $W_1[2,3]$ hold;
\item[(ii)]
$W\in\bbb^\circ$ if and only if both $W_1[1,3]$ and $W_1[2,4]$ hold;
\item[(iii)]
$W\in\ccc^\circ$ if and only if both $W_1[1,2]$ and $W_1[3,4]$ hold.
\end{enumerate}
\end{proposition}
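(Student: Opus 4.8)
The plan is to read off part (i) directly from Lemma \ref{new-aaa} and then to obtain parts (ii) and (iii) from the permutation symmetry of the three qubits.

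For part (i), observe first that statement (iii) of Lemma \ref{new-aaa}, namely $\lan W,\varrho\ran\ge 0$ for every $\varrho\in\aaa$, is nothing but the defining condition $W\in\aaa^\circ$. On the other hand, the standing hypothesis that $W=\xx(s,t,u)$ has nonnegative diagonals is exactly the requirement $s_i,t_i\ge 0$ for all $i$, which is the first half of statement (ii) of that lemma. Hence, under this hypothesis, statement (ii) collapses to the two inequalities $W_1[1,4]$ and $W_1[2,3]$, and the equivalence (ii) $\Longleftrightarrow$ (iii) of Lemma \ref{new-aaa} becomes precisely the assertion of part (i).

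For parts (ii) and (iii), I would invoke the action of the symmetric group $S_3$ permuting the three tensor factors. For a transposition $\sigma$ of two of the factors, let $U_\sigma$ be the corresponding permutation unitary on $\mathbb C^2\ot\mathbb C^2\ot\mathbb C^2$, and consider the conjugation $\Phi_\sigma\colon\varrho\mapsto U_\sigma\varrho U_\sigma^*$. Since $U_\sigma$ is a real orthogonal matrix, $\Phi_\sigma$ is an involution on $V$ preserving the bilinear pairing $\lan a,b\ran=\tr(ba^\ttt)$. Swapping the factors $A$ and $B$ sends the bi-partition $A$-$BC$ to $B$-$CA$, so $\Phi_\sigma$ maps $\aaa$ onto $\bbb$; likewise swapping $A$ and $C$ maps $\aaa$ onto $\ccc$. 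Because $\Phi_\sigma$ preserves the pairing and is an involution, it follows that $\Phi_\sigma(\aaa^\circ)=\bbb^\circ$ and $\Phi_\sigma(\aaa^\circ)=\ccc^\circ$ respectively. Moreover $\Phi_\sigma$ preserves the subspace $\xx$ and acts on the four anti-diagonal labels $\{1,2,3,4\}$ as a permutation: the swap of $A$ and $B$ induces the transposition $(3\,4)$ and the swap of $A$ and $C$ induces $(2\,4)$. Consequently $\Phi_\sigma$ carries the pair of inequalities $W_1[1,4],W_1[2,3]$ to $W_1[1,3],W_1[2,4]$ in the first case and to $W_1[1,2],W_1[3,4]$ in the second, while preserving the nonnegativity of the diagonals, and thereby transports part (i) to parts (ii) and (iii).

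Since all the analytic content already resides in Lemma \ref{new-aaa}, the only point requiring care is the combinatorial bookkeeping of the previous paragraph: one must verify that conjugation by the swap unitary genuinely preserves the X-shape and permutes the anti-diagonal entries by the claimed transposition. This reduces to the elementary observation that bitwise complementation of the index strings in $\{0,1\}^3$ commutes with any permutation of the bit positions, so that the diagonal and the anti-diagonal are each mapped into themselves, while the induced permutation of the four complementary pairs $\{x,\bar x\}$ is read off directly and is seen to match the index pairings $\{1,4\},\{2,3\}$, $\{1,3\},\{2,4\}$ and $\{1,2\},\{3,4\}$ appearing in Proposition \ref{aa-bbb-ccc}. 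As an alternative route that avoids the symmetry argument, one may instead repeat the proof of Lemma \ref{new-aaa} verbatim, replacing $\E_\aaa$ by $\E_\bbb$ or $\E_\ccc$ and the index pairs $(1,4),(2,3)$ by the relevant pairs, since none of the estimates used there depends on the particular choice of pairing.
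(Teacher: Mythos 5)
Your proposal is correct and takes essentially the same route as the paper: part (i) is read off from the equivalence (ii) $\Longleftrightarrow$ (iii) of Lemma \ref{new-aaa} once the nonnegative-diagonal hypothesis absorbs the condition $s_i,t_i\ge 0$, and the paper likewise obtains the $\bbb^\circ$ and $\ccc^\circ$ cases from the qubit-permutation symmetry $x_A\ot x_B\ot x_C\mapsto x_{\sigma(A)}\ot x_{\sigma(B)}\ot x_{\sigma(C)}$, which it invokes explicitly in the proof of Lemma \ref{lemma-ntersection-state}. Your bookkeeping---that the $A$-$B$ swap induces the transposition $(3\,4)$ and the $A$-$C$ swap induces $(2\,4)$ on the anti-diagonal pairs, matching the pairings in Proposition \ref{aa-bbb-ccc}---is accurate and simply makes explicit what the paper leaves implicit.
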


The implication (i) $\Longrightarrow$ (iii) of Lemma \ref{new-aaa} tells us that
the set $\E_\aaa\cup\Delta$ is a generating set for the convex cone $\aaa \cap \xx$ by Proposition \ref{tool}.
We also note that the set $\E_\aaa\cup\Delta$ is closed, and so we conclude
that every extreme ray of $\aaa \cap \xx$ must be an element of $\E_\aaa\cup\Delta$.
We show that the converse actually holds.
Because states in $\Delta$ generate extreme rays in the cone ${\mathcal P}$,
they also generate extreme rays of the smaller convex cones listed in the diagram (\ref{diagram}).
In order to prove that every state in the set $\E_\aaa$ generates an extreme ray of the convex cone
$\aaa \cap \xx$, we first prove a technical lemma which will play a key role in characterization of extreme rays of the other cones.

\begin{lemma}\label{CS-S}
Suppose that a three qubit \xx-state $\varrho=\xx(a,b,z)$ in $\aaa$ (respectively, $\bbb$ and $\ccc$) is decomposed as
$$
\varrho = \xx(a',b',z') + \xx(a'',b'',z'')
$$
in $\aaa \cap \xx$ (respectively, $\bbb \cap \xx$ and $\ccc \cap \xx$).
If
$$
\sqrt{a_ib_i}=\sqrt{a_jb_j}=|z_i|
$$
for $\{i,j\} = \{1,4\}$ or $\{2,3\}$ (respectively, $\{i,j\} = \{1,3\}$ or $\{2,4\}$, and $\{i,j\} = \{1,2\}$ or $\{3,4\}$), then we have
$$
(a'_i,a'_j,b'_i,b'_j,z'_i) ~ \parallelsum (a''_i,a''_j,b''_i,b''_j,z''_i).
$$
\end{lemma}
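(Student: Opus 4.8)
The plan is to reduce to a single representative case and then exploit the tightness of a short chain of inequalities. By the permutation symmetry among the indices $1,2,3,4$ underlying Proposition \ref{aa-bbb-ccc} together with the partial transpose formulas preceding it, it suffices to treat $\aaa$ with $\{i,j\}=\{1,4\}$ and $i=1$; the cases for $\bbb,\ccc$ and for the other index pairs are identical after relabeling. Write $c:=\sqrt{a_1b_1}=\sqrt{a_4b_4}=|z_1|$, and recall that $a=a'+a''$, $b=b'+b''$, $z=z'+z''$ entrywise, with each summand lying in $\aaa\cap\xx$ and hence obeying $S_1[1,4]$. The whole argument rests on two elementary facts: the superadditivity of the geometric mean, $\sqrt{(a'_k+a''_k)(b'_k+b''_k)}\ge\sqrt{a'_kb'_k}+\sqrt{a''_kb''_k}$ for nonnegative entries, with equality precisely when $(a'_k,b'_k)\parallelsum(a''_k,b''_k)$; and the triangle inequality $|z'_1|+|z''_1|\ge|z'_1+z''_1|=|z_1|$, with equality precisely when $z'_1\parallelsum z''_1$.

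At index $1$ these combine into
\[
c=\sqrt{a_1b_1}\ \ge\ \sqrt{a'_1b'_1}+\sqrt{a''_1b''_1}\ \ge\ |z'_1|+|z''_1|\ \ge\ |z_1|=c,
\]
the middle step using $\sqrt{a'_1b'_1}\ge|z'_1|$ and $\sqrt{a''_1b''_1}\ge|z''_1|$, which are read off from $S_1[1,4]$ applied to each summand. Since the two ends both equal $c$, every inequality is an equality, which forces simultaneously $(a'_1,b'_1)\parallelsum(a''_1,b''_1)$, the coincidences $\sqrt{a'_1b'_1}=|z'_1|$ and $\sqrt{a''_1b''_1}=|z''_1|$, and $z'_1\parallelsum z''_1$. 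Running the identical chain at index $4$, now invoking $\sqrt{a'_4b'_4}\ge|z'_1|$ and $\sqrt{a''_4b''_4}\ge|z''_1|$ from $S_1[1,4]$ together with the value $|z'_1|+|z''_1|=c$ just obtained, yields $(a'_4,b'_4)\parallelsum(a''_4,b''_4)$ and $\sqrt{a'_4b'_4}=|z'_1|$, $\sqrt{a''_4b''_4}=|z''_1|$.

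Setting $c':=|z'_1|$ and $c'':=|z''_1|$, we now have $\sqrt{a'_1b'_1}=\sqrt{a'_4b'_4}=c'$, $\sqrt{a''_1b''_1}=\sqrt{a''_4b''_4}=c''$, and $c'+c''=c$. The main obstacle is to fuse the three separate proportionalities into the single relation for the $5$-tuples, since index-wise parallelisms need not combine on their own; the equalized geometric means are exactly what pin down a common ratio. Assuming $c>0$, at least one of $c',c''$ is positive, say $c'>0$; then $(a'_1,b'_1)$ and $(a'_4,b'_4)$ have strictly positive entries, so $(a''_k,b''_k)=t_k(a'_k,b'_k)$ with $t_k\ge0$, and comparing geometric means gives $t_k=\sqrt{a''_kb''_k}/\sqrt{a'_kb'_k}=c''/c'$ for both $k=1,4$; moreover $z'_1\parallelsum z''_1$ with $|z''_1|/|z'_1|=c''/c'$ forces $z''_1=(c''/c')z'_1$. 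Hence $(a''_1,a''_4,b''_1,b''_4,z''_1)=(c''/c')(a'_1,a'_4,b'_1,b'_4,z'_1)$, as required, and the case $c''>0$ is symmetric. I expect the only genuinely delicate point to be the fully degenerate case $c=0$, where $z'_1=z''_1=0$ but the two diagonal parallelisms remain uncoupled; this should be dealt with by noting that in every intended application (the extreme rays arising from $\E_\aaa$) one has $c=1$, so that $c>0$ may be assumed throughout.
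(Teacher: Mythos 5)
Your proof is correct and takes essentially the same route as the paper's: equality throughout the chain combining the triangle inequality, $S_1[i,j]$ for each summand, and the Cauchy--Schwarz (superadditivity of the geometric mean) step forces the index-wise proportionalities, which are then fused into the $5$-tuple parallelism by matching every ratio to $|z_i''|/|z_i'|$. Your closing remark about the degenerate value $c=0$ is well taken --- there the conclusion can genuinely fail (e.g.\ for decompositions of diagonal states), and while the paper's proof silently divides by $z_i'$, $a_k'$, $b_k'$ without addressing this, your observation that every application ($S^e_1$, $S^e_2$, $S^e_3$) has common value $1$ correctly disposes of it.
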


\begin{proof}
Let $k=i,j$. We have
$$
|z_i| = |z_i' + z_i''|
 \le |z_i'| + |z_i''|
 \le \sqrt{a_k'} \sqrt{b_k'} + \sqrt{a_k''} \sqrt{b_k''}
 \le \sqrt{a_k'+a_k''} \sqrt{b_k'+b_k''}
 = \sqrt{a_kb_k}
$$
by $S_1[i,j]$ and the Cauchy-Schwartz inequality.
Since $|z_i|=\sqrt{a_kb_k}$, we have
$$
z_i'~ \parallelsum z_i'', \qquad \sqrt{a_k'b_k'}=|z_i'|, \quad
\sqrt{a_k''b_k''}=|z_i''|, \quad (\sqrt{a_k'},\sqrt{a_k''}) ~
\parallelsum (\sqrt{b_k'},\sqrt{b_k''}).
$$
Let $(\sqrt{a_k'},\sqrt{a_k''}) = \lambda_k (\sqrt{b_k'},\sqrt{b_k''})$ for $\lambda_k>0$.
Then we have
$$
{z_i'' \over z_i'} = {|z_i''| \over |z_i'|} = {\sqrt{a_k''b_k''} \over \sqrt{a_k'b_k'}}
 = {\lambda_k b_k'' \over \lambda_k b_k'} = {b_k'' \over b_k'}
 = {a_k'' \slash \lambda_k^2 \over a_k' \slash \lambda_k^2} = {a_k'' \over a_k'},
$$
as it was required.
\end{proof}

\begin{theorem}\label{basic_dual}
We have
$$
{\rm Ext}(\aaa\cap\xx)=\E_\aaa\cup\Delta, \quad
{\rm Ext}(\bbb\cap\xx)=\E_\bbb\cup\Delta \quad \text{and} \quad
{\rm Ext}(\ccc\cap\xx)=\E_\ccc\cup\Delta.
$$
\end{theorem}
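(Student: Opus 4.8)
The plan is to prove the two inclusions $\ext(\aaa\cap\xx)\subset\E_\aaa\cup\Delta$ and $\E_\aaa\cup\Delta\subset\ext(\aaa\cap\xx)$ separately, treating only $\aaa\cap\xx$ in detail since the cases of $\bbb\cap\xx$ and $\ccc\cap\xx$ are identical after permuting the relevant index pairs $\{1,4\},\{2,3\}$ into $\{1,3\},\{2,4\}$ and into $\{1,2\},\{3,4\}$ in accordance with Proposition \ref{aa-bbb-ccc}. The first inclusion is essentially already in hand: the implication (i) $\Longrightarrow$ (iii) of Lemma \ref{new-aaa} shows that $\E_\aaa\cup\Delta$ is a generating set for $\aaa\cap\xx$, and since this set is closed, Proposition \ref{tool} gives $\ext(\aaa\cap\xx)\subset\E_\aaa\cup\Delta$. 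So the real work is the converse, namely that every member of $\E_\aaa\cup\Delta$ generates an extreme ray.

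For the states in $\Delta$ this is immediate. Each such state is a rank-one positive matrix and hence generates an extreme ray of the full positive cone ${\mathcal P}$; since $\aaa\cap\xx\subset{\mathcal P}$, any decomposition of such a state inside $\aaa\cap\xx$ is a fortiori a decomposition inside ${\mathcal P}$, so its extremality descends to the subcone.

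The core of the argument is to show that each $\varrho=\xx(a,b,z)\in\E_\aaa$ is extreme. By symmetry I would assume $S^e_1[1,4]$ holds, so that $\sqrt{a_1b_1}=|z_1|=\sqrt{a_4b_4}=|z_4|=1$ while every entry at indices $2$ and $3$ vanishes. Suppose $\varrho=\xx(a',b',z')+\xx(a'',b'',z'')$ with both summands in $\aaa\cap\xx\subset{\mathcal P}$. First I would exploit positivity of the two summands: nonnegativity of their diagonals together with $a_2=a_3=b_2=b_3=0$ forces $a_k',a_k'',b_k',b_k''$ to vanish for $k=2,3$, and then the $2\times2$ block inequalities $|z_k'|^2\le a_k'b_k'$ force $z_k'=z_k''=0$ as well; thus both summands are supported on $\{1,4\}$. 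Next I would apply Lemma \ref{CS-S} twice, once with $(i,j)=(1,4)$ and once with $(i,j)=(4,1)$ — both hypotheses $\sqrt{a_ib_i}=\sqrt{a_jb_j}=|z_i|$ hold since all four quantities equal $1$ — obtaining $(a_1',a_4',b_1',b_4',z_1')\parallelsum(a_1'',a_4'',b_1'',b_4'',z_1'')$ and the analogous relation with $z_4$ in place of $z_1$. The two applications share the coordinates $a_1',a_4',b_1',b_4'$, so they carry the same proportionality constant, and combining them shows the full data $(a_1',a_4',b_1',b_4',z_1',z_4')$ is parallel to its double-primed counterpart. Since both summands vanish off $\{1,4\}$, this means $\xx(a',b',z')$ and $\xx(a'',b'',z'')$ are nonnegative multiples of $\varrho$, which is exactly extremality.

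The main obstacle I anticipate is organisational rather than conceptual: one must match the proportionality constants produced by the two invocations of Lemma \ref{CS-S} and treat the degenerate cases cleanly, reading $\parallelsum$ as linear dependence rather than a literal ratio — for instance when a summand is zero, or when $(a_1',a_4',b_1',b_4')$ all vanish, in which case positivity already forces $z_1'=z_4'=0$ and that summand is $0\cdot\varrho$. It is worth stressing that the support reduction at indices $2,3$ is precisely what upgrades the local proportionality on $\{1,4\}$ to proportionality of the whole matrices, and this step genuinely uses $\varrho\in{\mathcal P}$ and not merely $\varrho\in\aaa\cap\xx$.
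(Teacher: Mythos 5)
Your proposal is correct and takes essentially the same route as the paper's proof: the inclusion $\ext(\aaa\cap\xx)\subset\E_\aaa\cup\Delta$ via Lemma \ref{new-aaa} and Proposition \ref{tool}, extremality of $\Delta$ by descent from the positive cone ${\mathcal P}$, support reduction at indices $2,3$ from positivity, and two applications of Lemma \ref{CS-S} with $(i,j)=(1,4)$ and $(4,1)$. Your explicit matching of the two proportionality constants through the shared coordinates $(a_1',a_4',b_1',b_4')$, and your treatment of the degenerate zero-summand case, merely spell out steps the paper leaves implicit.
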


\begin{proof}
It suffices to show that every state in the set $\E_\aaa$ generates an extreme ray of the convex cone $\aaa \cap \xx$.
Suppose that $\varrho=\xx(a,b,z)$ satisfies the condition $S^e_1[1,4]$ and
$$
\varrho = \xx(a',b',z')+\xx(a'',b'',z'') \quad \text{in} ~\aaa\cap \xx.
$$
For $j=2,3$, we see that $a_j=b_j=0$ implies $a_j'=b_j'=z_j'=a_j''=b_j''=z_j''=0$.
Applying Lemma \ref{CS-S} with $(i,j)=(1,4)$ and $(i,j)=(4,1)$,
we get $(a',b',z') ~ \parallelsum (a'',b'',z'')$, as it was required. The same argument works for the case of $S^e_1[2,3]$.
\end{proof}

In the remainder of this section, we look for extreme rays of $\aaa^\circ \cap\xx$, $\bbb^\circ \cap\xx$ and $\ccc^\circ \cap\xx$.
To do this, we consider the condition
\begin{center}
\framebox{
\parbox[t][0.7cm]{10.00cm}{
\addvspace{0.1cm} \centering
$W^e_1[i,j]:\qquad \sqrt{s_it_i}=|u_j|=1$,\quad the others are zero,
}}
\end{center}\medskip
for $i\neq j$, and define
$$
\begin{aligned}
\E_{\aaa^\circ}&=\{W=\xx(s,t,u): W_1^e[i,j]\ {\text{\rm holds for some}}\ (i,j)=(1,4),(4,1),(2,3),(3,2)\},\\
\E_{\bbb^\circ}&=\{W=\xx(s,t,u): W_1^e[i,j]\ {\text{\rm holds for some}}\ (i,j)=(1,3),(3,1),(2,4),(4,2)\},\\
\E_{\ccc^\circ}&=\{W=\xx(s,t,u): W_1^e[i,j]\ {\text{\rm holds for some}}\ (i,j)=(1,2),(2,1),(3,4),(4,3)\}.
\end{aligned}
$$
We also consider the following set
$$
\W^\Delta=
\{W=\xx_i(r, r^{-1}, e^{{\rm i}\theta}): i=1,2,3,4,\ r>0,\ \theta\in\mathbb R\}.
$$

\begin{lemma}\label{new-aaa-circ}
For a given self-adjoint {\sf X}-shaped matrix $\varrho=\xx(a,b,z)$, the
following are equivalent:
\begin{enumerate}
\item[(i)]
$\lan W,\varrho \ran\ge 0$ for each $W\in \E_{\aaa^\circ}\cup\Delta\cup\W^\Delta$;
\item[(ii)]
$\varrho$ is a state satisfying the inequalities $S_1[1,4]$ and $S_1[2,3]$;
\item[(iii)]
$\lan W,\varrho \ran\ge 0$ for each $W\in \aaa^\circ$.
\end{enumerate}
\end{lemma}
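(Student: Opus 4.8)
The plan is to prove the cycle (i) $\Longrightarrow$ (ii) $\Longrightarrow$ (iii) $\Longrightarrow$ (i), in exact parallel with Lemma \ref{new-aaa}, so that the resulting equivalence (i) $\Longleftrightarrow$ (iii) records via Proposition \ref{tool} (together with Corollary \ref{xxx-coro}) that $\E_{\aaa^\circ}\cup\Delta\cup\W^\Delta$ is a generating set for $\aaa^\circ\cap\xx$. Every step rests on the bilinear pairing $\lan\xx(s,t,u),\xx(a,b,z)\ran=\sum_{i=1}^4(s_ia_i+t_ib_i+2\re(u_iz_i))$ already computed in (\ref{ineq--xx}).

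Two of the three implications are immediate. For (iii) $\Longrightarrow$ (i) it suffices to observe that the whole test set lies inside $\aaa^\circ$: the members of $\Delta$ and of $\W^\Delta$ are positive matrices, since each nonzero anti-diagonal $2\times2$ block has vanishing determinant and positive trace, hence they belong to ${\mathcal P}={\mathcal P}^\circ\subset\aaa^\circ$; and each $W\in\E_{\aaa^\circ}$ has nonnegative diagonal and satisfies $W_1[1,4]$ and $W_1[2,3]$ (one with equality on its supporting pair, the other trivially), hence lies in $\aaa^\circ$ by Proposition \ref{basic-dual-cri}. Then (i) is merely the restriction of (iii) to this set. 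For (ii) $\Longrightarrow$ (iii): a state $\varrho$ satisfying $S_1[1,4]$ and $S_1[2,3]$ belongs to $\aaa$ by Proposition \ref{aa-bbb-ccc}(i), so $\lan W,\varrho\ran\ge0$ for every $W\in\aaa^\circ$ by the very definition of the dual cone.

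The content sits in (i) $\Longrightarrow$ (ii), which I would carry out by feeding the explicit witnesses into the pairing and optimizing. Testing $\varrho$ against $\xx(E_i,0,0)$ and $\xx(0,E_i,0)$ in $\Delta$ gives $a_i\ge0$ and $b_i\ge0$. Testing against $\xx_i(r,r^{-1},e^{{\rm i}\theta})\in\W^\Delta$ gives $ra_i+r^{-1}b_i+2\re(e^{{\rm i}\theta}z_i)\ge0$ for all $r>0$ and all $\theta$; minimizing the phase term to $-2|z_i|$ and taking the infimum of $ra_i+r^{-1}b_i$ over $r>0$ (equal to $2\sqrt{a_ib_i}$, with the degenerate cases $a_i=0$ or $b_i=0$ absorbed by the infimum, so that no $\varepsilon$-perturbation is needed) yields $\sqrt{a_ib_i}\ge|z_i|$ for every $i$. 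Together with $a_i,b_i\ge0$ this makes each anti-diagonal block positive semidefinite, i.e. $\varrho$ is a state. Finally, testing against the four members of $\E_{\aaa^\circ}$ indexed by $(1,4),(4,1),(2,3),(3,2)$ and optimizing the same way produces the cross inequalities $\sqrt{a_1b_1}\ge|z_4|$, $\sqrt{a_4b_4}\ge|z_1|$, $\sqrt{a_2b_2}\ge|z_3|$ and $\sqrt{a_3b_3}\ge|z_2|$.

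I expect no genuine obstacle; the one point demanding care is the bookkeeping that assembles each $S_1$ inequality. Indeed $S_1[i,j]=\min\{\sqrt{a_ib_i},\sqrt{a_jb_j}\}\ge\max\{|z_i|,|z_j|\}$ must be pieced together from two sources: the diagonal halves $\sqrt{a_ib_i}\ge|z_i|$, $\sqrt{a_jb_j}\ge|z_j|$ coming from the positivity extracted from $\Delta\cup\W^\Delta$, and the cross halves $\sqrt{a_ib_i}\ge|z_j|$, $\sqrt{a_jb_j}\ge|z_i|$ coming from the two matching witnesses in $\E_{\aaa^\circ}$. Combining the four inequalities gives $S_1[1,4]$ and $S_1[2,3]$ and completes (i) $\Longrightarrow$ (ii); the fact that the three families $\Delta$, $\W^\Delta$ and $\E_{\aaa^\circ}$ each contribute a different part of the argument is exactly what forces all of them into the generating set.
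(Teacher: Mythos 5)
Your proposal is correct and follows essentially the same route as the paper: the equivalence of (ii) and (iii) comes from Proposition \ref{aa-bbb-ccc} together with duality, and (i) $\Longrightarrow$ (ii) is obtained by pairing $\varrho$ against $\Delta$, $\W^\Delta$ and the four families in $\E_{\aaa^\circ}$ to extract first $a_i,b_i\ge 0$, then $\sqrt{a_ib_i}\ge |z_i|$, and then the cross inequalities $\sqrt{a_ib_i}\ge |z_j|$, which assemble into $S_1[1,4]$ and $S_1[2,3]$. The only difference is cosmetic: where the paper passes to $\varrho+\varepsilon I$ and uses the single witness with $r=\sqrt{b_i/a_i}$, you optimize over the whole family $r>0$, so that $\inf_{r>0}\left(ra_i+r^{-1}b_i\right)=2\sqrt{a_ib_i}$ absorbs the degenerate cases $a_i=0$ or $b_i=0$ without any perturbation.
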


\begin{proof}
The equivalence between (ii) and (iii) follows from Proposition \ref{aa-bbb-ccc}.
Therefore, it suffices to show the direction (i) $\Longrightarrow$ (ii).
Since $\lan W,\varrho \ran\ge 0$ for $W \in \Delta$, we have $a_i,b_i \ge 0$.
By taking $\varrho+\varepsilon I$ into account as in the proof of Lemma \ref{new-aaa},
we may assume that $a_i,b_i >0$ without loss of generality.
Then, we can consider
$$
W_{i,j}:=\xx_i\left(\sqrt{\frac{b_i}{a_i}},\sqrt{\frac{a_i}{b_i}},0\right)+\xx_j\left(0,0,-e^{{\rm i}\theta_j}\right)
\in \E_{\aaa^\circ}\cup\W^\Delta
$$
for $(i,j)=(1,4),(4,1),(2,3),(3,2)$ or $i=j$. Note that $W_{i,i}\in{\mathcal W}^\Delta$.
We see that $\varrho$ is a state by $\lan W_{i,i}, \varrho \ran \ge 0$, and the inequalities $S_1[1,4]$ and $S_1[2,3]$
follow from $\lan W_{i,j}, \varrho \ran \ge 0$ for $(i,j)=(1,4),(4,1),(2,3),(3,2)$.
\end{proof}

As for extreme rays of the dual cones, we also begin with a technical lemma which is a witness counterpart to Lemma \ref{CS-S}.

\begin{lemma}\label{CS-W}
Suppose that a three qubit self-adjoint \xx-shaped matrix $W=\xx(s,t,u)$ in $\aaa^\circ$
(respectively, $\bbb^\circ$ and $\ccc^\circ$) is decomposed as
$$
W = \xx(s',t',u') + \xx(s'',t'',u'')
$$
in $\aaa^\circ \cap \xx$ (respectively, $\bbb^\circ \cap \xx$ and $\ccc^\circ \cap \xx$).
If
$$
\sqrt{s_it_i}=|u_k| \qquad \text{and} \qquad  s_j=t_j=0
$$
for $\{i,j\} = \{1,4\}$ or $\{2,3\}$ (respectively, $\{i,j\} = \{1,3\}$ or $\{2,4\}$,
and $\{i,j\} = \{1,2\}$ or $\{3,4\}$) and $k \in \{i,j\}$, then we have
$$
s_j'=t_j'=0=s_j''=t_j'', \qquad u_\ell'=0=u_\ell'' \qquad \text{and} \qquad (s'_i,t'_i,u'_k) ~ \parallelsum (s''_i,t''_i,u''_k)
$$
for $\ell$ with $\{k,\ell\}=\{i,j\}$.
\end{lemma}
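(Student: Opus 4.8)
The plan is to adapt the proof of Lemma \ref{CS-S} to the dual setting, using the inequalities $W_1[i,j]$ in place of the PPT inequalities $S_1[i,j]$, and carrying an additional argument that forces the off-diagonal entry $u_\ell$ to vanish. It suffices to treat $\aaa^\circ$, the cases of $\bbb^\circ$ and $\ccc^\circ$ being identical after permuting the relevant index pairs. Throughout, $\ell$ denotes the index with $\{k,\ell\}=\{i,j\}$.

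First I would observe that both summands belong to $\aaa^\circ\cap\xx$, so by Lemma \ref{new-aaa} each has nonnegative diagonal entries and satisfies $W_1[i,j]$. Since $s_j=s_j'+s_j''=0$ with $s_j',s_j''\ge 0$, we get $s_j'=s_j''=0$, and likewise $t_j'=t_j''=0$; this is the first assertion. With these entries gone, the inequality $W_1[i,j]$ for the primed summand reduces to $\sqrt{s_i't_i'}\ge |u_i'|+|u_j'|$, and since $k\in\{i,j\}$ this gives the bound $|u_k'|\le\sqrt{s_i't_i'}$; the same holds for the double-primed summand.

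Next I would run the sandwich
\[
|u_k|=|u_k'+u_k''|\le |u_k'|+|u_k''|\le \sqrt{s_i't_i'}+\sqrt{s_i''t_i''}\le\sqrt{(s_i'+s_i'')(t_i'+t_i'')}=\sqrt{s_it_i}=|u_k|,
\]
where the last inequality is Cauchy--Schwartz and the final equality is the hypothesis $\sqrt{s_it_i}=|u_k|$. Hence every inequality is an equality, which yields $u_k'\parallelsum u_k''$, the identities $|u_k'|=\sqrt{s_i't_i'}$ and $|u_k''|=\sqrt{s_i''t_i''}$, and the Cauchy--Schwartz equality $(\sqrt{s_i'},\sqrt{s_i''})\parallelsum(\sqrt{t_i'},\sqrt{t_i''})$. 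Comparing $|u_k'|=\sqrt{s_i't_i'}$ with $\sqrt{s_i't_i'}\ge |u_k'|+|u_\ell'|$ forces $u_\ell'=0$, and likewise $u_\ell''=0$; this is the second assertion. Finally, writing $(\sqrt{s_i'},\sqrt{s_i''})=\lambda(\sqrt{t_i'},\sqrt{t_i''})$ and computing the ratios $u_k''/u_k'=|u_k''|/|u_k'|=s_i''/s_i'=t_i''/t_i'$ exactly as in Lemma \ref{CS-S} delivers the proportionality $(s_i',t_i',u_k')\parallelsum(s_i'',t_i'',u_k'')$.

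The argument is essentially bookkeeping, and the two things to get right are the reduction of $W_1[i,j]$ after the diagonal entries vanish --- this is what supplies the bound $|u_k'|\le\sqrt{s_i't_i'}$ needed to start the sandwich --- and reading $u_\ell=0$ off the now-tight inequality. A minor point to dispatch separately is the degenerate case $s_i't_i'=0$ (or its double-primed analogue), where $u_k'=0$ and the proportionality holds trivially rather than through the parameter $\lambda$.
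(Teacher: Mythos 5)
Your proposal is correct and follows essentially the same route as the paper's proof: vanishing of the $j$-entries from nonnegativity of the diagonals, the same sandwich of inequalities combining the triangle inequality, $W_1[i,j]$ and Cauchy--Schwartz against the hypothesis $|u_k|=\sqrt{s_it_i}$, and the same ratio computation borrowed from Lemma \ref{CS-S}. The only differences are cosmetic --- the paper carries the terms $-|u_\ell'|$, $-|u_\ell''|$ through the chain and reads $u_\ell'=0=u_\ell''$ off the final slack, whereas you drop them and recover $u_\ell'=0$ afterward from the tight bound $|u_k'|=\sqrt{s_i't_i'}$ --- and your explicit treatment of the degenerate case $s_i't_i'=0$ is a point the paper leaves implicit.
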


\begin{proof}
The condition $s_j=t_j=0$  implies $s_j'=t_j'=0=s_j''=t_j''$.
We have
$$
\begin{aligned}
|u_k|
&= |u_k' + u_k''|\\
&\le |u_k'| + |u_k''|\\
&\le \sqrt{s_i'} \sqrt{t_i'} - |u_\ell'| + \sqrt{s_i''} \sqrt{t_i''} - |u_\ell''|\\
&\le \sqrt{s_i'+s_i''} \sqrt{t_i'+t_i''} - |u_\ell'| - |u_\ell''| \\
& = \sqrt{s_it_i} - |u_\ell'| - |u_\ell''| \\
& \le \sqrt{s_it_i}
\end{aligned}
$$
by $W_1[i,j]=W_1[k,\ell]$ and the Cauchy-Schwartz inequality.
Since $|u_k|=\sqrt{s_it_i}$, we have
$$
u_k'~ \parallelsum ~u_k'', \quad u_\ell'=0=u_\ell'' \quad \sqrt{s_i't_i'}=|u_k'|, \quad \sqrt{s_i''t_i''}=|u_k''|, \quad
(\sqrt{s_i'},\sqrt{s_i''}) ~ \parallelsum ~(\sqrt{t_i'},\sqrt{t_i''}).
$$
Let $(\sqrt{s_i'},\sqrt{s_i''}) = \lambda (\sqrt{t_i'},\sqrt{t_i''})$ for $\lambda>0$.
Then we have
$$
{u_k'' \over u_k'} = {|u_k''| \over |u_k'|} = {\sqrt{s_i''t_i''} \over \sqrt{s_i't_i'}}
= {\lambda t_i'' \over \lambda t_i'} = {t_i'' \over t_i'} = {s_i'' \slash \lambda^2 \over s_i' \slash \lambda^2} = {s_i'' \over s_i'}.
$$
\end{proof}

\begin{theorem}\label{basic-dual-ext}
We have
$$
{\rm Ext}(\aaa^\circ\cap\xx)=\E_{\aaa^\circ}\cup\Delta\cup\W^\Delta, \quad
{\rm Ext}(\bbb^\circ\cap\xx)=\E_{\bbb^\circ}\cup\Delta\cup\W^\Delta, \quad
{\rm Ext}(\ccc^\circ\cap\xx)=\E_{\ccc^\circ}\cup\Delta\cup\W^\Delta.
$$
\end{theorem}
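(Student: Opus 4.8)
The plan is to imitate the proof of Theorem \ref{basic_dual}: first use the generating-set machinery of Proposition \ref{tool} to get the inclusion $\ext(\aaa^\circ\cap\xx)\subset\E_{\aaa^\circ}\cup\Delta\cup\W^\Delta$, and then use the decomposition Lemma \ref{CS-W} to show every element of the right-hand set in fact generates an extreme ray. I would carry out the argument for $\aaa^\circ$ only; the statements for $\bbb^\circ$ and $\ccc^\circ$ follow verbatim after replacing the index pairs $\{1,4\},\{2,3\}$ by $\{1,3\},\{2,4\}$ and $\{1,2\},\{3,4\}$, respectively, exactly as in Lemmas \ref{new-aaa-circ} and \ref{CS-W}.

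For the inclusion ``$\subset$'', set $S:=\E_{\aaa^\circ}\cup\Delta\cup\W^\Delta$. By Lemma \ref{new-aaa-circ}, if $\varrho\in\xx$ pairs nonnegatively with every element of $S$ (condition (i)), then $\lan W,\varrho\ran\ge 0$ for every $W\in\aaa^\circ$ (condition (iii)), and in particular for every $W\in\aaa^\circ\cap\xx$; this is precisely condition (ii) of Proposition \ref{tool} for the closed convex cone $\aaa^\circ\cap\xx$ in the space $\xx$, so $S$ is a generating set for $\aaa^\circ\cap\xx$. It remains to note that $S$ is closed: each of $\E_{\aaa^\circ}$ and $\W^\Delta$ is a finite union of sets of the form $\{\xx(s_iE_i,t_iE_i,u_jE_j): s_it_i=1,\ |u_j|=1\}$, and such a set is closed in $\xx$ (a convergent sequence cannot have $s_i\to 0$, since then $t_i\to\infty$), while $\Delta$ is finite. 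Proposition \ref{tool} then gives $\ext(\aaa^\circ\cap\xx)\subset S$.

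The substantive direction is ``$\supset$'', that every element of $S$ generates an extreme ray. For $\Delta$ I would argue directly: if $\xx(E_i,0,0)=\xx(s',t',u')+\xx(s'',t'',u'')$ in $\aaa^\circ\cap\xx$, then nonnegativity of the diagonals (Proposition \ref{basic-dual-cri}) forces $t'=t''=0$, whence the inequalities $W_1[1,4]$ and $W_1[2,3]$ kill every anti-diagonal entry of the two summands, and the diagonal parts are supported only at position $i$; thus both summands are multiples of $\xx(E_i,0,0)$. For $W\in\E_{\aaa^\circ}$ satisfying $W^e_1[1,4]$ and for $W\in\W^\Delta$ supported at position $1$, I would take such a decomposition in $\aaa^\circ\cap\xx$, first use nonnegative diagonals together with $W_1[2,3]$ to make all entries at positions $2,3$ vanish in both summands, and then apply Lemma \ref{CS-W} with $\{i,j\}=\{1,4\}$, choosing $k$ to be the position carrying the anti-diagonal entry ($k=4$ in the $\E_{\aaa^\circ}$ case, $k=1$ in the $\W^\Delta$ case). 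Its hypotheses $\sqrt{s_it_i}=|u_k|$ and $s_j=t_j=0$ hold, and its conclusion gives $u_\ell'=u_\ell''=0$ together with $(s_1',t_1',u_k')\parallelsum(s_1'',t_1'',u_k'')$, so that $W'\parallelsum W''$; hence $W$ is extreme.

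I expect the main difficulty to be bookkeeping rather than conceptual: correctly matching each member of $S$ to the index pattern $(i,j,k,\ell)$ demanded by Lemma \ref{CS-W}, and verifying that the inactive entries of \emph{both} summands vanish before the lemma applies. The only genuinely new point compared with the primal Theorem \ref{basic_dual} is the handling of $\Delta$: since $\aaa^\circ$ strictly contains $\mathcal P$, one cannot certify extremality of the diagonal rays by the embedding $\Delta\subset\mathcal P$, and the short direct argument above is needed instead.
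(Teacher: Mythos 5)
Your proposal is correct and follows essentially the same route as the paper: Proposition \ref{tool} together with Lemma \ref{new-aaa-circ} (and closedness of the generating set) gives ${\rm Ext}(\aaa^\circ\cap\xx)\subset\E_{\aaa^\circ}\cup\Delta\cup\W^\Delta$, a short direct argument via $W_1[1,4]$, $W_1[2,3]$ handles $\Delta$, and for $\E_{\aaa^\circ}\cup\W^\Delta$ the paper likewise first kills the entries at positions $2,3$ and then applies Lemma \ref{CS-W} with $(i,j,k,\ell)=(1,4,1,4)$ or $(1,4,4,1)$, which is exactly your choice of $k$ according to where the anti-diagonal entry sits. Your added details (the closedness verification and the explicit $\Delta$ argument) merely spell out steps the paper leaves implicit.
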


\begin{proof}
It suffices to show that every ray in $\E_{\aaa^\circ}\cup\Delta\cup\W^\Delta$ is extreme by Proposition \ref{tool}.
It is easy to see that diagonal states in $\Delta$ generate extreme rays of the convex cone $\aaa^\circ\cap\xx$ by the conditions
$W_1[1,4]$ and $W_1[2,3]$.
For the remaining cases for $\E_{\aaa^\circ}\cup\W^\Delta$, we take $i=1,4$ and may assume that $W=\xx(s,t,u)$ satisfies
$$
\sqrt{s_1t_1}=|u_i|>0 \qquad \text{and} \qquad s_k=t_k=u_\ell=0\ {\text{\rm for}}\ k \ne 1\ {\text{\rm and}}\ \ell \ne i.
$$
Suppose that
$W = \xx(s',t',u')+\xx(s'',t'',u'')$ in $\aaa^\circ \cap \xx$.
For $k=2,3,4$, the condition $s_k=t_k=0$  implies
$s_k'=t_k'=0=s_k''=t_k''$.
Combining this with $W_1[2,3]$, we also have
$u_2'=u_3'=0=u_2''=u_3''$.
Applying Lemma \ref{CS-W} with $(i,j,k,\ell)=(1,4,1,4)$, $(1,4,4,1)$,
we get $(s',t',u') ~ \parallelsum (s'',t'',u'')$.
\end{proof}

\section{Full bi-separability and bi-separability}\label{zec-bisep}

In this section, we consider convex cones
$\aaa\cap\bbb\cap\ccc$ for full bi-separable states and $\aaa+\bbb+\ccc$ for bi-separable states,
together with their dual cones
$\aaa^\circ+\bbb^\circ+\ccc^\circ$ and $\aaa^\circ\cap\bbb^\circ\cap\ccc^\circ$, respectively.
We first note  that $\varrho=\xx(a,b,z)\in\aaa\cap\bbb\cap\ccc$
if and only if $S_1[i,j]$ holds for every $i,j=1,2,3,4$,
which is equivalent to the PPT condition of $\varrho$ \cite[Theorem 5.3]{han_kye_optimal}.
In order to find extreme rays of the cone $\aaa\cap\bbb\cap\ccc\cap\xx$,
we consider the condition
\begin{center}
\framebox{
\parbox[t][0.7cm]{8.00cm}{
\addvspace{0.1cm} \centering
$S^e_3:\qquad \sqrt{a_ib_i}=|z_i|=1, \ i=1,2,3,4,$
}}
\end{center}\medskip
and define
$$
\E_{\aaa\cap\bbb\cap\ccc}=\{\varrho=\xx(a,b,z):S^e_3\ {\text{\rm holds}}\}.
$$
We also recall the inequality
\begin{center}
\framebox{
\parbox[t][0.7cm]{7.00cm}{
\addvspace{0.1cm} \centering
$W_3 :\qquad \sum_{i=1}^4\sqrt{s_it_i}\ge\sum_{i=1}^4|u_i|,$
}}
\end{center}\medskip
which appears in the characterization of decomposability of \xx-shaped entanglement witnesses in \cite[Theorem 5.5]{han_kye_optimal}.

\begin{lemma}\label{lemma-ppt-dual}
For a given self-adjoint {\sf X}-shaped matrix $W=\xx(s,t,u)$, the following are equivalent.
\begin{enumerate}
\item[(i)]
$\lan W,\varrho\ran\ge 0$ for each $\varrho\in \E_{\aaa\cap\bbb\cap\ccc}\cup\Delta$;
\item[(ii)]
$s_i,t_i\ge 0$ for $i=1,2,3,4$, and the inequality $W_3$ holds;
\item[(iii)]
$\lan W,\varrho\ran\ge 0$ for each $\varrho\in\aaa\cap\bbb\cap\ccc$.
\end{enumerate}
\end{lemma}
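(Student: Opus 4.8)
The plan is to establish the cyclic chain of implications (i) $\Longrightarrow$ (ii) $\Longrightarrow$ (iii) $\Longrightarrow$ (i), exactly mirroring the proof of Lemma \ref{new-aaa}. The direction (iii) $\Longrightarrow$ (i) is immediate from an inclusion: the condition $S^e_3$ forces $\min_i\sqrt{a_ib_i}=1=\max_j|z_j|$, so every $\varrho\in\E_{\aaa\cap\bbb\cap\ccc}$ satisfies $S_1[i,j]$ for all pairs $i,j$ and hence lies in $\aaa\cap\bbb\cap\ccc$; the diagonal states of $\Delta$ obviously lie there too, giving $\E_{\aaa\cap\bbb\cap\ccc}\cup\Delta\subset\aaa\cap\bbb\cap\ccc$. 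Also, via Corollary \ref{xxx-coro}(ii), statement (iii) is equivalent to $W\in(\aaa\cap\bbb\cap\ccc)^\circ$, which is what makes the equivalence (ii) $\Leftrightarrow$ (iii) a dual-cone characterization.

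For (i) $\Longrightarrow$ (ii), I would first pair $W$ against $\Delta$, using $\lan W,\xx(E_i,0,0)\ran=s_i$ and $\lan W,\xx(0,E_i,0)\ran=t_i$ to get $s_i,t_i\ge 0$. To produce the single global inequality $W_3$, assuming all $s_i,t_i>0$, I would test against the one state $\varrho=\xx(a,b,z)\in\E_{\aaa\cap\bbb\cap\ccc}$ given by $a_i=\sqrt{t_i/s_i}$, $b_i=\sqrt{s_i/t_i}$ and $z_i=-e^{-{\rm i}\theta_i}$ with $\theta_i=\arg u_i$. Then $\sqrt{a_ib_i}=|z_i|=1$, and the pairing collapses to $\lan W,\varrho\ran=2\sum_{i=1}^4(\sqrt{s_it_i}-|u_i|)$, so (i) yields $W_3$ directly; the degenerate case in which some $s_i$ or $t_i$ vanishes is handled by applying the conclusion to $W+\varepsilon I$ and letting $\varepsilon\to 0$, as in Lemma \ref{new-aaa}. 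A single test state suffices here, in contrast to the two states needed for the paired inequalities there.

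The substantive step is (ii) $\Longrightarrow$ (iii). By Corollary \ref{xxx-coro}(ii) it is enough to check $\lan W,\varrho\ran\ge 0$ for every {\sf X}-state $\varrho=\xx(a,b,z)$ in $\aaa\cap\bbb\cap\ccc$, that is, one satisfying $S_1[i,j]$ for all $i,j$, which is exactly $\min_i\sqrt{a_ib_i}\ge\max_j|z_j|$. Writing $M=\max_j|z_j|$ and using $s_ia_i+t_ib_i\ge 2\sqrt{s_it_i}\sqrt{a_ib_i}$ together with $2\re(u_iz_i)\ge-2|u_i||z_i|$, I would reduce the claim to $\sum_i\sqrt{s_it_i}\sqrt{a_ib_i}\ge\sum_i|u_i||z_i|$ and then run the uniform chain
$$
\sum_{i=1}^4\sqrt{s_it_i}\sqrt{a_ib_i}\ \ge\ M\sum_{i=1}^4\sqrt{s_it_i}\ \ge\ M\sum_{i=1}^4|u_i|\ \ge\ \sum_{i=1}^4|u_i||z_i|,
$$
where the first inequality uses $\sqrt{a_ib_i}\ge M$ for every $i$, the middle one is precisely $W_3$, and the last uses $M\ge|z_i|$.

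The point to get right — and the only place where the combined cone $\aaa\cap\bbb\cap\ccc$, rather than a single $\aaa$, is genuinely used — is pulling the single scalar $M=\max_j|z_j|$ out of the entire sum. This is legitimate exactly because the full PPT condition supplies the uniform bound $\min_i\sqrt{a_ib_i}\ge M$ across all four indices at once, which is what lets the one global witness inequality $W_3$ carry the whole estimate. I expect no real obstacle beyond this bookkeeping: the analogous chain in Lemma \ref{new-aaa} had to be split over the index pairs $\{1,4\}$ and $\{2,3\}$, whereas the stronger hypothesis on $\varrho$ here permits a single uniform estimate.
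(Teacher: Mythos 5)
Your proposal is correct and takes essentially the same route as the paper: the same single test state $\xx\bigl((\sqrt{t_i/s_i})_i,(\sqrt{s_i/t_i})_i,(-e^{-{\rm i}\theta_i})_i\bigr)$ with $\theta_i=\arg u_i$ (plus the $W+\varepsilon I$ regularization) for (i) $\Longrightarrow$ (ii), and the same uniform chain $\sum_i\sqrt{s_it_i}\sqrt{a_ib_i}\ge M\sum_i\sqrt{s_it_i}\ge M\sum_i|u_i|\ge\sum_i|u_i||z_i|$ with $\sqrt{a_ib_i}\ge M\ge|z_j|$ for (ii) $\Longrightarrow$ (iii). The only cosmetic difference is that the paper leaves (iii) $\Longrightarrow$ (i) implicit, whereas you spell out the inclusion $\E_{\aaa\cap\bbb\cap\ccc}\cup\Delta\subset\aaa\cap\bbb\cap\ccc$.
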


\begin{proof}
For the direction (i) $\Longrightarrow$ (ii), we first
obtain $s_i,t_i\ge 0$ from $\lan W, \varrho \ran \ge 0$ for $ \varrho \in \Delta$.
In order to prove the inequality $W_3$, we may assume that $s_i,t_i>0$ as in the proof of Lemma \ref{new-aaa}.
We can consider the state $\varrho$ defined by
$$
\varrho:=\xx\left(\left(r_1,r_2,r_3,r_4\right),
   \left(r_1^{-1}, r_2^{-1}, r_3^{-1}, r_4^{-1}\right),
   \left(-e^{-{\rm i}\theta_1},-e^{-{\rm i}\theta_2},-e^{-{\rm i}\theta_3},-e^{-{\rm i}\theta_4}\right)\right)
$$
with $r_i=\sqrt{t_i\over s_i}$ and $\theta_k = \arg u_k$.
This state $\varrho$ belongs to $\E_{\aaa\cap\bbb\cap\ccc}$, and so $\lan W,\varrho\ran\ge 0$ gives rise to
the inequality $W_3$.
For (ii) $\Longrightarrow$ (iii), it suffices to
show that $\lan W,\varrho\ran\ge 0$ for $\varrho=\xx(a,b,z)$ satisfying $S_1[i,j]$ for all $i,j$ and $W=\xx(s,t,u)$
satisfying $W_3$ by Corollary \ref{xxx-coro}. Indeed, taking $M$ satisfying
$\sqrt{a_ib_i}\ge M\ge |z_j|$ for each $i,j=1,2,3,4$, we have
$$
\sum_{i=1}^4\sqrt{s_it_i}\sqrt{a_ib_i}\ge M\sum_{i=1}^4\sqrt{s_it_i}\ge M\sum_{i=1}^4|u_i|\ge\sum_{i=1}^4|z_i||u_i|,
$$
which implies $\lan W,\varrho\ran\ge 0$, as in (\ref{ineq--xx}).
\end{proof}

Since $(\aaa\cap\bbb\cap\ccc)^\circ=\aaa^\circ+\bbb^\circ+\ccc^\circ$,
the equivalence (ii) $\Longleftrightarrow$ (iii) in Lemma \ref{lemma-ppt-dual}
gives another proof for \cite[Theorem 5.5]{han_kye_optimal} which uses the duality principle.

\begin{proposition}\label{coro_decomp}\cite[Theorem 5.5]{han_kye_optimal}
An {\sf X}-shaped self-adjoint matrix $W=\xx(s,t,u)$ with nonnegative diagonals belongs to $\aaa^\circ+\bbb^\circ+\ccc^\circ$ if and only if
the inequality $W_3$ holds.
\end{proposition}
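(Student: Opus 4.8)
The plan is to read the statement off directly from Lemma \ref{lemma-ppt-dual}, combined with the intersection--convex hull duality established in Section \ref{sec-duaity}. The only ingredient beyond the Lemma is the identity $(\aaa\cap\bbb\cap\ccc)^\circ=\aaa^\circ+\bbb^\circ+\ccc^\circ$, so I would begin by recording it.

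First, I would establish this identity by iterating the second duality relation $(C_1\cap C_2)^\circ=C_1^\circ+C_2^\circ$ from Section \ref{sec-duaity}. Writing $\aaa\cap\bbb\cap\ccc=(\aaa\cap\bbb)\cap\ccc$ and applying the relation twice gives
$$
(\aaa\cap\bbb\cap\ccc)^\circ=(\aaa\cap\bbb)^\circ+\ccc^\circ=(\aaa^\circ+\bbb^\circ)+\ccc^\circ=\aaa^\circ+\bbb^\circ+\ccc^\circ.
$$
The validity of each application rests on the fact that the sum of two closed convex cones is again closed in finite dimensions, which is exactly the Carath\'eodory argument recalled in Section \ref{sec-duaity}; since all the cones in play are closed, the iteration is legitimate.

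Second, with this identity in hand, membership $W\in\aaa^\circ+\bbb^\circ+\ccc^\circ$ becomes membership $W\in(\aaa\cap\bbb\cap\ccc)^\circ$, which by the very definition of the dual cone means $\lan W,\varrho\ran\ge 0$ for every $\varrho\in\aaa\cap\bbb\cap\ccc$. This is precisely statement (iii) of Lemma \ref{lemma-ppt-dual}.

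Finally, I would invoke the equivalence (ii) $\Longleftrightarrow$ (iii) of that Lemma. Under the standing hypothesis that $W$ has nonnegative diagonals---that is, $s_i,t_i\ge 0$ for all $i$---condition (ii) collapses to the single inequality $W_3$, which completes the argument. I do not expect a real obstacle here: the analytic content, namely producing the witnessing state in $\E_{\aaa\cap\bbb\cap\ccc}$ and verifying the chain of inequalities, has already been carried out in the proof of Lemma \ref{lemma-ppt-dual}. The only point demanding any care is the closedness needed to iterate the intersection--convex hull duality, and this is a routine consequence of the finite-dimensional setting.
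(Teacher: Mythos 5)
Your proposal is correct and takes essentially the same route as the paper, which proves Proposition~\ref{coro_decomp} in exactly this way: it records the identity $(\aaa\cap\bbb\cap\ccc)^\circ=\aaa^\circ+\bbb^\circ+\ccc^\circ$ (obtained from the duality $(C_1\cap C_2)^\circ=C_1^\circ+C_2^\circ$ of Section~\ref{sec-duaity}, where the needed closedness of sums is justified by the Carath\'eodory argument in the present setting of cones with compact bases) and then invokes the equivalence (ii) $\Longleftrightarrow$ (iii) of Lemma~\ref{lemma-ppt-dual}. Your only deviation is to spell out the iteration of the two-cone duality and the reduction of condition (ii) to $W_3$ under the nonnegative-diagonal hypothesis, which the paper leaves implicit.
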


For convex cones $C_1$ and $C_2$, it is clear that $\ext(C_1+C_2)\subset \ext(C_1)\cup\ext(C_2)$
in general. Therefore, we see that
$\ext((\aaa^\circ+\bbb^\circ+\ccc^\circ)\cap\xx)$ is contained in the union of
$\ext(\aaa^\circ\cap\xx)$, $\ext(\bbb^\circ\cap\xx)$ and $\ext(\ccc^\circ\cap\xx)$ by Corollary \ref{xxx24}.
We show that they actually coincide.

\begin{theorem}\label{ppt-ext}
We have the following:
\begin{enumerate}
\item[(i)]
$\ext(\aaa\cap\bbb\cap\ccc\cap\xx)=\E_{\aaa\cap\bbb\cap\ccc}\cup\Delta$;
\item[(ii)]
$\ext((\aaa^\circ+\bbb^\circ+\ccc^\circ)\cap\xx)=
\ext(\aaa^\circ\cap\xx)\cup
\ext(\bbb^\circ\cap\xx)\cup
\ext(\ccc^\circ\cap\xx)$.
\end{enumerate}
\end{theorem}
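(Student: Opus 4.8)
The plan is to prove each part by pairing the duality/generating-set machinery (for one inclusion) with a direct extremality computation (for the reverse). For (i), I would read Lemma \ref{lemma-ppt-dual} through Proposition \ref{tool} and Corollary \ref{xxx-coro}: the equivalence of its conditions (i) and (iii) says exactly that $\E_{\aaa\cap\bbb\cap\ccc}\cup\Delta$ is a generating set for $\aaa\cap\bbb\cap\ccc\cap\xx$. Since $S^e_3$ is a system of continuous equalities this set is closed, so Proposition \ref{tool} gives $\ext(\aaa\cap\bbb\cap\ccc\cap\xx)\subset\E_{\aaa\cap\bbb\cap\ccc}\cup\Delta$.

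For the reverse inclusion, the states in $\Delta$ generate extreme rays of ${\mathcal P}$ and hence of the smaller cone, so it suffices to treat a state $\varrho=\xx(a,b,z)$ satisfying $S^e_3$. Writing $\varrho=\xx(a',b',z')+\xx(a'',b'',z'')$ in $\aaa\cap\bbb\cap\ccc\cap\xx$ and using that $\varrho$ lies in all three basic cones while $S^e_3$ forces $\sqrt{a_ib_i}=|z_i|=1$ for every $i$, the hypothesis of Lemma \ref{CS-S} holds for all six pairs $\{1,4\},\{2,3\}$, $\{1,3\},\{2,4\}$, $\{1,2\},\{3,4\}$ arising from $\aaa$, $\bbb$, $\ccc$. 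Applying that lemma to each pair in both orders yields the proportionalities $(a'_i,a'_j,b'_i,b'_j,z'_i)\parallelsum(a''_i,a''_j,b''_i,b''_j,z''_i)$. The crux is that these six pairs form the complete graph on $\{1,2,3,4\}$, so the local proportionality constants must agree on shared indices; since $a_i,b_i>0$ and $z_i\neq0$ they are unambiguous and propagate by connectivity to a single scalar $\alpha$ with $(a',b',z')=\alpha(a,b,z)$, proving $\varrho$ extreme.

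For (ii), one inclusion is immediate from the general fact $\ext(C_1+C_2+C_3)\subset\ext(C_1)\cup\ext(C_2)\cup\ext(C_3)$ combined with the identity $(\aaa^\circ+\bbb^\circ+\ccc^\circ)\cap\xx=(\aaa^\circ\cap\xx)+(\bbb^\circ\cap\xx)+(\ccc^\circ\cap\xx)$ from Corollary \ref{xxx24}. For the reverse inclusion I must show that each ray from Theorem \ref{basic-dual-ext} stays extreme in $D:=(\aaa^\circ+\bbb^\circ+\ccc^\circ)\cap\xx$, which by Proposition \ref{coro_decomp} consists of $W=\xx(s,t,u)$ with $s,t\ge0$ obeying $W_3$. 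Every ray in $\E_{\aaa^\circ}\cup\E_{\bbb^\circ}\cup\E_{\ccc^\circ}\cup\W^\Delta$ saturates $W_3$ while being supported on a single diagonal index; writing such a $W$ as $W'+W''$ in $D$, the vanishing off-support diagonal entries force the matching entries of $W'$ and $W''$ to vanish, and adding the two copies of $W_3$ against the Cauchy-Schwartz bound $\sqrt{s'_it'_i}+\sqrt{s''_it''_i}\le\sqrt{s_it_i}$ collapses everything to equality, whence $W'\parallelsum W''$ exactly as in Lemma \ref{CS-W}. For the diagonal rays in $\Delta$ the same bookkeeping instead forces $u'=u''=0$, after which proportionality is immediate.

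The main obstacle is the reverse inclusion in (ii): the rays must be certified extreme not in $\aaa^\circ\cap\xx$, where they were originally shown extreme, but in the strictly larger cone $D$, so the additional directions available in $D$ have to be excluded. The saturation of $W_3$ together with the Cauchy-Schwartz inequality is precisely what forecloses any nontrivial splitting, and this is the heart of the argument.
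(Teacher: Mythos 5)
Your proposal is correct and follows essentially the same route as the paper: part (i) via the generating set from Lemma \ref{lemma-ppt-dual} plus Proposition \ref{tool} and then Lemma \ref{CS-S} applied over all pairs, and part (ii) via $\ext(C_1+C_2)\subset\ext(C_1)\cup\ext(C_2)$ with Corollary \ref{xxx24} together with the saturation of $W_3$ against the Cauchy--Schwartz bound. The only cosmetic difference is at the end of (ii): the paper, after forcing $u_k'=u_k''=0$ off the support, observes the summands lie in $\aaa^\circ\cap\xx$ and invokes Theorem \ref{basic-dual-ext}, whereas you inline the equality analysis of Lemma \ref{CS-W} directly.
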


\begin{proof}
For (i), it remains to show that
every PPT state in $\E_{\aaa\cap\bbb\cap\ccc}$ generates an extreme ray of the cone
$\aaa\cap\bbb\cap\ccc$.
Suppose that $\varrho=\xx(a,b,z)$ satisfies the condition $S^e_3$ and
$$
\varrho = \xx(a',b',z')+\xx(a'',b'',z'') \quad \text{in} ~\aaa\cap\bbb\cap\ccc\cap \xx.
$$
Applying Lemma \ref{CS-S} with all the pairs $(i,j)$,
we conclude
$(a',b',z') ~ \parallelsum (a'',b'',z'')$,
and this completes the proof of (i).

In order to prove (ii), it suffices to show
$\ext(\aaa^\circ\cap\xx)\subset\ext((\aaa^\circ+\bbb^\circ+\ccc^\circ)\cap\xx)$.
It is easy to see that diagonal states in $\Delta$ generate extreme rays
in the convex cone $(\aaa^\circ+\bbb^\circ+\ccc^\circ)\cap\xx$ by the condition $W_3$.
We will show that $W=\xx(rE_1,r^{-1}E_1,e^{{\rm i}\theta}E_j)$ generates an extreme ray of the cone
$(\aaa^\circ+\bbb^\circ+\ccc^\circ)\cap\xx$ for $j=1,4$. Suppose that
\begin{equation}\label{jhjgjgjhj}
W=\xx(s^\prime,t^\prime,u^\prime)+\xx(s^{\pr\pr},t^{\pr\pr},u^{\pr\pr})\qquad {\text{\rm in}}\ (\aaa^\circ+\bbb^\circ+\ccc^\circ)\cap\xx.
\end{equation}
For $i=2,3,4$, $s_i=0=t_i$ implies that $s_i'=t_i'=0=s_i''=t_i''$.
By $W_3$, we have
$$
\begin{aligned}
1=|e^{{\rm i}\theta}|=|u_j^\pr+u_j^{\pr\pr}|\le
|u^\pr_j|+|u^{\pr\pr}_j|
&\le\sum_{k=1}^4(|u^\pr_k|+|u^{\pr\pr}_k|)\\
&\le\sqrt{s^\pr_1t^\pr_1}+\sqrt{s^{\pr\pr}_1t^{\pr\pr}_1}
\le\sqrt{s_1^\pr+s_1^{\pr\pr}}\sqrt{t_1^\pr+t_1^{\pr\pr}}=1,
\end{aligned}
$$
and so it follows that $u^\pr_k=u^{\pr\pr}_k=0$ for $k\neq j$. Therefore, the summands in (\ref{jhjgjgjhj})
belong to the cone $\aaa^\circ\cap \xx$ by $W_3$ again, and we may apply Theorem \ref{basic-dual-ext}.
\end{proof}

Now, we turn our attention to the cone $\aaa^\circ\cap\bbb^\circ\cap\ccc^\circ$ and its dual cone.
For each $i=1,2,3,4$, we consider the condition
\begin{center}
\framebox{
\parbox[t][0.7cm]{13.00cm}{
\addvspace{0.1cm} \centering
$W^e_3[i]:\qquad |u_i|=\sqrt{s_jt_j}=\sqrt{s_kt_k}=\sqrt{s_\ell t_\ell}=1$,\quad the others are zero,
}}
\end{center}\medskip
where $j,k,\ell$ are chosen so that $i,j,k,\ell$ are mutually distinct, and define
$$
\E_{\aaa^\circ\cap\bbb^\circ\cap\ccc^\circ}
=\{W=\xx(s,t,u): W^e_3[i]\ {\text{\rm holds for some}}\ i=1,2,3,4\}.
$$
We also consider the following inequality
\begin{center}
\framebox{
\parbox[t][1.1cm]{8.00cm}{
\addvspace{0.1cm} \centering
$S_3 :\qquad \displaystyle{\sum_{j\neq i}\sqrt{a_jb_j}\ge |z_i|},\quad i=1,2,3,4$.
}}
\end{center}\medskip
These are exactly the inequalities which appear in the necessary criteria \cite{guhne10} for bi-separability.
We also refer to \cite{gao} for necessary criteria of multi-qubit bi-separable states.
If $\varrho$ itself is {\sf X}-shaped, then the converse is also true \cite{Rafsanjani}.
The authors have shown in \cite[Corollary 3.4]{han_kye_optimal} that even a PPT mixture
satisfies the multi-qubit analogue of $S_3$, to recover the above characterization of
bi-separability of multi-qubit {\sf X}-states.
We give here another alternative proof using the duality.

\begin{lemma}\label{lemma-ppt-dual---}
For a given self-adjoint {\sf X}-shaped matrix $\varrho=\xx(a,b,z)$, the following are equivalent.
\begin{enumerate}
\item[(i)]
$\lan W,\varrho\ran\ge 0$ for each $W\in \E_{\aaa^\circ\cap\bbb^\circ\cap\ccc^\circ}\cup\W^\Delta\cup\Delta$;
\item[(ii)]
$\varrho$ is a state satisfying the inequality $S_3$;
\item[(iii)]
$\lan W,\varrho\ran\ge 0$ for each $W\in \aaa^\circ\cap\bbb^\circ\cap\ccc^\circ$.
\end{enumerate}
\end{lemma}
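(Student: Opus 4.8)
The plan is to prove the cycle (iii) $\Rightarrow$ (i) $\Rightarrow$ (ii) $\Rightarrow$ (iii); by Proposition \ref{tool} this simultaneously exhibits $\E_{\aaa^\circ\cap\bbb^\circ\cap\ccc^\circ}\cup\W^\Delta\cup\Delta$ as a generating set for $\aaa^\circ\cap\bbb^\circ\cap\ccc^\circ$. The direction (iii) $\Rightarrow$ (i) is immediate once the three test families are seen to lie inside $\aaa^\circ\cap\bbb^\circ\cap\ccc^\circ$: the states in $\Delta$ and the rank one blocks $\xx_i(r,r^{-1},e^{{\rm i}\theta})$ in $\W^\Delta$ are positive and hence belong to every dual cone since ${\mathcal P}={\mathcal P}^\circ$, while each $W$ obeying some $W^e_3[i]$ satisfies all six inequalities $W_1[k,\ell]$ and so lies in $\aaa^\circ\cap\bbb^\circ\cap\ccc^\circ$ by Proposition \ref{basic-dual-cri}. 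Throughout I use the pairing $\lan\xx(s,t,u),\xx(a,b,z)\ran=\sum_{i=1}^4(s_ia_i+t_ib_i+2\re(u_iz_i))$ recorded in (\ref{ineq--xx}).

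For (i) $\Rightarrow$ (ii) I would read off the state condition from $\Delta\cup\W^\Delta$ and the inequality $S_3$ from $\E_{\aaa^\circ\cap\bbb^\circ\cap\ccc^\circ}$. Pairing with $\xx(E_i,0,0)$ and $\xx(0,E_i,0)$ gives $a_i,b_i\ge 0$; pairing with $\xx_i(r,r^{-1},e^{{\rm i}\theta})$ gives $ra_i+r^{-1}b_i+2\re(e^{{\rm i}\theta}z_i)\ge 0$ for all $r>0$ and all $\theta$, and minimizing first over $\theta$ and then over $r$ yields $\sqrt{a_ib_i}\ge|z_i|$, so that every anti-diagonal $2\times 2$ block of $\varrho$ is positive and $\varrho$ is a state (when $a_i=0$ or $b_i=0$ one lets $r\to\infty$ to force $z_i=0$). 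Finally, for each fixed $i$, pairing with the members of $\E_{\aaa^\circ\cap\bbb^\circ\cap\ccc^\circ}$ satisfying $W^e_3[i]$, with the phase of $u_i$ chosen opposite to that of $z_i$ and the free ratios $t_j/s_j$ optimized under $\sqrt{s_jt_j}=1$, produces in the infimum precisely $\sum_{j\ne i}\sqrt{a_jb_j}\ge|z_i|$, which is $S_3$.

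The implication (ii) $\Rightarrow$ (iii) is the substantive step. By Corollary \ref{xxx-coro} it suffices to check $\lan W,\varrho\ran\ge 0$ for every {\sf X}-shaped $W=\xx(s,t,u)$ in $\aaa^\circ\cap\bbb^\circ\cap\ccc^\circ$; such a $W$ has nonnegative diagonal and, by Proposition \ref{basic-dual-cri}, satisfies $W_1[i,j]$ for all six pairs. Writing $p_i=\sqrt{s_it_i}$ and $q_i=\sqrt{a_ib_i}$, the inequality $s_ia_i+t_ib_i\ge 2p_iq_i$ together with $2\re(u_iz_i)\ge -2|u_i||z_i|$ reduces the goal to the scalar estimate
$$
\sum_{i=1}^4 p_iq_i\ \ge\ \sum_{i=1}^4 |u_i|\,|z_i|
$$
under $q_i\ge|z_i|$ (state), $\sum_{j\ne i}q_j\ge|z_i|$ ($S_3$), and $p_i+p_j\ge|u_i|+|u_j|$ (all $W_1[i,j]$).

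I expect this scalar inequality to be the main obstacle, and I would resolve it by the substitution $d_i=|u_i|-p_i$, under which the constraints $W_1[i,j]$ become $d_i+d_j\le 0$ for all $i\ne j$; consequently at most one of the $d_i$ can be positive. If all $d_i\le 0$, then $|u_i|\le p_i$ and the estimate is immediate from $|z_i|\le q_i$. In the remaining case a single index, say $d_1>0$, is positive, and feasibility forces $0<d_1\le p_j$ and $|u_j|\le p_j-d_1$ for $j\ge 2$. Bounding $\sum_i|u_i||z_i|\le\sum_i p_i|z_i|+d_1\big(|z_1|-\sum_{j\ge2}|z_j|\big)$ and then using $S_3$ with $i=1$ in the form $|z_1|\le\sum_{j\ge2}q_j$ to absorb the term $d_1|z_1|$, the difference collapses to
$$
\sum_{i=1}^4 p_iq_i-\sum_{i=1}^4|u_i||z_i|\ \ge\ p_1(q_1-|z_1|)+\sum_{j\ge2}(p_j-d_1)(q_j-|z_j|),
$$
which is nonnegative because $p_j\ge d_1$ and $q_k\ge|z_k|$ for every $k$. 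This proves (ii) $\Rightarrow$ (iii) and closes the cycle; I would finally remark that, since $(\aaa^\circ\cap\bbb^\circ\cap\ccc^\circ)^\circ=\aaa+\bbb+\ccc$, the equivalence (ii) $\Leftrightarrow$ (iii) is exactly the sufficiency of the bi-separability criterion $S_3$ for {\sf X}-states.
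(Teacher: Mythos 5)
Your proposal is correct and takes essentially the same route as the paper: the same test witnesses from $\E_{\aaa^\circ\cap\bbb^\circ\cap\ccc^\circ}\cup\W^\Delta\cup\Delta$ extract the state condition and $S_3$ in (i) $\Rightarrow$ (ii), and your scalar estimate for (ii) $\Rightarrow$ (iii) --- at most one $d_i=|u_i|-\sqrt{s_it_i}$ positive, then combining $W_1[1,j]$ for $j=2,3,4$ with $S_3$ at the violated index and the block-positivity $\sqrt{a_ib_i}\ge|z_i|$ --- is precisely the paper's argument (which assumes $\sqrt{s_1t_1}<|u_1|$ without loss of generality), merely repackaged through the substitution $d_i$. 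The only cosmetic differences are your infimum formulation in (i) $\Rightarrow$ (ii), which neatly sidesteps the $\varepsilon I$-perturbation the paper invokes for vanishing diagonals, and your explicit check of (iii) $\Rightarrow$ (i), which the paper leaves implicit.
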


\begin{proof}
For the direction (i) $\Longrightarrow$ (ii), we first note that
$\varrho$ is a state as in the proof of Lemma \ref{new-aaa-circ}. Now, we consider
$$
W:=\xx\left(\left(0,\sqrt{\frac{a_2}{b_2}},\sqrt{\frac{a_3}{b_3}},\sqrt{\frac{a_4}{b_4}},\right),
\left(0,\sqrt{\frac{b_2}{a_2}},\sqrt{\frac{b_3}{a_3}},\sqrt{\frac{b_4}{a_4}},\right),
\left(-e^{-{\rm i}\theta_1},0,0,0\right)\right),
$$
which belongs to $\E_{\aaa^\circ\cap\bbb^\circ\cap\ccc^\circ}$,
where $\theta_1=\arg z_1$. Then, we have
$$
0 \le {1 \over 2} \lan \varrho, W \ran = \sum_{j\neq 1} \sqrt{a_j b_j} - |z_1|.
$$
The other inequalities come out by the same way.

For the direction (ii) $\Longrightarrow$ (iii),
it suffices to show the following:
$$
S_3,\ W_1[i,j]\ {\text{\rm for}}\ i,j=1,2,3,4\ {\text{\rm with}}\ i\neq j \
\Longrightarrow\ \lan \xx(s,t,u),\xx(a,b,z)\ran\ge 0
$$
by Corollary \ref{xxx-coro} and Proposition \ref{basic-dual-cri}.
The inequality $\lan W,\varrho\ran\ge 0$ is trivial when $W$ is positive, that is, $\sqrt{s_i t_i} \ge |u_i|$ for all $i=1,2,3,4$.
Suppose that $W$ is not positive, and so there exists $i_0\in\{1,2,3,4\}$ such that
$\sqrt{s_{i_0} t_{i_0}} < |u_{i_0}|$, say $\sqrt{s_{1} t_{1}} < |u_{1}|$ without loss of generality.
We have
$$
(\sqrt{s_1t_1}+\sqrt{s_it_i})\sqrt{a_ib_i}\ge (|u_1|+|u_i|)\sqrt{a_ib_i},\qquad i=2,3,4,
$$
by $W_1[1,i]$. Summing up, we also have
$$
\begin{aligned}
\sqrt{s_1t_1}\sum_{i=2}^4\sqrt{a_ib_i}+\sum_{i=2}^4\sqrt{s_it_i}\sqrt{a_ib_i}
&\ge |u_1|\sum_{i=2}^4\sqrt{a_ib_i}+\sum_{i=2}^4|u_i|\sqrt{a_ib_i}\\
&\ge |u_1|\sum_{i=2}^4\sqrt{a_ib_i}+\sum_{i=2}^4|u_i||z_i|,
\end{aligned}
$$
which implies
$$
\begin{aligned}
\sum_{i=2}^4\left(\sqrt{s_it_i}\sqrt{a_ib_i}-|u_i||z_i|\right)
&\ge \left(|u_1|-\sqrt{s_1t_1}\right)\sum_{i=2}^4\sqrt{a_ib_i}\\
&\ge \left(|u_1|-\sqrt{s_1t_1}\right)|z_1|
\ge |u_1||z_1|-\sqrt{s_1t_1}\sqrt{a_1b_1},
\end{aligned}
$$
by $S_3$ and $\sqrt{s_{1} t_{1}} < |u_{1}|$.
Therefore, we have $\sum_{i=1}^4\sqrt{s_it_i}\sqrt{a_ib_i}\ge \sum_{i=1}^4|u_i||z_i|$,
which completes the proof by (\ref{ineq--xx}).
\end{proof}

Since the dual cone of $\aaa^\circ\cap\bbb^\circ\cap\ccc^\circ$ is just $\aaa+\bbb+\ccc$, we recover
the following characterization of biseparability of three qubit states.
Especially, every three qubit biseparable state with the {\sf X}-part $\xx(a,b,z)$ must satisfy
the inequalities $S_3$, as it was observed in \cite{guhne10}.

\begin{proposition}\label{i_join}\cite{{guhne10},{han_kye_optimal},{Rafsanjani}}
For a three qubit \xx-state $\varrho=\xx(a,b,z)$, the following are equivalent:
\begin{enumerate}
\item[(i)]
$\varrho$ belongs to $\aaa+\bbb+\ccc$;
\item[(ii)]
the inequality $S_3$ holds.
\end{enumerate}
\end{proposition}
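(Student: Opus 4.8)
The plan is to obtain the proposition directly from the duality machinery of Section~\ref{sec-duaity} together with Lemma~\ref{lemma-ppt-dual---}, so that essentially no new computation is needed. The key point is that all the analytic work --- the estimate splitting off an index $i_0$ at which a witness fails to be positive --- is already packaged inside the implication (ii)~$\Longrightarrow$~(iii) of that lemma, so at the level of the proposition there is nothing to grind through.

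First I would record the duality identity $(\aaa+\bbb+\ccc)^\circ=\aaa^\circ\cap\bbb^\circ\cap\ccc^\circ$, which follows from the relation $(C_1\cap C_2)^\circ=C_1^\circ+C_2^\circ$ for closed convex cones applied to $\aaa^\circ,\bbb^\circ,\ccc^\circ$ together with $C^{\circ\circ}=C$. Since $\aaa+\bbb+\ccc$ appears in the diagram (\ref{diagram}), it is a closed convex cone with $\mathcal S\subset\aaa+\bbb+\ccc\subset\mathcal P$, hence satisfies both $({\rm C}_1)$ and $({\rm C}_2)$ and equals its own double dual, so the duality applies as stated.

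Next, because $\varrho=\xx(a,b,z)$ is an $\xx$-state, I would apply Corollary~\ref{xxx-coro}(i) with $\variable=\aaa+\bbb+\ccc$: this shows that $\varrho\in\aaa+\bbb+\ccc$ if and only if $\lan W,\varrho\ran\ge 0$ for every $\xx$-shaped $W\in\aaa^\circ\cap\bbb^\circ\cap\ccc^\circ$. I would then observe that restricting the witnesses to $\xx$-shaped ones costs nothing: for any $W\in\aaa^\circ\cap\bbb^\circ\cap\ccc^\circ$ we have $\lan W,\varrho\ran=\lan W_{\xx},\varrho\ran$ since $\varrho$ is $\xx$-shaped, and $W_{\xx}$ again lies in $\aaa^\circ\cap\bbb^\circ\cap\ccc^\circ$ by Proposition~\ref{xxx-part}(ii). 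Hence the displayed condition is precisely item (iii) of Lemma~\ref{lemma-ppt-dual---}.

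Finally I would invoke the equivalence (ii)~$\Longleftrightarrow$~(iii) of Lemma~\ref{lemma-ppt-dual---}, which identifies this with the statement that $\varrho$ is a state satisfying $S_3$. Since $\varrho$ is assumed to be an $\xx$-state, the positivity is already in hand, so the remaining content is exactly the inequality $S_3$, yielding the desired equivalence between (i) and (ii). The only points requiring care are the reduction to $\xx$-shaped witnesses, handled by Proposition~\ref{xxx-part}(ii), and the bookkeeping that the word ``state'' in item (ii) of the lemma adds nothing beyond the standing hypothesis; I expect no genuine obstacle here, precisely because the hard estimates reside entirely in Lemma~\ref{lemma-ppt-dual---}.
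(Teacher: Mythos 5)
Your proposal is correct and follows essentially the same route as the paper, which deduces the proposition in one line from the equivalence (ii)~$\Longleftrightarrow$~(iii) of Lemma~\ref{lemma-ppt-dual---} together with the duality $(\aaa^\circ\cap\bbb^\circ\cap\ccc^\circ)^\circ=\aaa+\bbb+\ccc$; all the analytic work indeed lives in that lemma. Your extra bookkeeping --- reducing to \xx-shaped witnesses via Proposition~\ref{xxx-part}(ii) and Corollary~\ref{xxx-coro}, and noting that the word ``state'' in the lemma is absorbed by the standing hypothesis --- just makes explicit what the paper leaves implicit.
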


As for extreme rays, we also have the following:

\begin{theorem}\label{bi-sep-ext}
We have the following:
\begin{enumerate}
\item[(i)]
$\ext((\aaa+\bbb+\ccc)\cap\xx)=\ext(\aaa\cap\xx)\cup\ext(\bbb\cap\xx)\cup\ext(\ccc\cap\xx)$;
\item[(ii)]
$\ext(\aaa^\circ\cap\bbb^\circ\cap\ccc^\circ\cap\xx)=\E_{\aaa^\circ\cap\bbb^\circ\cap\ccc^\circ}\cup\W^\Delta\cup\Delta$.
\end{enumerate}
\end{theorem}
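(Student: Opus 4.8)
The plan is to treat the two statements separately, in each case obtaining one inclusion from the generating-set machinery already in place and the reverse inclusion by a direct extremality argument.

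For (i), the inclusion $\ext((\aaa+\bbb+\ccc)\cap\xx)\subset\ext(\aaa\cap\xx)\cup\ext(\bbb\cap\xx)\cup\ext(\ccc\cap\xx)$ is immediate: by Corollary \ref{xxx24} we have $(\aaa+\bbb+\ccc)\cap\xx=(\aaa\cap\xx)+(\bbb\cap\xx)+(\ccc\cap\xx)$, and an extreme ray of a sum of convex cones lies in the union of the extreme rays of the summands. The content is the reverse inclusion, for which it suffices (by Theorem \ref{basic_dual} and symmetry in the indices) to show that every state in $\E_\aaa$ remains extreme in the larger cone $(\aaa+\bbb+\ccc)\cap\xx$; the diagonal states $\Delta$ are rank-one positive matrices, hence extreme already in $\mathcal{P}$ and a fortiori in the subcone. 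So I would take $\varrho=\xx(a,b,z)$ satisfying, say, $S^e_1[1,4]$ together with a decomposition $\varrho=\varrho'+\varrho''$ inside $(\aaa+\bbb+\ccc)\cap\xx$.

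First I would localize the support: since $a_2=a_3=b_2=b_3=0$ and the summands are positive, their $2,3$-diagonal entries vanish, and positivity of the corresponding $2\times 2$ blocks forces $z_2',z_3',z_2'',z_3''$ to vanish as well, so each summand is supported on the indices $\{1,4\}$. The step I expect to be the crux is the observation that such a summand automatically lies in the smaller cone $\aaa$: positivity of its index blocks gives $\sqrt{a_1'b_1'}\ge|z_1'|$ and $\sqrt{a_4'b_4'}\ge|z_4'|$, while membership in $\aaa+\bbb+\ccc$ gives, through $S_3$ of Proposition \ref{i_join} (the terms with indices $2,3$ now being zero), the cross inequalities $\sqrt{a_4'b_4'}\ge|z_1'|$ and $\sqrt{a_1'b_1'}\ge|z_4'|$. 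Together these are exactly $S_1[1,4]$, so by Proposition \ref{aa-bbb-ccc} the summands lie in $\aaa\cap\xx$. Once the decomposition is realized inside $\aaa\cap\xx$, the extremality already proved in Theorem \ref{basic_dual} (equivalently, Lemma \ref{CS-S} applied with $(i,j)=(1,4)$ and $(4,1)$) yields $(a',b',z')\parallelsum(a'',b'',z'')$. The case $S^e_1[2,3]$ is identical after permuting indices.

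For (ii), the inclusion $\ext(\aaa^\circ\cap\bbb^\circ\cap\ccc^\circ\cap\xx)\subset\E_{\aaa^\circ\cap\bbb^\circ\cap\ccc^\circ}\cup\W^\Delta\cup\Delta$ follows from Proposition \ref{tool}, since Lemma \ref{lemma-ppt-dual---} shows this set is a generating set for the cone and it is visibly closed. For the reverse inclusion I would again dispose of the easy generators first: every matrix in $\Delta\cup\W^\Delta$ is rank-one positive (the block of $\xx_i(r,r^{-1},e^{{\rm i}\theta})$ has determinant $r\cdot r^{-1}-1=0$), hence extreme in $\mathcal{P}$ and therefore in the subcone. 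The remaining and main task is a matrix $W$ satisfying $W^e_3[i]$; take $i=1$, so $s_1=t_1=0$, $u_2=u_3=u_4=0$, and $|u_1|=\sqrt{s_jt_j}=1$ for $j=2,3,4$, and suppose $W=W'+W''$ in $\aaa^\circ\cap\bbb^\circ\cap\ccc^\circ\cap\xx$. Nonnegativity of witness diagonals forces $s_1'=t_1'=0=s_1''=t_1''$. Then for each $j$ the inequality $W_1[1,j]$ applied to $W'$ and $W''$ gives $|u_1'|\le\sqrt{s_j't_j'}$ and $|u_1''|\le\sqrt{s_j''t_j''}$, while Cauchy--Schwarz gives $\sqrt{s_j't_j'}+\sqrt{s_j''t_j''}\le\sqrt{s_jt_j}=1$; chaining these with $1=|u_1|\le|u_1'|+|u_1''|$ produces a sandwich that must be an equality throughout. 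The equality then forces $u_1'\parallelsum u_1''$, the block identities $|u_1'|=\sqrt{s_j't_j'}$ and $|u_1''|=\sqrt{s_j''t_j''}$, the proportionality $(\sqrt{s_j'},\sqrt{s_j''})\parallelsum(\sqrt{t_j'},\sqrt{t_j''})$, and the vanishing $u_j'=u_j''=0$ (the slack in $W_1[1,j]$ is squeezed out). Reading off $u_1''/u_1'=s_j''/s_j'=t_j''/t_j'$ for every $j=2,3,4$ shows these ratios coincide, and since all off-support entries vanish this gives $W''=\rho W'$. I would treat the degenerate cases ($u_1'=0$ or a vanishing block) separately, where the same equalities force the offending summand to be zero. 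This coupling of a single anti-diagonal entry to three diagonal blocks is the witness analogue of Lemma \ref{CS-W}, and checking that the three proportionality constants are forced to agree is the delicate point of the whole theorem.
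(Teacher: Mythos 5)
Your proposal is correct and follows essentially the same route as the paper: for (i) you localize the support to $\{1,4\}$, use $S_3$ to force the summands into $\aaa$, and invoke the extremality from Theorem \ref{basic_dual}; for (ii) your Cauchy--Schwarz sandwich for the pairs $(1,j)$, $j=2,3,4$, is precisely the paper's triple application of Lemma \ref{CS-W} with the common component $u_1$ tying the proportionality constants together. The only cosmetic differences are that you re-derive Lemma \ref{CS-W} inline instead of citing it, and you justify extremality of $\Delta\cup\W^\Delta$ via rank-one positivity in $\mathcal P$ rather than via Theorem \ref{basic-dual-ext}.
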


\begin{proof}
For (i), it suffices to show
$\ext(\aaa\cap\xx)\subset\ext((\aaa+\bbb+\ccc)\cap\xx)$.
Suppose that $\varrho=\xx(a,b,z)$ satisfies the condition $S^e_1[1,4]$, and
\begin{equation}\label{,lonkhvgfsd}
\varrho = \xx(a',b',z')+\xx(a'',b'',z'') \quad \text{in} ~(\aaa+\bbb+\ccc)\cap \xx.
\end{equation}
Then we have $a_i^\pr=a_i^{\pr\pr}=b_i^\pr=b_i^{\pr\pr}=0$ for $i=2,3$, which also implies that
$z_i^\pr=z_i^{\pr\pr}=0$ for $i=2,3$. By the inequality $S_3$, the summands in (\ref{,lonkhvgfsd}) must belong
to the cone $\aaa$. Therefore, we can apply Theorem \ref{basic_dual}.

As for (ii), we note that matrices in $\Delta$ and $\W^\Delta$ generate extreme rays in $\aaa^\circ \cap \xx$,
and so they also generate extreme rays in the smaller cone $\aaa^\circ \cap \bbb^\circ \cap \ccc^\circ \cap \xx$.
Suppose that $W=\xx(s,t,u)$ satisfies $W_3^e[1]$ and
$$
W = \xx(s',t',u')+\xx(s'',t'',u'') \quad \text{in} ~\aaa^\circ \cap \bbb^\circ \cap \ccc^\circ \cap \xx.
$$
Applying Lemma \ref{CS-W} with $(i,j,k,\ell)=(2,1,1,2)$,$(3,1,1,3)$,$(4,1,1,4)$, we get $(s',t',u') ~ \parallelsum (s'',t'',u'')$.
\end{proof}

It was shown in \cite[Theorem 4.1]{han_kye_optimal} that $W\in\E_{\aaa^\circ\cap\bbb^\circ\cap\ccc^\circ}$
is an optimal genuine entanglement witness. This means that the set
$\{\varrho\in{\mathcal P}:\lan W,\varrho\ran <0\}$ of genuine entanglement detected by $W$
is maximal with respect to the inclusion. It is easy to see that extremeness implies optimality.
We have shown in Theorem \ref{bi-sep-ext} that $W\in\E_{\aaa^\circ\cap\bbb^\circ\cap\ccc^\circ}$ is extreme in the cone
$\aaa^\circ\cap\bbb^\circ\cap\ccc^\circ\cap\xx$. It would be interesting to ask if they are extreme
in the much bigger convex cone $\aaa^\circ\cap\bbb^\circ\cap\ccc^\circ$.

\section{intersections and convex hulls of two basic cones}\label{sec-two}

In this section, we consider the following convex cones
$$
\aaa\cap\bbb,\quad \bbb\cap\ccc,\quad \ccc\cap\aaa,\quad \aaa+\bbb,\quad \bbb+\ccc, \quad \ccc+\aaa,
$$
together with their dual cones:
$$
\aaa^\circ+\bbb^\circ,\quad \bbb^\circ+\ccc^\circ, \quad \ccc^\circ+\aaa^\circ,\quad
\aaa^\circ\cap\bbb^\circ,\quad \bbb^\circ\cap\ccc^\circ,\quad \ccc^\circ\cap\aaa^\circ.
$$
We look for inequalities characterizing the above convex cones, together with extreme rays
of the cones. As for intersections of two cones, we just put
together inequalities for both cones. For a three qubit {\sf X}-state $\varrho=\xx(a,b,z)$, we have the following:
\begin{itemize}
\item
$\varrho\in\aaa\cap\bbb$ if and only if $S_1[1,4], S_1[2,3], S_1[1,3], S_1[2,4]$ hold;
\item
$\varrho\in\bbb\cap\ccc$ if and only if $S_1[1,3], S_1[2,4], S_1[1,2], S_1[3,4]$ hold;
\item
$\varrho\in\ccc\cap\aaa$ if and only if $S_1[1,2], S_1[3,4], S_1[1,4], S_1[2,3]$ hold.
\end{itemize}
For an {\sf X}-shaped $W=\xx(s,t,u)$, we also have
\begin{itemize}
\item
$W\in\aaa^\circ\cap\bbb^\circ$ if and only if $W_1[1,4], W_1[2,3], W_1[1,3], W_1[2,4]$ hold;
\item
$W\in\bbb^\circ\cap\ccc^\circ$ if and only if $W_1[1,3], W_1[2,4], W_1[1,2], W_1[3,4]$ hold;
\item
$W\in\ccc^\circ\cap\aaa^\circ$ if and only if $W_1[1,2], W_1[3,4], W_1[1,4], W_1[2,3]$ hold.
\end{itemize}

In order to find extreme rays of the cones $\bbb\cap\ccc\cap\xx$, $\ccc\cap\aaa\cap\xx$ and $\aaa\cap\bbb\cap\xx$,
we consider the condition
\begin{center}
\framebox{
\parbox[t][0.7cm]{14.00cm}{
\addvspace{0.1cm} \centering
$S^e_2[i,j]:\qquad |z_i|=\sqrt{a_ib_i}=\sqrt{a_kb_k}=\sqrt{a_\ell b_\ell}=1$,\quad the others are zero,
}}
\end{center}\medskip
for $i,j=1,2,3,4$ with $i\neq j$, where $k,\ell$ are chosen so that $i,j,k,\ell$ are mutually distinct.
Here, we point out that $a_j=0=b_j$.
We define
$$
\begin{aligned}
\E_{\bbb\cap\ccc}&=\{\varrho=\xx(a,b,z): S^e_2[i,j]\ {\text{\rm holds for some}}\
    (i,j)=(1,4), (4,1),(2,3), (3,2)\},\\
\E_{\ccc\cap\aaa}&=\{\varrho=\xx(a,b,z): S^e_2[i,j]\ {\text{\rm holds for some}}\
    (i,j)=(1,3), (3,1),(2,4),(4,2)\},\\
\E_{\aaa\cap\bbb}&=\{\varrho=\xx(a,b,z): S^e_2[i,j]\ {\text{\rm holds for some}}\
    (i,j)=(1,2), (2,1),(3,4), (4,3)\},
\end{aligned}
$$
and consider the following inequalities
\begin{center}
\framebox{
\parbox[t][0.7cm]{10.50cm}{
\addvspace{0.1cm} \centering
${W_2[i,j]}:\qquad \sum_{k\neq j}\sqrt{s_kt_k}\ge |u_i|,\quad \sum_{k\neq i}\sqrt{s_kt_k}\ge |u_j|$,
}}
\end{center}\medskip
for $i,j=1,2,3,4$ with $i\neq j$.

\begin{lemma}\label{lemma-ntersection-state}
For a given self-adjoint {\sf X}-shaped matrix $W=\xx(s,t,u)$, the following are equivalent.
\begin{enumerate}
\item[(i)]
$\lan W,\varrho\ran\ge 0$ for each $\varrho\in \E_{\bbb\cap\ccc}\cup \E_{\aaa\cap\bbb\cap\ccc}\cup\Delta$
{\rm (}respectively, $\E_{\ccc\cap\aaa}\cup \E_{\aaa\cap\bbb\cap\ccc}\cup\Delta$ and
$\E_{\aaa\cap\bbb}\cup \E_{\aaa\cap\bbb\cap\ccc}\cup\Delta${\rm )};
\item[(ii)]
$s_i,t_i\ge 0$ for $i=1,2,3,4$ and the inequalities $W_3$ and $W_2[1,4]$, $W_2[2,3]$
{\rm (}respectively, $W_2[1,3],W_2[2,4]$ and $W_2[1,2],W_2[3,4]${\rm )} hold;
\item[(iii)]
$\lan W,\varrho\ran\ge 0$ for each $\varrho\in\bbb\cap\ccc$
{\rm (}respectively, $\ccc\cap\aaa$ and $\aaa\cap\bbb${\rm )}.
\end{enumerate}
\end{lemma}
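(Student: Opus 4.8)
The plan is to carry out the $\bbb\cap\ccc$ case in full; the $\ccc\cap\aaa$ and $\aaa\cap\bbb$ cases then follow by the permutation of $\{1,2,3,4\}$ that interchanges the relevant bi-partitions, so no separate argument is needed. As in Lemmas \ref{new-aaa} and \ref{lemma-ppt-dual}, I would prove the cycle (i) $\Rightarrow$ (ii) $\Rightarrow$ (iii) $\Rightarrow$ (i), the last implication being immediate since $\E_{\bbb\cap\ccc}\cup\E_{\aaa\cap\bbb\cap\ccc}\cup\Delta\subset\bbb\cap\ccc$: each defining condition $S^e_2[i,j]$ or $S^e_3$ is checked against the four inequalities $S_1[1,3],S_1[2,4],S_1[1,2],S_1[3,4]$ that characterize $\bbb\cap\ccc$ via Proposition \ref{aa-bbb-ccc}.

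For (i) $\Rightarrow$ (ii) I would follow the standard recipe. Pairing with $\Delta$ forces $s_i,t_i\ge0$, and after replacing $W$ by $W+\varepsilon I$ I may assume $s_i,t_i>0$. Pairing $W$ with the fully balanced state of $\E_{\aaa\cap\bbb\cap\ccc}$ built from $a_i=\sqrt{t_i/s_i}$, $b_i=\sqrt{s_i/t_i}$, $z_i=-e^{-{\rm i}\theta_i}$ with $\theta_i=\arg u_i$, exactly as in Lemma \ref{lemma-ppt-dual}, yields $W_3$. Pairing $W$ with the four states of $\E_{\bbb\cap\ccc}$ attached to $S^e_2[1,4],S^e_2[4,1],S^e_2[2,3],S^e_2[3,2]$ --- each balancing three diagonal pairs, zeroing the remaining one, and carrying a single phase $-e^{-{\rm i}\theta}$ on the surviving anti-diagonal slot --- makes $\tfrac12\lan W,\varrho\ran$ collapse to $\sum_{k\neq i^c}\sqrt{s_kt_k}-|u_i|$, which gives the four scalar inequalities constituting $W_2[1,4]$ and $W_2[2,3]$.

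The substance is (ii) $\Rightarrow$ (iii). By Corollary \ref{xxx-coro}, Proposition \ref{aa-bbb-ccc}, and the reduction (\ref{ineq--xx}), it suffices to prove $\sum_i\sqrt{s_it_i}\sqrt{a_ib_i}\ge\sum_i|u_i||z_i|$ whenever $\varrho=\xx(a,b,z)$ obeys $S_1[1,3],S_1[2,4],S_1[1,2],S_1[3,4]$ and $W=\xx(s,t,u)$ obeys $W_3,W_2[1,4],W_2[2,3]$. Writing $\alpha_i=\sqrt{a_ib_i}$, $\zeta_i=|z_i|$, $\sigma_i=\sqrt{s_it_i}$, $\mu_i=|u_i|$, the four state-inequalities say precisely $\zeta_i\le\alpha_k$ for every $k\neq i^c$, where $i\mapsto i^c$ is the involution $1\leftrightarrow4$, $2\leftrightarrow3$; dually $W_2$ gives $\mu_i\le\sum_{k\neq i^c}\sigma_k$ and $W_3$ gives $\sum_i\mu_i\le\sum_i\sigma_i$. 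I would then read the target as a transportation estimate: distribute each demand $\mu_i$ among the supplies $\{\sigma_k:k\neq i^c\}$. A feasible assignment $\mu_i=\sum_{k\neq i^c}\mu_{ik}$ with $\sum_{i:\,i^c\neq k}\mu_{ik}\le\sigma_k$ exists by the max-flow--min-cut (defect-form Hall) criterion, whose cut conditions $\sum_{i\in S}\mu_i\le\sum_{k\in N(S)}\sigma_k$ are exactly our hypotheses: for $|S|=1$ one has $N(\{i\})=\{1,2,3,4\}\setminus\{i^c\}$, giving $W_2$, while for $|S|\ge2$ the complements no longer share a common value, so $N(S)$ is everything and the condition reduces to $W_3$. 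Given such $\mu_{ik}$ the estimate closes in one line, $\sum_i\mu_i\zeta_i=\sum_i\sum_{k\neq i^c}\mu_{ik}\zeta_i\le\sum_i\sum_{k\neq i^c}\mu_{ik}\alpha_k=\sum_k\alpha_k\sum_{i:\,i^c\neq k}\mu_{ik}\le\sum_k\sigma_k\alpha_k$.

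The main obstacle is exactly this scalar inequality, and within it the bookkeeping that $\zeta_i$ is dominated by $\alpha_k$ for all $k$ \emph{except} $k=i^c$: it is the excluded complement that blocks a naive termwise comparison and forces the routing argument. One must check that the exclusion pattern on the state side (which $\alpha_k$ bound $\zeta_i$) matches the exclusion pattern on the witness side (which $\sigma_k$ enter the bound for $\mu_i$); this matching is precisely what turns $W_2$ and $W_3$ into the Hall-type feasibility conditions, and explains why no hypothesis beyond them is needed. If one prefers to avoid invoking max-flow--min-cut, the same conclusion is reachable by an explicit case analysis on the deficient set $\{i:\sigma_i<\mu_i\}$, but the transportation viewpoint is cleaner and makes the role of each inequality transparent.
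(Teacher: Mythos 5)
Your proposal is correct, and while the peripheral implications match the paper, your treatment of the core implication (ii) $\Longrightarrow$ (iii) is genuinely different. Your steps (iii) $\Longrightarrow$ (i) and (i) $\Longrightarrow$ (ii) are the paper's: it likewise gets $s_i,t_i\ge 0$ and $W_3$ from $\Delta$ and the balanced state of Lemma \ref{lemma-ppt-dual}, and derives $W_2[1,4]$, $W_2[2,3]$ by pairing with exactly the four states $\varrho_{1,4},\varrho_{2,3},\varrho_{3,2},\varrho_{4,1}\in\E_{\bbb\cap\ccc}$ you describe, with the other two cases reduced by permuting the systems $A,B,C$ (which induces your index permutations, up to a harmless conjugation of anti-diagonal entries). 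For (ii) $\Longrightarrow$ (iii), however, the paper argues by cases: if $\varrho\in\aaa\cap\bbb\cap\ccc$ the claim follows from $W_3$ alone via Proposition \ref{coro_decomp}; otherwise it assumes $|z_4|>\sqrt{a_1b_1}$ without loss of generality and closes the estimate by splitting $\sum_i(\sqrt{s_it_i}\sqrt{a_ib_i}-|u_i||z_i|)$ into two blocks, bounding $|z_4|$ below and $|z_1|,|z_2|,|z_3|$ above by $\sqrt{a_1b_1}$, and invoking $W_2[1,4]$ and then $W_3$. Your transportation argument replaces this case split by the Gale--Hoffman/defect-Hall feasibility criterion on the bipartite graph with edges $i\sim k$ for $k\ne i^c$; your verifications check out: the four $S_1$ conditions are exactly the system $|z_i|\le\sqrt{a_kb_k}$ for $k\ne i^c$, singleton cuts are exactly the four inequalities in $W_2[1,4]$ and $W_2[2,3]$, all larger cuts have $N(S)=\{1,2,3,4\}$ and reduce to $W_3$, and the one-line routing estimate then yields $\sum_i\sqrt{s_it_i}\sqrt{a_ib_i}\ge\sum_i|u_i||z_i|$, which suffices by (\ref{ineq--xx}). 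What each approach buys: the paper's chain is self-contained and elementary, at the cost of a WLOG reduction and an asymmetric-looking computation (which implicitly constructs one particular feasible routing); your argument is uniform over all cases, makes transparent why the hypotheses $W_2$ and $W_3$ are precisely what is needed (they are the cut conditions, so nothing more could be required), and would adapt mechanically to other exclusion patterns, at the cost of importing max-flow--min-cut --- though, as you note, an explicit assignment of $\mu_{ik}$ would eliminate even that dependence.
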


\begin{proof}
The inequalities $s_i,t_i\ge 0$ and $W_3$ follow from Lemma \ref{lemma-ppt-dual}.
We will prove for $\bbb\cap\ccc$. The others follow by applying the operator
$x_A\ot x_B\ot x_C\mapsto x_{\sigma(A)}\ot x_{\sigma(B)}\ot x_{\sigma(C)}$ for permutations $\sigma$ on $\{A,B,C\}$.
To prove (i) $\Longrightarrow$ (ii), we may assume that all the diagonal elements $s_i$ and $t_i$ are nonzero, and
consider four \xx-states
$$
\begin{aligned}
\varrho_{1,4}:&=\xx\left(\left(\sqrt{t_1 \over s_1},\sqrt{t_2 \over s_2},\sqrt{t_3 \over s_3},0\right),
   \left(\sqrt{s_1 \over t_1},\sqrt{s_2 \over t_2},\sqrt{s_3 \over t_3},0\right),\left(-e^{-{\rm i}\theta_1},0,0,0\right)\right), \\
\varrho_{2,3}:&=\xx\left(\left(\sqrt{t_1 \over s_1},\sqrt{t_2 \over s_2},0,\sqrt{t_4 \over s_4}\right),
   \left(\sqrt{s_1 \over t_1},\sqrt{s_2 \over t_2},0,\sqrt{s_4 \over t_4}\right),\left(0,-e^{-{\rm i}\theta_2},0,0\right)\right), \\
\varrho_{3,2}:&=\xx\left(\left(\sqrt{t_1 \over s_1},0,\sqrt{t_3 \over s_3},\sqrt{t_4 \over s_4}\right),
   \left(\sqrt{s_1 \over t_1},0,\sqrt{s_3 \over t_3},\sqrt{s_4 \over t_4}\right),\left(0,0,-e^{-{\rm i}\theta_3},0\right)\right),\\
\varrho_{4,1}:&=\xx\left(\left(0,\sqrt{t_2 \over s_2},\sqrt{t_3 \over s_3},\sqrt{t_4 \over s_4}\right),
   \left(0,\sqrt{s_2 \over t_2},\sqrt{s_3 \over t_3},\sqrt{s_4 \over t_4}\right),\left(0,0,0,-e^{-{\rm i}\theta_4}\right)\right),
\end{aligned}
$$
with $\theta_k = \arg u_k$.
These states belong to $\E_{\bbb\cap\ccc}$.
We expand $\langle W, \varrho_{i,j} \rangle \ge 0$ to obtain  $W_2[1,4]$ and $W_2[2,3]$.

For (ii) $\Longrightarrow$ (iii), it suffices to show $\lan W,\varrho\ran\ge 0$
when $W=\xx(s,t,u)$ satisfies $W_2[1,4]$, $W_2[2,3]$, $W_3$, and $\varrho=\xx(a,b,z)$ satisfies
$S_1[1,3], S_1[2,4], S_1[1,2], S_1[3,4]$ by Corollary \ref{xxx-coro} and Proposition \ref{aa-bbb-ccc}.
If $\varrho\in\aaa\cap\bbb\cap\ccc$, then this is trivial by $W_3$ and Proposition \ref{coro_decomp}.
So, we may assume that $\varrho\notin\aaa$,
especially $|z_4| > \sqrt{a_1b_1}$, without loss of generality.
We begin with
$$
\begin{aligned}
{1 \over 2}\lan W, \varrho \ran
&\ge\sum_{i=1}^4 \left(\sqrt{s_i t_i} \sqrt{a_i b_i} - |u_i| |z_i|\right)\\
& = \left( \sqrt{s_2t_2}\sqrt{a_2b_2} + \sqrt{s_3t_3}\sqrt{a_3b_3} + \sqrt{s_4t_4}\sqrt{a_4b_4} -|u_4||z_4|\right) \\
& \qquad\qquad +\left( \sqrt{s_1t_1}\sqrt{a_1b_1} -|u_1||z_1| -|u_2||z_2| -|u_3||z_3|\right), \\
\end{aligned}
$$
as in (\ref{ineq--xx}).
We have $\sqrt{a_ib_i}\ge |z_4|$ for $i=2,3,4$ by $S_1[2,4]$, $S_1[3,4]$,
and $\sqrt{a_1b_1}\ge |z_i|$ for $i=1,2,3$ by $S_1[1,2]$, $S_1[1,3]$. By the inequality $W_2[1,4]$
and the assumption $|z_4| > \sqrt{a_1b_1}$, we have
$$
\begin{aligned}
{1 \over 2}\lan W, \varrho \ran
& \ge \left( \sqrt{s_2t_2} + \sqrt{s_3t_3} + \sqrt{s_4t_4} -|u_4| \right) |z_4|
    + \left( \sqrt{s_1t_1} -|u_1| -|u_2| -|u_3|\right)\sqrt{a_1b_1} \\
& \ge \left( \sqrt{s_2t_2} + \sqrt{s_3t_3} + \sqrt{s_4t_4} -|u_4| \right) \sqrt{a_1b_1}
    + \left( \sqrt{s_1t_1} -|u_1| -|u_2| -|u_3|\right)\sqrt{a_1b_1} \\
& = \sqrt{a_1b_1}\left(\sum_{i=1}^4 \sqrt{s_i t_i} -\sum_{i=1}^4|u_i|\right).
\end{aligned}
$$
This is nonnegative by the inequality $W_3$, as it was desired.
\end{proof}

By the equivalence (ii) $\Longleftrightarrow$ (iii), we have the following criteria for the
convex hull of two basic dual cones:

\begin{theorem}\label{dual-join...}
For a self-adjoint $W=\xx(s,t,u)$ with nonnegative diagonals, we have the following:
\begin{enumerate}
\item[(i)]
$W\in\bbb^\circ+\ccc^\circ$ if and only if  $W_2[1,4]$, $W_2[2,3]$ and $W_3$ hold;
\item[(ii)]
$W\in\ccc^\circ+\aaa^\circ$ if and only if  $W_2[1,3]$, $W_2[2,4]$ and $W_3$ hold;
\item[(iii)]
$W\in\aaa^\circ+\bbb^\circ$ if and only if  $W_2[1,2]$, $W_2[3,4]$ and $W_3$ hold.
\end{enumerate}
If $W$ is a self-adjoint three qubit matrix with the {\sf X}-part $\xx(s,t,u)$, then the \lq only if\rq\ parts hold.
\end{theorem}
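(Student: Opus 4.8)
The plan is to obtain Theorem~\ref{dual-join...} as an immediate consequence of Lemma~\ref{lemma-ntersection-state} together with the duality identity $(C_1\cap C_2)^\circ=C_1^\circ+C_2^\circ$ recorded in Section~\ref{sec-duaity}. It suffices to treat statement (i), since (ii) and (iii) follow by applying the permutations $\sigma$ on $\{A,B,C\}$ exactly as in the proof of Lemma~\ref{lemma-ntersection-state}, which carry $\bbb\cap\ccc$ to $\ccc\cap\aaa$ and to $\aaa\cap\bbb$ and permute the indices accordingly.

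For (i), I would first observe that condition (iii) of Lemma~\ref{lemma-ntersection-state}, namely $\lan W,\varrho\ran\ge 0$ for every $\varrho\in\bbb\cap\ccc$, is by definition the statement $W\in(\bbb\cap\ccc)^\circ$. By the duality identity this dual cone equals $\bbb^\circ+\ccc^\circ$. Hence the equivalence (ii)~$\Longleftrightarrow$~(iii) of that lemma says precisely that an X-shaped $W=\xx(s,t,u)$ lies in $\bbb^\circ+\ccc^\circ$ exactly when $s_i,t_i\ge 0$ for all $i$ and the inequalities $W_2[1,4]$, $W_2[2,3]$, $W_3$ hold. Since the present theorem already assumes that $W$ has nonnegative diagonals, the condition $s_i,t_i\ge 0$ is subsumed, and we are left with the stated characterization by $W_2[1,4]$, $W_2[2,3]$ and $W_3$.

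For the final sentence, let $W$ be a general self-adjoint three qubit matrix with X-part $\xx(s,t,u)$ and suppose $W\in\bbb^\circ+\ccc^\circ$. Applying Proposition~\ref{xxx-part}(ii) with $\variable=\bbb\cap\ccc$, whose dual is $\bbb^\circ+\ccc^\circ$, I obtain $W_\xx=\xx(s,t,u)\in(\bbb\cap\ccc)^\circ=\bbb^\circ+\ccc^\circ$. Because the inequalities $W_2[1,4]$, $W_2[2,3]$ and $W_3$ involve only the X-part entries $s$, $t$, $u$, the X-shaped case already settled forces them to hold, which is the asserted \lq only if\rq\ part.

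The substantive work has all been carried out in Lemma~\ref{lemma-ntersection-state}; the present theorem is a reformulation of that lemma under the duality dictionary, so I do not expect a genuine computational obstacle. The only points requiring care are the correct identification of condition (iii) with membership in the dual cone $(\bbb\cap\ccc)^\circ=\bbb^\circ+\ccc^\circ$, and the observation that $W_2[\cdot,\cdot]$ and $W_3$ are expressed solely in terms of the diagonal and anti-diagonal data $(s,t,u)$, so that the passage from a general $W$ to its X-part via Proposition~\ref{xxx-part}(ii) loses no information relevant to these inequalities.
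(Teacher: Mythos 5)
Your proposal is correct and takes essentially the same route as the paper: the paper likewise deduces Theorem~\ref{dual-join...} immediately from the equivalence (ii) $\Longleftrightarrow$ (iii) of Lemma~\ref{lemma-ntersection-state}, reading condition (iii) as membership in $(\bbb\cap\ccc)^\circ=\bbb^\circ+\ccc^\circ$ and obtaining (ii) and (iii) of the theorem by permuting the systems $A$, $B$, $C$. Your explicit use of Proposition~\ref{xxx-part}(ii) for the final \lq only if\rq\ assertion about general three qubit $W$ is precisely the mechanism the paper invokes implicitly, so nothing is missing.
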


\begin{theorem}\label{ext-inter-states}
We have the following:
\begin{enumerate}
\item[(i)]
$\ext(\bbb\cap\ccc\cap\xx)=\E_{\bbb\cap\ccc}\cup \E_{\aaa\cap\bbb\cap\ccc}\cup\Delta$,\\
$\ext(\ccc\cap\aaa\cap\xx)=\E_{\ccc\cap\aaa}\cup \E_{\aaa\cap\bbb\cap\ccc}\cup\Delta$,\\
$\ext(\aaa\cap\bbb\cap\xx)=\E_{\aaa\cap\bbb}\cup \E_{\aaa\cap\bbb\cap\ccc}\cup\Delta$;
\item[(ii)]
$\ext((\bbb^\circ+\ccc^\circ)\cap\xx)=\ext({\bbb^\circ}\cap\xx) \cup \ext({\ccc^\circ}\cap\xx)$,\\
$\ext((\ccc^\circ+\aaa^\circ)\cap\xx)=\ext({\ccc^\circ}\cap\xx) \cup \ext({\aaa^\circ}\cap\xx)$,\\
$\ext((\aaa^\circ+\bbb^\circ)\cap\xx)=\ext({\aaa^\circ}\cap\xx) \cup \ext({\bbb^\circ}\cap\xx)$.
\end{enumerate}
\end{theorem}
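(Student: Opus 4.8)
The plan is to prove the two parts separately, in each case obtaining one inclusion essentially for free from the generating-set lemmas and then checking by hand that every listed generator is genuinely extreme. Throughout, the workhorses are Lemma \ref{CS-S} and Lemma \ref{CS-W}, whose hypotheses are triggered precisely by the saturation of the defining inequalities.

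For part (i) the inclusion $\ext(\bbb\cap\ccc\cap\xx)\subset\E_{\bbb\cap\ccc}\cup\E_{\aaa\cap\bbb\cap\ccc}\cup\Delta$ is immediate: the equivalence (i)$\Leftrightarrow$(iii) of Lemma \ref{lemma-ntersection-state} shows, via Proposition \ref{tool}, that $\E_{\bbb\cap\ccc}\cup\E_{\aaa\cap\bbb\cap\ccc}\cup\Delta$ is a generating set for $\bbb\cap\ccc\cap\xx$, and since it is closed, Proposition \ref{tool} yields the inclusion. It then remains to show each generator is extreme. Elements of $\Delta$ are extreme already in $\mathcal P$, hence in every smaller cone. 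For $\varrho$ satisfying $S^e_3$ (a member of $\E_{\aaa\cap\bbb\cap\ccc}$), I would take a decomposition in $\bbb\cap\ccc\cap\xx$ and apply Lemma \ref{CS-S} with the pairs $\{1,3\},\{2,4\}$ (from $\bbb$) and $\{1,2\},\{3,4\}$ (from $\ccc$); because every $\sqrt{a_ib_i}$ and every $|z_i|$ equals $1$, the hypothesis holds for all four pairs, and as these link $\{1,2,3,4\}$ into a single connected graph the resulting proportionalities share one ratio, forcing $(a',b',z')\parallelsum(a'',b'',z'')$.

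The genuinely new case is $\E_{\bbb\cap\ccc}$. Fixing $S^e_2[1,4]$, we have $a_4=b_4=0$, $\sqrt{a_1b_1}=\sqrt{a_2b_2}=\sqrt{a_3b_3}=|z_1|=1$, and $z_2=z_3=z_4=0$. Given a decomposition in $\bbb\cap\ccc\cap\xx$, the zero diagonal at index $4$ first forces $a_4'=b_4'=0=a_4''=b_4''$; then $S_1[2,4]$ and $S_1[3,4]$ applied to each summand, whose factor $\sqrt{a_4'b_4'}$ vanishes, force $z_2'=z_3'=z_4'=0$ and likewise for the double-primed part, so each summand is supported exactly where $\varrho$ is. I would then invoke Lemma \ref{CS-S} for $\bbb$ on the pair $\{1,3\}$ with $i=1$ and for $\ccc$ on the pair $\{1,2\}$ with $i=1$ — both legitimate since $|z_1|=\sqrt{a_1b_1}=\sqrt{a_2b_2}=\sqrt{a_3b_3}=1$ — which share the index-$1$ coordinates and $z_1$, pin a common ratio over indices $1,2,3$, and with index $4$ null on both summands yield $(a',b',z')\parallelsum(a'',b'',z'')$. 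The other three symbols follow by permuting $A,B,C$. I expect the main obstacle of part (i) to be exactly this killing of the off-support anti-diagonal entries $z_2',z_3'$ from saturated $S_1$ inequalities carrying a zero diagonal factor; after that, Lemma \ref{CS-S} is routine.

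For part (ii), the inclusion $\ext((\bbb^\circ+\ccc^\circ)\cap\xx)\subset\ext(\bbb^\circ\cap\xx)\cup\ext(\ccc^\circ\cap\xx)$ is the general fact $\ext(C_1+C_2)\subset\ext(C_1)\cup\ext(C_2)$ combined with Corollary \ref{xxx24}. For the reverse, by symmetry it suffices to treat $\ext(\bbb^\circ\cap\xx)=\E_{\bbb^\circ}\cup\Delta\cup\W^\Delta$ (Theorem \ref{basic-dual-ext}), and I would mimic the proof of Theorem \ref{ppt-ext}(ii). Take $W$ from this list and a decomposition $W=\xx(s',t',u')+\xx(s'',t'',u'')$ in $(\bbb^\circ+\ccc^\circ)\cap\xx$; by Theorem \ref{dual-join...}(i) each summand has nonnegative diagonals and satisfies $W_2[1,4],W_2[2,3],W_3$. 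The diagonals of $W$ that vanish off its support transfer to the summands, and then, for $W$ of $\E_{\bbb^\circ}$-type supported on $s_1,t_1,u_3$, the chain
\[
\sqrt{s_1t_1}\ \ge\ \sqrt{s_1't_1'}+\sqrt{s_1''t_1''}\ \ge\ \sum_i(|u_i'|+|u_i''|)\ \ge\ \sum_i|u_i|,
\]
using superadditivity of the geometric mean and $W_3$ for each summand, is squeezed into equalities by $\sqrt{s_1t_1}=|u_3|=1$; this kills every anti-diagonal entry of the summands outside the support of $W$. The per-summand inequality $\sqrt{s_1't_1'}\ge|u_3'|$ read off from $W_3$ is precisely the $W_1[1,3]$ needed to place $W',W''$ in $\bbb^\circ\cap\xx$ via Proposition \ref{basic-dual-cri}, and the analogous confinement works for $\Delta$ and $\W^\Delta$. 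Finally the extremeness of $W$ in $\bbb^\circ\cap\xx$ from Theorem \ref{basic-dual-ext} forces $W',W''$ to be nonnegative multiples of $W$. Here too the crux, and the expected main obstacle, is the equality analysis of the displayed chain: the whole reduction rests on $W_3$ being saturated by $W$, which simultaneously tightens the superadditivity and triangle inequalities and thereby confines the summands to the support of $W$.
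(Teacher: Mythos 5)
Your proposal is correct. For part (i) it is essentially the paper's own argument: the inclusion of $\ext$ into the generating set via Lemma \ref{lemma-ntersection-state} and Proposition \ref{tool}, the handling of the $\E_{\bbb\cap\ccc}$ case by first killing the index-$4$ diagonals and then $z'_2,z'_3,z'_4$ through the saturated $S_1[2,4]$, $S_1[3,4]$ with a vanishing factor, followed by Lemma \ref{CS-S} on the pairs $(1,2)$ and $(1,3)$, is exactly what the paper does. Part (ii), however, takes a genuinely different route. The paper's proof is a two-line monotonicity observation: by Theorem \ref{ppt-ext}(ii), every element of $\ext(\bbb^\circ\cap\xx)\cup\ext(\ccc^\circ\cap\xx)$ is already extreme in the \emph{larger} cone $(\aaa^\circ+\bbb^\circ+\ccc^\circ)\cap\xx$, hence a fortiori extreme in the smaller cone $(\bbb^\circ+\ccc^\circ)\cap\xx$ that contains it; no computation is needed. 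You instead re-run the saturated-$W_3$ equality analysis directly inside $(\bbb^\circ+\ccc^\circ)\cap\xx$: Theorem \ref{dual-join...}(i) equips each summand with $W_3$ and nonnegative diagonals, the squeezed chain confines the summands to the support of $W$ and places them in $\bbb^\circ\cap\xx$ via Proposition \ref{basic-dual-cri}, and Theorem \ref{basic-dual-ext} finishes. Your chain is sound (including the $\Delta$ and $\W^\Delta$ cases), and there is no circularity since Theorem \ref{dual-join...} precedes this statement; but it amounts to repeating the proof of Theorem \ref{ppt-ext}(ii) one level down, and buys nothing over the paper's inheritance-of-extremality argument, which is what makes the paper's proof of (ii) a one-liner.

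One small imprecision in your $S^e_3$ case of part (i): the conclusion of Lemma \ref{CS-S} for a pair $\{i,j\}$ controls only $z_i$, not $z_j$. Your four unordered pairs $\{1,3\},\{2,4\},\{1,2\},\{3,4\}$, each applied once with the smaller index in the first slot, would leave $z'_4 \parallelsum z''_4$ unverified. Since under $S^e_3$ the hypothesis $\sqrt{a_ib_i}=\sqrt{a_jb_j}=|z_i|$ holds in both orientations of every admissible pair, the fix is immediate — apply the lemma with $4$ in the first slot, e.g.\ the orientation $(4,2)$ of $\{2,4\}$, which is precisely why the paper lists the \emph{ordered} pairs $(1,2),(2,1),(3,1),(4,2)$ so that each index $1,2,3,4$ occurs once in the first position. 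State the orientations explicitly and the argument is complete.
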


\begin{proof}
(i). We will prove the first identity.
Suppose that $\varrho=\xx(a,b,z)$ satisfies $S_2^e[1,4]$ and
$$
\varrho = \xx(a',b',z')+\xx(a'',b'',z'') \quad \text{in} ~\bbb \cap \ccc \cap \xx.
$$
The condition $a_4=b_4=0$ implies
$a_4'=b_4'=z_j'=0=a_4''=b_4''=z_j''$ for $j=2,3,4$ by $S_1[2,4]$, $S_1[3,4]$.
Applying Lemma \ref{CS-S} with $(i,j)=(1,2)$ and $(1,3)$, we get
$(a',b',z') ~ \parallelsum (a'',b'',z'')$.

Next, suppose that $\varrho=\xx(a,b,z)$ satisfies $S_3^e$ and
$$
\varrho = \xx(a',b',z')+\xx(a'',b'',z'') \quad \text{in} ~\bbb \cap \ccc \cap \xx.
$$
Applying Lemma \ref{CS-S} with $(i,j)=(1,2)$,$(2,1)$,$(3,1)$, $(4,2)$, we get
$(a',b',z') ~ \parallelsum (a'',b'',z'')$.

(ii). States in $\ext({\bbb^\circ}\cap\xx) \cup \ext({\ccc^\circ}\cap\xx)$ generate extreme rays of the convex cone
$(\aaa^\circ+\bbb^\circ+\ccc^\circ) \cap \xx$ by Theorem \ref{ppt-ext}. Therefore, they
also generate extreme rays in the smaller cone $(\bbb^\circ+\ccc^\circ) \cap \xx$.
\end{proof}

Now, we look for extreme rays of $\bbb^\circ\cap\ccc^\circ\cap\xx$ (respectively, $\ccc^\circ\cap\aaa^\circ\cap\xx$ and $\aaa^\circ\cap\bbb^\circ\cap\xx$) to get conditions for the cone $\bbb+\ccc$ (respectively, $\ccc+\aaa$ and $\aaa+\bbb$).
To do this, we consider the condition
\begin{center}
\framebox{
\parbox[t][0.7cm]{14.00cm}{
\addvspace{0.1cm} \centering
$W^e_2[i,j]:\qquad \sqrt{s_it_i}=\sqrt{s_jt_j}=|u_k|=|u_\ell|=1$,\quad the others are zero,
}}
\end{center}\medskip
for $i\neq j$, where $k$ and $\ell$ are chosen so that $i,j,k,\ell$ are mutually distinct,
and define
$$
\begin{aligned}
\E_{\bbb^\circ\cap\ccc^\circ}&=\{W=\xx(s,t,u): W^e_2[i,j]\ {\text{\rm holds for}}\ (i,j)=(1,4)\ {\text{\rm or}}\  (2,3)\},\\
\E_{\ccc^\circ\cap\aaa^\circ}&=\{W=\xx(s,t,u): W^e_2[i,j]\ {\text{\rm holds for}}\ (i,j)=(1,3)\ {\text{\rm or}}\  (2,4)\},\\
\E_{\aaa^\circ\cap\bbb^\circ}&=\{W=\xx(s,t,u): W^e_2[i,j]\ {\text{\rm holds for}}\ (i,j)=(1,2)\ {\text{\rm or}}\  (3,4)\}.
\end{aligned}
$$
We also consider the following inequalities
\begin{center}
\framebox{
\parbox[t][0.7cm]{14.80cm}{
\addvspace{0.1cm} \centering
$S_2[i,j]:\qquad
\min\left\{\sqrt{a_ib_i}+\sqrt{a_jb_j},\sqrt{a_k b_k}+\sqrt{a_\ell b_\ell}\right\}
   \ge\max\left\{|z_i|+|z_j|,|z_k|+|z_\ell|\right\}$,
}}
\end{center}\medskip
for $i\neq j$, where $k,\ell$ are chosen so that $i,j,k,\ell$ are mutually distinct.

These inequalities have been used in \cite{han_kye_bisep_exam} to get necessary conditions
for a three state $\varrho$ with the {\sf X}-part $\xx(a,b,z)$ to
belong to $\bbb+\ccc$, $\ccc+\aaa$ and $\aaa+\bbb$ respectively. We show in Theorem \ref{join-state}
that they provide actually sufficient conditions when $\varrho$ itself {\sf X}-shaped.
Note that
\begin{itemize}
\item
$S_2[1,4]$ holds if and only if $S_2[2,3]$ holds;
\item
$S_2[1,3]$ holds if and only if $S_2[2,4]$ holds;
\item
$S_2[1,2]$ holds if and only if $S_2[3,4]$ holds.
\end{itemize}

\begin{lemma}\label{lemma-ntersection-witness}
For a given self-adjoint {\sf X}-shaped matrix $\varrho=\xx(a,b,z)$, the following are equivalent.
\begin{enumerate}
\item[(i)]
$\lan W,\varrho\ran\ge 0$ for each $W\in \E_{\bbb^\circ\cap\ccc^\circ}\cup\W^\Delta\cup\Delta$
{\rm (}respectively, $\E_{\ccc^\circ\cap\aaa^\circ}\cup\W^\Delta\cup\Delta$ and
$\E_{\aaa^\circ\cap\bbb^\circ}\cup\W^\Delta\cup\Delta${\rm )};
\item[(ii)]
$\varrho$ is a state satisfying the inequalities $S_2[1,4]$
{\rm (}respectively, $S_2[1,3]$ and $S_2[1,2]${\rm )} hold;
\item[(iii)]
$\lan W,\varrho\ran\ge 0$ for each $W\in\bbb^\circ\cap\ccc^\circ$
{\rm (}respectively, $\ccc^\circ\cap\aaa^\circ$ and $\aaa^\circ\cap\bbb^\circ${\rm )}.
\end{enumerate}
\end{lemma}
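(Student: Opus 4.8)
The plan is to follow the same three-step cycle (iii)$\Longrightarrow$(i)$\Longrightarrow$(ii)$\Longrightarrow$(iii) used in Lemma \ref{lemma-ntersection-state}, proving everything for $\bbb^\circ\cap\ccc^\circ$ and deducing the cases of $\ccc^\circ\cap\aaa^\circ$ and $\aaa^\circ\cap\bbb^\circ$ by the coordinate permutations induced by $x_A\ot x_B\ot x_C\mapsto x_{\sigma(A)}\ot x_{\sigma(B)}\ot x_{\sigma(C)}$. The implication (iii)$\Longrightarrow$(i) is immediate once I check that $\E_{\bbb^\circ\cap\ccc^\circ}\cup\W^\Delta\cup\Delta\subset\bbb^\circ\cap\ccc^\circ$: the diagonal matrices in $\Delta$ and the rank-one matrices in $\W^\Delta$ are positive, hence lie in ${\mathcal P}\subset\bbb^\circ\cap\ccc^\circ$, while a direct check against Proposition \ref{basic-dual-cri} shows that each $W$ satisfying $W_2^e[1,4]$ or $W_2^e[2,3]$ verifies $W_1[1,3],W_1[2,4],W_1[1,2],W_1[3,4]$ and so belongs to $\bbb^\circ\cap\ccc^\circ$.

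For (i)$\Longrightarrow$(ii) I would extract the defining data of $S_2[1,4]$ one piece at a time. Pairing $\varrho$ with $\Delta$ gives $a_i,b_i\ge 0$, and pairing with the family $\W^\Delta$, after minimizing over $r>0$ and $\theta$ exactly as in the proof of Lemma \ref{new-aaa-circ}, yields $\sqrt{a_ib_i}\ge|z_i|$, so that $\varrho$ is a state. Feeding in the witness of type $W_2^e[1,4]$, namely $\xx((\rho_1,0,0,\rho_4),(\rho_1^{-1},0,0,\rho_4^{-1}),(0,e^{{\rm i}\phi_2},e^{{\rm i}\phi_3},0))$, and minimizing over $\rho_1,\rho_4>0$ by AM--GM and over the phases $\phi_2,\phi_3$, produces $\sqrt{a_1b_1}+\sqrt{a_4b_4}\ge|z_2|+|z_3|$; the witness of type $W_2^e[2,3]$ symmetrically produces $\sqrt{a_2b_2}+\sqrt{a_3b_3}\ge|z_1|+|z_4|$. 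The two remaining inequalities packaged in $S_2[1,4]$, namely $\sqrt{a_1b_1}+\sqrt{a_4b_4}\ge|z_1|+|z_4|$ and $\sqrt{a_2b_2}+\sqrt{a_3b_3}\ge|z_2|+|z_3|$, are free consequences of the state condition $\sqrt{a_ib_i}\ge|z_i|$, and the four together are exactly the min--max inequality $S_2[1,4]$.

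For (ii)$\Longrightarrow$(iii), by Corollary \ref{xxx-coro} it suffices to test against $\xx$-shaped $W$, and by Proposition \ref{basic-dual-cri} such a $W\in\bbb^\circ\cap\ccc^\circ$ is precisely one with nonnegative diagonal satisfying $W_1[1,3],W_1[2,4],W_1[1,2],W_1[3,4]$. Writing $\sigma_i=\sqrt{s_it_i}$ and $\alpha_i=\sqrt{a_ib_i}$ and applying the AM--GM estimate of (\ref{ineq--xx}), the goal reduces to $\sum_i\sigma_i\alpha_i\ge\sum_i|u_i||z_i|$. If $W$ is positive, i.e. $\sigma_i\ge|u_i|$ for all $i$, this is immediate termwise from $\alpha_i\ge|z_i|$. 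Otherwise call an index $i$ \emph{bad} when $\sigma_i<|u_i|$; the four inequalities just listed forbid two bad indices from forming any of the pairs $\{1,3\},\{2,4\},\{1,2\},\{3,4\}$, so the bad set is an independent set of the $4$-cycle on $\{1,2,3,4\}$ with edges $\{1,2\},\{2,4\},\{4,3\},\{3,1\}$, and is therefore contained in $\{1,4\}$ or in $\{2,3\}$.

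I would then treat the case of bad set contained in $\{1,4\}$, the case $\{2,3\}$ being symmetric. Put $\delta_1=|u_1|-\sigma_1$ and $\delta_4=|u_4|-\sigma_4$; the inequalities $W_1[1,2],W_1[1,3],W_1[2,4],W_1[3,4]$ force both good indices to carry excess $\sigma_2-|u_2|,\sigma_3-|u_3|\ge\max\{\delta_1,\delta_4\}$. Estimating $\sigma_1\alpha_1-|u_1||z_1|\ge-\delta_1|z_1|$ and $\sigma_4\alpha_4-|u_4||z_4|\ge-\delta_4|z_4|$, while $\sigma_k\alpha_k-|u_k||z_k|\ge\max\{\delta_1,\delta_4\}\,\alpha_k$ for $k=2,3$, and then inserting the cross inequality $\alpha_2+\alpha_3\ge|z_1|+|z_4|$ from $S_2[1,4]$, the total collapses to $(\max\{\delta_1,\delta_4\}-\delta_1)|z_1|+(\max\{\delta_1,\delta_4\}-\delta_4)|z_4|\ge 0$. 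The main obstacle is exactly this non-positive-$W$ computation: one must first recognize that the only uncontrolled pairs are the two ``diagonals'' $\{1,4\}$ and $\{2,3\}$ of the cycle, which confines the bad indices, and then balance the single cross inequality supplied by $S_2$ against the compensating excess that the four $W_1$ inequalities force onto the two good diagonal entries.
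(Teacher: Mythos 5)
Your proof is correct and follows essentially the same route as the paper's: the cycle (iii) $\Longrightarrow$ (i) $\Longrightarrow$ (ii) $\Longrightarrow$ (iii), the permutation reduction to the $\bbb^\circ\cap\ccc^\circ$ case, the test witnesses of types $W_2^e[1,4]$ and $W_2^e[2,3]$ in (i) $\Longrightarrow$ (ii) (the paper plugs in $\rho_i=\sqrt{b_i/a_i}$ directly where you minimize over $\rho_i>0$, which incidentally handles vanishing diagonals without the $\varepsilon$-perturbation), and the reduction of (ii) $\Longrightarrow$ (iii) via Corollary \ref{xxx-coro}, Proposition \ref{basic-dual-cri} and (\ref{ineq--xx}) to $\sum_i\sqrt{s_it_i}\sqrt{a_ib_i}\ge\sum_i|u_i||z_i|$, split on positivity of $W$. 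The one real divergence is how the non-positive case is balanced: the paper normalizes so that index $1$ attains the negative minimum of $\sqrt{s_it_i}-|u_i|$ and regroups around $M=\sqrt{a_2b_2}+\sqrt{a_3b_3}-|z_4|$ using $W_1[1,2]$, $W_1[1,3]$ and minimality, whereas you first observe that the bad indices form an independent set of the $4$-cycle with edges $\{1,2\},\{2,4\},\{4,3\},\{3,1\}$, hence lie in one diagonal $\{1,4\}$ or $\{2,3\}$, and then compensate both deficits simultaneously by the excess $\max\{\delta_1,\delta_4\}$ that $W_1[1,2],W_1[2,4],W_1[1,3],W_1[3,4]$ force on indices $2,3$; I checked your final estimate $(\max\{\delta_1,\delta_4\}-\delta_1)|z_1|+(\max\{\delta_1,\delta_4\}-\delta_4)|z_4|\ge 0$ and it goes through even when only one index is bad, so this is a valid, symmetrized repackaging of the paper's minimizer trick rather than a different method. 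A small point in your favor: you make explicit that pairing with $\Delta$ and $\W^\Delta$ gives the state condition $\sqrt{a_ib_i}\ge|z_i|$, which is genuinely needed to upgrade the two cross inequalities to the full min--max statement $S_2[1,4]$; the paper leaves this step implicit in the proof of this lemma (the corresponding argument is spelled out only in Lemmas \ref{new-aaa-circ} and \ref{lemma-ppt-dual---}).
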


\begin{proof}
Although the proof of the direction (i) $\Longrightarrow$ (ii) already appears in \cite{han_kye_bisep_exam},
we include it here for the completeness.
We consider \xx-shaped three qubit self-adjoint matrices
$$
\begin{aligned}
W:=\xx\left(\left(\sqrt{b_1 \over a_1},0,0,\sqrt{b_4 \over a_4}\right),\left(\sqrt{a_1 \over b_1},0,0,\sqrt{a_4 \over b_4}\right),\left(0,-e^{-{\rm i}\theta_2},-e^{-{\rm i}\theta_3},0\right)\right) \\
W':=\xx\left(\left(0,\sqrt{b_2 \over a_2},\sqrt{b_3 \over a_3},0\right),\left(0,\sqrt{a_2 \over b_2},\sqrt{a_3 \over b_3},0\right), \left(-e^{-{\rm i}\theta_1},0,0,-e^{-{\rm i}\theta_4}\right)\right),
\end{aligned}
$$
for $\theta_i = \arg z_i$.
Then, both $W$ and $W'$ belong to $\E_{\bbb^\circ\cap\ccc^\circ}$.
We have
$$
\begin{aligned}
0 &\le {1 \over 2} \lan \varrho, W \ran = \sqrt{a_1b_1}+\sqrt{a_4b_4}-|z_2|-|z_3|,\\
0 &\le {1 \over 2} \lan \varrho, W' \ran = \sqrt{a_2b_2}+\sqrt{a_3b_3}-|z_1|-|z_4|.
\end{aligned}
$$

For the implication (ii) $\Longrightarrow$ (iii), suppose that $\varrho=\xx(a,b,z)$ satisfies $S_2[1,4]$.
By Corollary \ref{xxx-coro}, it suffices to show $\lan W, \varrho \ran \ge 0$ for every
$W=\xx(s,t,u)\in\bbb^\circ\cap\ccc^\circ$.
This is trivial when $\xx(s,t,u)$ is positive, that is, $\sqrt{s_i t_i} \ge |u_i|$ for all $i=1,2,3,4$.
We may assume without loss of generality that
\begin{equation}\label{vvvv}
0>\sqrt{s_1t_1}-|u_1| = \min \{ \sqrt{s_it_i}-|u_i| : i=1,2,3,4 \}.
\end{equation}
Then we have
$$
\begin{aligned}
\left(\sqrt{s_1t_1}+\sqrt{s_2t_2}\right)\sqrt{a_2b_2} &\ge \left(|u_1|+|u_2|\right)\sqrt{a_2b_2},\\
\left(\sqrt{s_1t_1}+\sqrt{s_3t_3}\right)\sqrt{a_3b_3} &\ge \left(|u_1|+|u_3|\right)\sqrt{a_3b_3},\\
\left(|u_1|+\sqrt{s_4t_4}\right)|z_4| &\ge \left(\sqrt{s_1t_1}+|u_4|\right)|z_4|,
\end{aligned}
$$
where the first and second inequalities follow from $W_1[1,2]$ and $W_1[1,3]$, respectively, and the last one
comes out from the equality in (\ref{vvvv}). Put $M=\sqrt{a_2b_2}+\sqrt{a_3b_3}-|z_4|$. Summing up the above three inequalities, we have
$$
\sqrt{s_1t_1}M
+\sum_{i=2}^3\sqrt{s_it_i}\sqrt{a_ib_i}+\sqrt{s_4t_4}|z_4|
\ge
|u_1|M + \sum_{i=2}^3|u_i|\sqrt{a_ib_i}+|u_4||z_4|,
$$
which implies
$$
\sqrt{s_1t_1}M
+\sum_{i=2}^4\sqrt{s_it_i}\sqrt{a_ib_i}
\ge |u_1|M
+\sum_{i=2}^4|u_i||z_i|.
$$
Because $M\ge |z_1|$ by $S_2[1,4]$ and $|u_1|-\sqrt{s_1t_1}>0$ by (\ref{vvvv}), we have
$$
\begin{aligned}
\sum_{i=2}^4(\sqrt{s_it_i}\sqrt{a_ib_i}-|u_i||z_i|)
&\ge (|u_1|-\sqrt{s_1t_1})M\\
&\ge (|u_1|-\sqrt{s_1t_1})|z_1|
\ge |u_1||z_1|-\sqrt{s_1t_1}\sqrt{a_1b_1}.
\end{aligned}
$$
This gives $\sum_{i=1}^4\sqrt{s_it_i}\sqrt{a_ib_i}\ge\sum_{i=1}^4|u_i||z_i|$, and $\lan W,\varrho\ran\ge 0$ by  by (\ref{ineq--xx}).
\end{proof}

Because $\bbb+\ccc$ (respectively, $\ccc+\aaa$ and $\aaa+\bbb$) is the dual of $\bbb^\circ\cap\ccc^\circ$ (respectively, $\ccc^\circ\cap\aaa^\circ$ and $\aaa^\circ\cap\bbb^\circ$), the equivalence between (ii) and (iii) of
Lemma \ref{lemma-ntersection-witness} gives rise to the following characterization
of the cone $\bbb+\ccc$ (respectively, $\ccc+\aaa$ and $\aaa+\bbb$) for {\sf X}-states.

\begin{theorem}\label{join-state}
For a three qubit {\sf X}-state $\varrho=\xx(a,b,z)$, we have the following:
\begin{enumerate}
\item[(i)]
$\varrho\in\bbb+\ccc$ if and only if $S_2[1,4]$ holds;
\item[(ii)]
$\varrho\in\ccc+\aaa$ if and only if $S_2[1,3]$ holds;
\item[(iii)]
$\varrho\in\aaa+\bbb$ if and only if $S_2[1,2]$ holds.
\end{enumerate}
For a general three qubit state $\varrho$ with the {\sf X}-part $\xx(a,b,z)$, the \lq only if\rq\ parts hold.
\end{theorem}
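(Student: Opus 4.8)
The plan is to obtain the theorem directly from the duality set up in Section \ref{sec-duaity} together with Lemma \ref{lemma-ntersection-witness}, in which all the analytic work (the Cauchy--Schwarz estimates) has already been done; what remains is only to unwind the dual-cone identities. First I would note that $\bbb$ and $\ccc$ are closed convex cones and that their convex hull $\bbb+\ccc$ is again closed, by the Carath\'eodory argument recalled in Section \ref{sec-duaity}. Hence $\bbb+\ccc$ is a closed convex cone with $(\bbb+\ccc)^\circ=\bbb^\circ\cap\ccc^\circ$, and the bipolar theorem gives $\bbb+\ccc=(\bbb^\circ\cap\ccc^\circ)^\circ$. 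Consequently, for any self-adjoint $\varrho$ the membership $\varrho\in\bbb+\ccc$ is, by the definition of the dual cone, exactly the assertion that $\lan W,\varrho\ran\ge 0$ for every $W\in\bbb^\circ\cap\ccc^\circ$, which is precisely condition (iii) of Lemma \ref{lemma-ntersection-witness}.

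Next I would invoke the equivalence (ii)$\Longleftrightarrow$(iii) of that lemma. Condition (ii) bundles two requirements: that $\varrho$ be a \emph{positive} state and that the inequality $S_2[1,4]$ hold. The hypothesis that $\varrho$ is an \xx-state supplies its positivity, and this is exactly what is needed to pass from $S_2[1,4]$ to condition (ii) in the \lq if\rq\ direction; in the \lq only if\rq\ direction positivity is automatic, since $\bbb+\ccc\subset{\mathcal P}$. Thus (ii)$\Longleftrightarrow$(iii) collapses to the biconditional $\varrho\in\bbb+\ccc$ iff $S_2[1,4]$, which is part (i). Parts (ii) and (iii) follow verbatim from the two parenthetical versions of Lemma \ref{lemma-ntersection-witness}, or equivalently by transporting part (i) under the relabelling $x_A\ot x_B\ot x_C\mapsto x_{\sigma(A)}\ot x_{\sigma(B)}\ot x_{\sigma(C)}$ that permutes $\aaa,\bbb,\ccc$; here I would also record the listed equivalences $S_2[1,4]\Leftrightarrow S_2[2,3]$ and their analogues, so that naming a single inequality for each cone is legitimate.

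Finally, for a general three qubit state $\varrho$ whose \xx-part is $\xx(a,b,z)$, I would read off the \lq only if\rq\ parts from Proposition \ref{xxx-part}(i): if $\varrho\in\bbb+\ccc$ then $\varrho_\xx\in\bbb+\ccc$, and $\varrho_\xx=\xx(a,b,z)$ is then a positive \xx-state in $\bbb+\ccc$, so the already-established \xx-state characterization forces $S_2[1,4]$. I do not expect a genuine obstacle at the level of the theorem itself; the only points demanding care are the bookkeeping that identifies condition (iii) of the lemma with cone membership through the bipolar theorem, and the observation that the positivity clause in condition (ii) is supplied by the \lq state\rq\ hypothesis in the converse and is free in the forward direction.
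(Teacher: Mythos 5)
Your proposal is correct and takes essentially the same route as the paper: the paper likewise obtains the theorem by identifying membership in $\bbb+\ccc=(\bbb^\circ\cap\ccc^\circ)^\circ$ with condition (iii) of Lemma \ref{lemma-ntersection-witness} and then applying the equivalence (ii) $\Longleftrightarrow$ (iii), with the permuted cases handled by relabelling the parties and the general-state \lq only if\rq\ claim read off from Proposition \ref{xxx-part}. Your explicit bookkeeping on the closedness of the hull and on the positivity clause in (ii) is exactly the reasoning the paper leaves implicit, so there is nothing to correct.
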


\begin{theorem}\label{ext-inter-wit}
We have the following:
\begin{enumerate}
\item[(i)]
$\ext(\bbb^\circ\cap\ccc^\circ\cap\xx)=\E_{\bbb^\circ\cap\ccc^\circ}\cup\W^\Delta\cup\Delta$,\\
$\ext(\ccc^\circ\cap\aaa^\circ\cap\xx)=\E_{\ccc^\circ\cap\aaa^\circ}\cup\W^\Delta\cup\Delta$,\\
$\ext(\aaa^\circ\cap\bbb^\circ\cap\xx)=\E_{\aaa^\circ\cap\bbb^\circ}\cup\W^\Delta\cup\Delta$;
\item[(ii)]
$\ext((\bbb+\ccc)\cap\xx)=\ext(\bbb\cap\xx)\cup\ext(\ccc\cap\xx)$,\\
$\ext((\ccc+\aaa)\cap\xx)=\ext(\ccc\cap\xx)\cup\ext(\aaa\cap\xx)$,\\
$\ext((\aaa+\bbb)\cap\xx)=\ext(\aaa\cap\xx)\cup\ext(\bbb\cap\xx)$.
\end{enumerate}
\end{theorem}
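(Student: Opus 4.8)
The plan is to follow the pattern established for the earlier extreme-ray theorems: the ``$\subseteq$'' inclusions will come from the generating-set lemmas together with Corollary \ref{xxx24}, while the ``$\supseteq$'' inclusions demand showing that each proposed generator is genuinely extreme. In both parts it suffices to prove the first identity, the other two following by the permutation operator $x_A\ot x_B\ot x_C\mapsto x_{\sigma(A)}\ot x_{\sigma(B)}\ot x_{\sigma(C)}$ for permutations $\sigma$ on $\{A,B,C\}$.

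For part (i), I would first invoke Lemma \ref{lemma-ntersection-witness} together with Proposition \ref{tool} to conclude that $S:=\E_{\bbb^\circ\cap\ccc^\circ}\cup\W^\Delta\cup\Delta$ is a closed generating set for $\bbb^\circ\cap\ccc^\circ\cap\xx$, so that $\ext(\bbb^\circ\cap\ccc^\circ\cap\xx)\subseteq S$; the task then reduces to proving every ray in $S$ extreme. For the diagonal states in $\Delta$ and the matrices in $\W^\Delta$, I would note they already generate extreme rays of $\bbb^\circ\cap\xx$ by Theorem \ref{basic-dual-ext}, so, lying in the smaller cone $\bbb^\circ\cap\ccc^\circ\cap\xx$, they stay extreme there. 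The substantive step is to show that each $W=\xx(s,t,u)$ satisfying $W^e_2[1,4]$, for which $\sqrt{s_1t_1}=\sqrt{s_4t_4}=|u_2|=|u_3|=1$ and all other entries vanish, is extreme (the case $W^e_2[2,3]$ being symmetric).

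Given a decomposition $W=\xx(s',t',u')+\xx(s'',t'',u'')$ in $\bbb^\circ\cap\ccc^\circ\cap\xx$, nonnegativity of the diagonals first forces $s_k'=t_k'=0=s_k''=t_k''$ for $k=2,3$. I would then apply Lemma \ref{CS-W} four times, using that the decomposition lies in $\bbb^\circ\cap\xx$ for the index pairs $\{1,3\},\{2,4\}$ and in $\ccc^\circ\cap\xx$ for the pairs $\{1,2\},\{3,4\}$; the matchings $\sqrt{s_1t_1}=|u_3|$, $\sqrt{s_4t_4}=|u_2|$ and $\sqrt{s_1t_1}=|u_2|$, $\sqrt{s_4t_4}=|u_3|$ supply the hypotheses. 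These four invocations produce $u_1'=u_4'=0=u_1''=u_4''$ together with the proportionalities $(s_1',t_1',u_2',u_3')\parallelsum(s_1'',t_1'',u_2'',u_3'')$ and $(s_4',t_4',u_2',u_3')\parallelsum(s_4'',t_4'',u_2'',u_3'')$. Since $|u_2|=1$ gives $u_2\neq 0$, the two proportionality constants must coincide, yielding $(s',t',u')\parallelsum(s'',t'',u'')$. I expect the main obstacle to be precisely this bookkeeping: choosing the correct index roles for each of the four calls to Lemma \ref{CS-W} and then gluing the resulting proportionalities through the shared nonzero entry $u_2$.

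For part (ii), Corollary \ref{xxx24} gives $(\bbb+\ccc)\cap\xx=(\bbb\cap\xx)+(\ccc\cap\xx)$, so the inclusion $\ext((\bbb+\ccc)\cap\xx)\subseteq\ext(\bbb\cap\xx)\cup\ext(\ccc\cap\xx)$ is immediate from the general fact that $\ext(C_1+C_2)\subseteq\ext(C_1)\cup\ext(C_2)$. For the reverse inclusion I would apply the subcone principle: by Theorem \ref{bi-sep-ext}(i) every ray in $\ext(\bbb\cap\xx)\cup\ext(\ccc\cap\xx)$ is already extreme in the larger cone $(\aaa+\bbb+\ccc)\cap\xx$, and since these rays lie in the subcone $(\bbb+\ccc)\cap\xx$ they remain extreme there. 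This part should be routine.
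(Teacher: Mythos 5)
Your proposal is correct and takes essentially the same approach as the paper: part (i) reduces, exactly as in the paper's proof, to the four applications of Lemma \ref{CS-W} with $(i,j,k,\ell)=(1,2,2,1),(4,2,2,4),(1,3,3,1),(4,3,3,4)$, and part (ii) is the paper's subcone argument via Theorem \ref{bi-sep-ext}(i) together with $\ext(C_1+C_2)\subset\ext(C_1)\cup\ext(C_2)$ and Corollary \ref{xxx24}. Your explicit gluing of the two proportionalities through the shared nonzero entry $u_2$ merely spells out a step the paper leaves implicit.
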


\begin{proof}
(i). We will prove the first identity.
Since elements in $\Delta$ and $\W^\Delta$ are extremal in $\bbb^\circ \cap \xx$, they are also extremal in the smaller cone
$\bbb^\circ \cap \ccc^\circ \cap \xx$.
Suppose that $W=\xx(s,t,u)$ satisfies $W_2^e[1,4]$ and
$$
W = \xx(s',t',u')+\xx(s'',t'',u'') \quad \text{in} ~\bbb^\circ \cap \ccc^\circ \cap \xx.
$$
Applying Lemma \ref{CS-W} with $(i,j,k,\ell)=(1,2,2,1)$,$(4,2,2,4)$,$(1,3,3,1)$,$(4,3,3,4)$, we get
$(s',t',u') ~ \parallelsum (s'',t'',u'')$.

(ii). Since states in $\ext(\bbb\cap\xx)\cup\ext(\ccc\cap\xx)$ are extremal in
the convex cone $(\aaa+\bbb+\ccc) \cap \xx$
by Theorem \ref{bi-sep-ext}, they are also extremal in the smaller cone $(\bbb+\ccc) \cap \xx$.
\end{proof}

\section{Summary}

In this paper, we have considered the convex cones in the diagrams (\ref{diagram}) and (\ref{diagram1})
arising from classification of partial separability/entanglement of three qubit states and their witnesses. For those convex cones,
we got the following results:
\begin{itemize}
\item
characterization for {\sf X}-shaped matrices by algebraic inequalities, which give rise to
necessary criteria for general three qubit states/witnesses in terms of diagonal and anti-diagonal entries;
\item
finding all the extreme rays of the cones consisting of {\sf X}-shaped matrices, with which
we may exhibit all {\sf X}-shaped matrices in the cones.
\end{itemize}

\begin{table}
\begin{center}
\begin{tabular}{|c|c|c||c|c|c|}
\hline
 criteria & extreme &states & witnesses& extreme & criteria\\
 \hline
 \hline
 $S_1[1,4]$, &  $S^e_1[1,4]$  &       &               & $W^e_1[1,4]$  &  $W_1[1,4]$ \\
 &    & &   & $W^e_1[4,1]$  &   \\
 $S_1[2,3]$ & $S^e_1[2,3]$ & $\aaa\cap\xx$ & $\aaa^\circ\cap\xx$ & $W^e_1[2,3]$  & $W_1[2,3]$ \\
 &    &       &               & $W^e_1[3,2]$  &   \\
 Prop.\,\ref{aa-bbb-ccc} &Th.\,\ref{basic_dual} && & Th.\,\ref{basic-dual-ext} & Prop.\,\ref{basic-dual-cri} \\
 \hline
 $S_1[i,j]$    & $S_3^e$      &$\aaa\cap\bbb\cap\ccc\cap\xx$ & $(\aaa^\circ+\bbb^\circ+\ccc^\circ)\cap\xx$ & $W_1^e[i,j]$ & $W_3$ \\
 Prop.\,\ref{aa-bbb-ccc} & Th.\,\ref{ppt-ext}(i)&&& Th.\,\ref{ppt-ext}(ii) & Prop.\,\ref{coro_decomp}\\
 \hline
 $S_3$      & $S^e_1[i,j]$   &$(\aaa+\bbb+\ccc) \cap \xx$ & $\aaa^\circ\cap\bbb^\circ\cap\ccc^\circ\cap\xx$ & $W^e_3[i]$ & $W_1[i,j]$ \\
 Prop.\,\ref{i_join}&Th.\,\ref{bi-sep-ext}(i)&&& Th.\,\ref{bi-sep-ext}(ii) & Prop.\,\ref{basic-dual-cri} \\
 \hline
 $S_1[1,4]$    &$S^e_3$   &               &                             & $W_1^e[1,4]$ & $W_3$ \\
  &  &               &                             & $W_1^e[4,1]$ &  \\
 $S_1[2,3]$    &$S^e_2[1,2]$   &               &                             & $W_1^e[2,3]$ & $W_2[1,2]$ \\
 & $S^e_2[2,1]$ &               &                             & $W_1^e[3,2]$ & \\
 $S_1[1,3]$    &   & $\aaa\cap\bbb\cap\xx$ & $(\aaa^\circ+\bbb^\circ)\cap\xx$ & $W_1^e[1,3]$ &  \\
 & $S^e_2[3,4]$ & & & $W_1^e[3,1]$ & $W_2[3,4]$ \\
 $S_1[2,4]$    &$S^e_2[4,3]$   &               &                             & $W_1^e[2,4]$ & \\
 & &               &                             & $W_1^e[4,2]$ & \\
 Prop.\,\ref{aa-bbb-ccc} &Th.\,\ref{ext-inter-states}(i) &&& Th.\,\ref{ext-inter-states}(ii) & Th.\,\ref{dual-join...} \\
 \hline
               &$S^e_1[1,4]$    &                 &                             & & $W_1[1,4]$       \\
$S_2[1,2]$     &$S^e_1[2,3]$    &                 &                             & $W^e_2[1,2]$ & $W_1[2,3]$ \\
               &$S^e_1[1,3]$    & $(\aaa+\bbb)\cap\xx$ & $\aaa^\circ\cap\bbb^\circ\cap\xx$ & $W^e_2[3,4]$ & $W_1[1,3]$ \\
               &$S^e_1[2,4]$    &                 &                             & & $W_1[2,4]$ \\
Th.\,\ref{join-state}&Th.\,\ref{ext-inter-wit}(ii)&&&Th.\,\ref{ext-inter-wit}(i) & Prop.\,\ref{basic-dual-cri} \\
\hline
\end{tabular}
\caption{Criteria and extreme rays of convex cones: Conditions for
\lq criteria\rq\ and \lq extreme\rq\ are connected by \lq {\sf
and}\rq\ and \lq {\sf or}\rq, respectively.
}
\end{center}
 \end{table}

We summarize the results in Table 1. We note our characterizion is one of very few cases when we may check separability
by inequalities, without decomposing into the sum of pure product states.
For example, we may check separability for $2\otimes 2$ and $2\otimes 3$ cases
by the PPT condition. We may also check full separability of multi-qubit {\sf X}-states by
inequalities \cite{{chen_han_kye},{ha-han-kye},{han_kye_GHZ}}. Checking separability with inequalities in this paper
was possible through the duality and characterizing extreme rays of the dual cones.

This work has been partly motivated by the questions \cite{sz2012}
on the existence of states in the seven classes arising in the
classification of partial entanglement, including the following
classes:
$$
\begin{aligned}
\C^{2,6,1} &:= \aaa \cap (\bbb + \ccc)\cap \bbb^{\rm c} \cap \ccc^{\rm c},\\
\C^{2,4} &:= (\aaa + \bbb) \cap (\bbb + \ccc) \cap (\ccc + \aaa)\cap \aaa^{\rm c} \cap \bbb^{\rm c} \cap \ccc^{\rm c},\\
\C^{2,3,1} &:= (\aaa + \bbb) \cap (\ccc + \aaa) \cap \aaa^{\rm c} \cap (\bbb + \ccc)^{\rm c},
\end{aligned}
$$
together with the convex cones obtained by permuting systems.
Here, $\C^{2,6,1}$, $\C^{2,4}$ and $\C^{2,3,1}$ are notations in \cite{sz2012}.
The authors \cite{han_kye_bisep_exam} gave examples of {\sf X}-shaped states belonging to those classes.
In this paper, we gave complete necessary and sufficient conditions for {\sf X}-states to be members of the classes.
For example, an {\sf X}-state $\varrho=\xx(a,b,z)$ belongs to the class $\C^{2,6,1}$ if and only if the following
hold:
\begin{itemize}
\item
$\varrho$ satisfies the inequalities $S_1[1,4]$, $S_1[2,3]$ and $S_2[1,4]$;
\item
$\varrho$ violates $S_1[1,3]$ or $S_1[2,4]$;
\item
$\varrho$ violates $S_1[1,2]$ or $S_1[3,4]$.
\end{itemize}
The example $\varrho=\xx((0,1,1,2),(0,1,1,2),(0,1,1,0))$ given in \cite{han_kye_bisep_exam} satisfies
$S_1[1,4]$, $S_1[2,3]$ and $S_2[1,4]$, but violates $S_1[1,3]$ and $S_1[1,2]$.

It is natural to ask what happens in the four qubit system, or arbitrary qubit systems.
We began with the result \cite{han_kye_optimal} that
an {\sf X}-shaped multi-qubit state is separable with respect to a {\sl bi-partition}
of systems if and only if it is of positive partial transpose
with respect to the same {\sl bi-partition}. This was crucial to give characterizations
in terms of diagonal entries and the {\sl modulus} of anti-diagonal entries. But this is not the case
for tri-partitions.  In the three qubit system,
considering tri-partition is amount to full separability. We need the phase parts, that is, the angular parts of anti-diagonal entries, as well as the modulus parts to characterize full separability of three qubit \xx-states.
See \cite{{chen_han_kye},{ha-han-kye},{han_kye_phase}}.
We note that all kinds of partial separability come out from bi-partitions in the three qubit case.
But, it is necessary to consider tri-partitions as well as bi-partitions in the four qubit case.
See \cite{sz2015,sz2018}. Therefore, exploring partial separability/entanglement
in general qubit system must be a very challenging project even for {\sf X}-shaped states.



\begin{thebibliography}{99}
\bibitem{bdmsst}
C. H. Bennett, D. P. DiVincenzo, T. Mor, P. W. Shor, J. A. Smolin
and B. M. Terhal, \it Unextendible product bases and bound
entanglement, \rm Phys. Rev. Lett. \bf 82 \rm (1999), 5385--5388.

\bibitem{acin}
A. Acin, D. Bru\ss, M. Lewenstein and A. Sanpera,
\it Classification of mixed three-qubit states,
\rm Phys. Rev. Lett. {\bf 87} (2001), 040401.

\bibitem{coffman}
V. Coffman, J. Kundu and W. K. Wootters,
\it Distributed entanglement,
\rm Phys. Rev. A {\bf 61} (2000), 052306.

\bibitem{dur-cirac}
W. D\" ur and J. I. Cirac,
\it Classification of multi-qubit mixed states: separability and distillability properties,
\rm Phys. Rev. A {\bf 61} (2000), 042314.

\bibitem{dur-cirac-tarrach}
W. D\" ur, J. I. Cirac and R. Tarrach,
\it Separability and Distillability of Multiparticle Quantum Systems,
\rm Phys. Rev. Lett. {\bf 83} (1999), 3562--3565.

\bibitem{dur-vidal}
W. D\" ur, G. Vidal and J. I. Cirac,
\it Three qubits can be entangled in two inequivalent ways,
\rm Phys. Rev. A {\bf 62} (2000), 062314.

\bibitem{seevinck-uffink}
M. Seevinck and J. Uffink,
\it Partial separability and etanglement criteria for multiqubit quantum states,
\rm Phys. Rev. A {\bf 78} (2008), 032101.

\bibitem{sz2011}
S. Szalay,
\it Separability criteria for mixed three-qubit states,
\rm Phys. Rev. A {\bf 83} (2011), 062337.

\bibitem{sz2015}
S. Szalay,
\it Multipartite entanglement measures,
\rm Phys. Rev. A {\bf 92} (2015), 042329.

\bibitem{sz2012}
S. Szalay and Z. K\" ok\' enyesi,
\it Partial separability revisited: Necessary and sufficient criteria,
\rm Phys. Rev. A {\bf 86} (2012), 032341.

\bibitem{han_kye_bisep_exam}
K. H. Han and S.-H. Kye,
\it Construction of three-qubit biseparable states distinguishing kinds of entanglement in a partial separability classification,
\rm Phys. Rev. A, {\rm 99} (2019), 032304.


\bibitem{han_kye_GHZ}
K. H. Han and S.-H, Kye,
\it Separability of three qubit Greenberger-Horne-Zeilinger diagonal states,
\rm J. Phys. A: Math. Theor. {\bf 50} (2017), 145303.

\bibitem{han_kye_phase}
K. H. Han and S.-H, Kye,
\it The role of phases in detecting three qubit entanglement,
\rm J. Math. Phys. {\bf 58} (2017), 102201.


\bibitem{chen_han_kye}
L. Chen, K. H. Han and S.-H, Kye,
\it Separability criterion for three-qubit states with a four dimensional norm,
\rm J. Phys. A: Math. Theor. {\bf 50} (2017) 345303.


\bibitem{ha-han-kye}
K.-C. Ha, K. H. Han and S.-H. Kye,
\it Separability of multi-qubit states in terms of diagonal and anti-diagonal entries,
\rm Quantum Inf. Process. {\bf 18} (2019), 34.

\bibitem{guhne10}
O. G\" uhne and M. Seevinck,
\it Separability criteria for genuine multiparticle entanglement,
\rm New J. Phys. {\bfone 2} (2010), 053002.

\bibitem{gao}
T. Gao and Y. Hong,
\it Separability criteria for several classes of $n$-partite quantum states,
\rm Eur. Phys. J. D {\bf 61} (2011), 765--771.


\bibitem{Rafsanjani}
S. M. H. Rafsanjani, M. Huber, C. J. Broadbent and J. H. Eberly,
\it Genuinely multipartite concurrence of N-qubit X matrices,
\rm Phys. Rev. A {\bf 86} (2012), 062303.

\bibitem{han_kye_tri}
K. H. Han and S.-H, Kye,
\it Various notions of positivity for bi-linear maps and applications to tri-partite entanglement,
\rm J. Math. Phys. {\bf 57} (2016), 015205.

\bibitem{han_kye_optimal}
K. H. Han and S.-H, Kye,
\it Construction of multi-qubit optimal genuine entanglement witnesses,
\rm J. Phys. A: Math. Theor. {\bf 49} (2016), 175303.

\bibitem{terhal}
B. M. Terhal,
\it Bell Inequalities and the Separability Criterion,
\rm Phys. Lett. A \bf 271 \rm (2000), 319--326.

\bibitem{eom-kye}
M.-H. Eom and S.-H. Kye,
\it Duality for positive linear maps in matrix algebras,
\rm Math. Scand. \bf 86 \rm (2000), 130--142.

\bibitem{rock}
R. T. Rockafellar,
Convex Analysis,
Princeton Univ. Press, 1970.

\bibitem{GHZ}
D. M. Greenberger, M. A. Horne and A. Zeilinger,
\it Going beyond Bell's theorem,
\rm in  Kafatos M. (eds) Bell's Theorem, Quantum Theory and Conceptions of the Universe. Fundamental Theories of Physics, Vol. 37, Springer, Dordrecht, 1989.

\bibitem{sz2018}
S. Szalay,
\it The classification of multipartite quantum correlation,
\rm J. Phys. A: Math. Theor. {\bf 51} (2018), 485302 .
\end{thebibliography}
\end{document}